\documentclass[a4paper
               ,10pt
               ,oneside
               ,onecolumn
               ,final
               ]
               {IEEEtran}

\usepackage[cmex10]{amsmath}
\interdisplaylinepenalty=2500
\usepackage{amssymb}  
\usepackage{amsthm}
\usepackage{amsfonts}
\usepackage{graphicx}
\usepackage{bbm}
\usepackage[T1]{fontenc}
\usepackage{lmodern}
\usepackage{url}
\usepackage{authblk}
\usepackage{tocloft}
\setlength{\cftbeforesecskip}{7pt}

\setcounter{tocdepth}{4}

\setlength{\hoffset}{1cm}
\setlength{\textwidth}{15cm}

\addtocontents{toc}{\vspace{-0.3cm}}

\theoremstyle{plain}               
\newtheorem{thm}{Theorem}[section]
\newtheorem{lem}{Lemma}[section]
\newtheorem{cor}{Corollary}[section]

\newtheorem{defn}{Definition}[section]
\newtheorem{exmp}{Example}[section]
\theoremstyle{remark}

\def\beq{\begin{equation}}
\def\eeq{\end{equation}}
\def\bq{\begin{quote}}
\def\eq{\end{quote}}
\def\ben{\begin{enumerate}}
\def\een{\end{enumerate}}
\def\bit{\begin{itemize}}
\def\eit{\end{itemize}}

\def\ra{\rightarrow}

\def\lb{\left(}
\def\rb{\right)}
\def\lset{\lbrace}
\def\rset{\rbrace}
\def\lk{\left\langle}
\def\rk{\right\rangle}
\def\l|{\left|}
\def\r|{\right|}
\def\lbr{\left[}
\def\rbr{\right]}
\def\i{\text{id}}
\newcommand\C{\mathbbm{C}}

\newcommand\R{\mathbbm{R}}
\newcommand\N{\mathbbm{N}}
\newcommand\M{\mathfrak{M}}
\newcommand\D{\mathfrak{D}}

\newcommand{\Tm}{\mathcal{T}}
\newcommand{\Rm}{\mathcal{R}}
\newcommand{\Sm}{\mathcal{S}}
\newcommand{\Em}{\mathcal{E}}
\newcommand{\Dm}{\mathcal{D}}
\newcommand{\Cm}{\mathcal{C}}
\newcommand{\Lm}{\mathcal{L}}

\newcommand{\Um}{\mathcal{U}}
\newcommand{\Vm}{\mathcal{V}}

\newcommand{\Ltimes}{\oplus}

\newcommand{\unitary}{\mathfrak{un}}
\newcommand{\chan}{\mathfrak{ch}}
\newcommand{\sgroup}{\mathfrak{sg}}

\newcommand{\chanSet}{\mathfrak{C}}

\author{Alexander M\"uller-Hermes, David Reeb, and Michael M. Wolf
\thanks{The authors are with the Department of Mathematics, Technical University Munich (Email: muellerh@ma.tum.de, reeb.qit@gmail.com, wolf@ma.tum.de).}%
\thanks{A.\ M\"uller-Hermes and M.\ Wolf acknowledge support from the CHIST-ERA/BMBF project CQC. M. Wolf is also supported by the Alfried Krupp von Bohlen und Halbach-Stiftung. D.\ Reeb is supported by the Marie Curie Intra-European Fellowship QUINTYL.}%
}

\date{}

\begin{document}

\title{Quantum Subdivision Capacities and Continuous-time Quantum Coding}

\maketitle

\begin{abstract}

Quantum memories can be regarded as quantum channels that transmit information through time without moving it through space. Aiming at a reliable storage of information we may thus not only encode at the beginning and decode at the end, but also intervene during the transmission -- a possibility not captured by the ordinary capacities in Quantum Shannon Theory. In this work we introduce capacities that take this possibility into account and study them in particular for the transmission of quantum information via dynamical semigroups of Lindblad form. When the evolution is subdivided and supplemented by additional continuous semigroups acting on arbitrary block sizes, we show that the capacity of the ideal channel can be obtained in all cases. If the supplementary evolution is reversible, however, this is no longer the case. Upper and lower bounds for this scenario are proven. Finally, we provide a continuous coding scheme and simple examples showing that adding a purely dissipative term to a Liouvillian can sometimes increase the quantum capacity.

\end{abstract}

\begin{IEEEkeywords}
channel coding, Markovian dynamics, quantum capacity, quantum information, quantum memories
\end{IEEEkeywords}

\section{Introduction}
Inspired by its classical analog~\cite{Shannon1948}, a goal of quantum Shannon theory is to quantify the optimal rate of quantum information transmission using many instances of a quantum channel and a suitable encoding and decoding of the information. The physical scenario that one has in mind here is the transmission of quantum information over some noisy carrier of information such as a wire connecting two different points in space. The noise introduced by the carrier is then represented by a quantum channel. 

Another possible scenario is that of a quantum memory, where quantum information is stored and which suffers from noise due to coupling to some environment. Here the information is transmitted in time and not in space. Note that in physically relevant situations noise acts continuously in time on the quantum memory. This kind of noise is often modeled by a quantum dynamical semigroup of quantum channels $\Tm_t = e^{t\Lm}$~\cite{Lindblad1976} indexed by a time $t$ and each corresponding to the accumulated noise up to $t$. If we try to determine the optimal rate of quantum information transmission (i.e. of storing quantum information for some time $t$) in this setting, the usual quantum capacity $\mathcal{Q}(T_t)$~\cite[pp. 561]{Nielsen2007} of the whole channel $\Tm_t$ is not the appropriate limit. It only allows to optimize the initial encoding and the final decoding of the quantum information, but it does not take into account, that the quantum channel representing the noise is continuous in time and that the system stays at the same location during the transmission, which allows intervention throughout the transmission. During the storage we could for instance read the information from the memory at any earlier time and then write it into the memory again with another encoding, thereby subdividing the time-evolution. This could be repeated many times until the information is finally read from the memory. Another way to protect a quantum memory could be to engineer a control affecting the memory in a continuous way to protect the stored information. 
In some cases we may even add tailored dissipation or decoherence on top of the given one with the effect of enhancing the transmission rate. 

In this paper we introduce and investigate capacities quantifying the optimal rates of information storage in a quantum memory affected by continuous-time Markovian noise. To each noise Liouvillian $\Lm$ generating such a noisy time-evolution and any time $t\in\R^+$ we assign the ``quantum subdivision capacity'' $\mathcal{Q}_\chanSet(t\Lm)$ depending on a subset $\chanSet$ of quantum channels. This capacity is the supremum of optimal communication rates using the noisy time-evolution, when it may be subdivided into arbitrarily many parts and when we are allowed to apply coding channels from $\chanSet$ in between these parts. We study these capacities for different sets $\chanSet$, which capture operational restrictions, and prove capacity theorems for some of them. In the cases of $\chanSet=\chan$, i.e., when the intermediate channels can be arbitrary quantum channels (no restriction), and for $\chanSet=\sgroup^\ast$, i.e., for intermediate coding channels generated by time-dependent Liouvillians, we show that $\mathcal{Q}_\chanSet=\log(d)$ on a $d$-dimensional system, which is the capacity of an ideal channel. To prove this for $\chanSet=\sgroup^\ast$, we use the decoupling approach to quantum capacities~\cite{Dupuis2010} in order to show that one needs only a sublinear number of ancilla systems for the intermediate coding steps. After studying these two cases we consider the quantum subdivision capacity $\mathcal{Q}_\unitary$, i.e., where the intermediate coding channels are unitaries. In contrast to the previous two cases we prove that $\mathcal{Q}_\unitary(t\Lm)$ can become arbitrarily small as $t$ grows. On the positive side, we however show that $Q_\unitary$ is always strictly positive and thereby improves upon the usual quantum capacity in the generic case where the noise channel becomes entanglement breaking after some finite time.

In the second part of the paper, we introduce and investigate the continuous quantum capacities $\mathcal{Q}^\text{cont}_\chanSet(t\Lm)$ of the noise Liouvillian $\Lm$ at a time $t$, where $\chanSet$ denotes a fixed subset of time-dependent Liouvillians. These capacities quantify the optimal rate of information transmission over a system affected by continuous-time Markovian noise, when the possibility of superimposing a continuous control from $\chanSet$ is taken into account. For a physical relevant class of Liouvillians we introduce a continuous coding scheme implementing a continuous error correcting code with time-\emph{in}dependent quasi-local coding Liouvillians. Thereby we prove that the continuous quantum capacity is in general higher than the usual quantum capacity of the channel corresponding to the accumulated noise of the time-evolution, even if $\chanSet$ contains only time-\emph{in}dependent quasi-local Liouvillians.

Quantum subdivision capacities share some similarities to capacities for quantum relay channels~\cite{Savov2011,JinJing2012}, i.e., quantum channels describing the scenario of information transmission over a sequence of relay stations each of which with the power to decode and re-encode the information. Here we consider a similar scenario, with the major difference that the noise per subdivision is decreased when including more subdivisions and where the number of subdivisions can be modified by the communicating parties. We also study how operational restriction on the intermediate coding channels affect the capacity.  

Some of the possibilities captured by the continuous quantum capacities have been studied previously to some extent. Continuous formulations of quantum error correction have been studied for instance in ~\cite{Paz1998,Ahn2002,Sarovar2005,Oreshkov2007}, and the possibility to use engineered dissipative control to protect a quantum memory has been studied in ~\cite{Pastawski2009,Pastawski2011}. Limitations for information storage in a quantum memory of fixed size have been studied in~\cite{BenOr2013}. Our paper is to some extent inspired by these previous results. A related mechanism to prevent iterations of quantum channels from becoming entanglement-breaking has been studied in~\cite{DePasquale2012}. We are interested in the optimal rates of storage in a quantum memory and we introduce capacities quantifying these optimal rates, when the possibility of continuous control is taken into account. Thereby we formulate the problem of optimal information storage in the context of quantum Shannon theory, and this allows us to use techniques such as the decoupling approach~\cite{Winter2007,Dupuis2010} for continuous-time problems.

\section{Notations and preliminaries}
\label{sec:2}

We will denote the set of complex $d\times d-$ matrices by $\M_d := \M\lb\C^d\rb$, the set of $d-$dimensional density matrices, i.e., positive $d\times d-$matrices of trace 1, by $\D_d := \D\lb \C^d\rb$. The $d\times d-$identity matrix will be denoted by $\mathbbm{1}_d$. 

In formulas involving a large number of tensor factors some confusion can occur about which map acts on which system. Therefore we will sometimes introduce labels for different subsystems, although they might denote the same space, and use them for labeling maps and states. For example, we would write $\lb\i_{A'}\otimes\Tm^{A\rightarrow B}\rb\lb\rho^{A'A}\rb\in \M\lb \C^{d_{A'}}\otimes \C^{d_{B}}\rb$, for the application of the linear map $\Tm^{A\ra B}:\M_{d_A}\ra \M_{d_B}$ to the tensor factor labeled by A of $\rho\in \D\lb\C^{d_{A'}}\otimes \C^{d_{A}}\rb$, where the identity map is denoted by $\i_{A'}:\M_{d_{A'}}\ra\M_{d_{A'}}$ or as $\i_{d_{A'}}$ depending on the context. Note that we might have $d_A = d_{A'}$ and therefore $\M_{d_A}=\M_{d_{A'}}$. We will use the same label with a modification like $A,A',\tilde{A},...$, when dealing with matrix spaces of the same dimension, where this confusion might occur. A $d-$dimensional maximally entangled state will be written as $\omega^{A'A}\in \D\lb\C^{d_{A'}}\otimes \C^{d_{A}}\rb$, where $d_{A}=d_{A'}$ as described. Partial traces acting on a given state will be written by simply omitting the labels, that have been traced out. Thus we write $\rho^{A'}$ instead of $\text{tr}_A\lb\rho^{A'A}\rb$ for a state $\rho\in \D\lb\C^{d_{A'}}\otimes \C^{d_{A}}\rb$. For the partial trace as a map, we will write $\text{tr}_A$ or $\text{tr}_{d_A}$ depending on the context.

The set of completely positive and trace-preserving maps, i.e. quantum channels, mapping $\M_{d_A}$ to $\M_{d_B}$ will be denoted as $\chan(d_A,d_B)$ or as $\chan(d)$ if $d=d_A=d_B$. We will simply write $\chan$ to denote the set of arbitrary quantum channels. Here completely positive means, that the map $\i_{C}\otimes \Tm : \M_{d_C}\otimes \M_{d_A}\ra \M_{d_C}\otimes \M_{d_B}$ is positive for all dimensions $d_C$. Quantum channels can be characterized in different ways. One possibility, which we will use frequently is given by the Stinespring dilation theorem~\cite{Stinespring1955}: A linear map $\Tm:\M_{d_A}\ra \M_{d_B}$ is a quantum channel iff it can be written as $\Tm\lb\rho\rb = \text{tr}_E\lb V^{A\ra BE}\rho \lb V^{A\ra BE}\rb^\dagger\rb$ with an 'environment'-system labeled by $E$ and an isometry $V^{A\ra BE}:\C^{d_A}\ra\C^{d_B}\otimes \C^{d_E}$. Sometimes we will be interested in the state of the environment after applying a quantum channel. This can be done using the complementary channel~\cite{Devetak20052}. For a quantum channel $\Tm:\M_{d_A}\ra \M_{d_B}$ with Stinespring dilation $\Tm\lb\rho\rb = \text{tr}_E\lb V^{A\ra BE}\rho \lb V^{A\ra BE}\rb^\dagger\rb$ we define the complementary channel by tracing out the output system B of $\Tm$ instead of the environment, i.e. $\Tm^c\lb\rho\rb = \text{tr}_B\lb V^{A\ra BE}\rho \lb V^{A\ra BE}\rb^\dagger\rb$.      
 
In the following we will consider a continuous semigroup of quantum channels denoted by $\Tm_{t}:\M_d\ra\M_d$. Mathematically this is a family of quantum channels parametrized by a non-negative parameter $t\in\R^{+}$ such that $\Tm_s\circ\Tm_t = \Tm_{s+t}$ and $\Tm_0 = \i_d$ and such that the function $t\mapsto \Tm_t$ is continuous. We will call such a semigroup a quantum dynamical semigroup. Physically quantum dynamical semigroups describe Markovian evolutions in continuous time. It is well known~\cite{Lindblad1976}, that a continuous semigroup of linear, trace-preserving maps is completely positive, i.e. is a quantum dynamical semigroup, iff it is generated by a Liouvillian $\Lm :\M_d\ra \M_d$ of the form 
\begin{align}
\Lm\lb\rho\rb =& -i\lbr H,\rho\rbr + \sum^N_{k=1}\lb A_k\rho A^\dagger_k - \frac{1}{2}A^\dagger_k A_k\rho - \frac{1}{2}\rho A^\dagger_k A_k\rb
\label{equ:Liouvillian}
\end{align} 
for some Hermitian matrix $H\in \M_d$, which can be interpreted as a Hamiltonian, and Kraus operators $A_k\in\M_d$ for $k\in\lset 1,\ldots,N\rset$. 

The set of all quantum channels that can be written as $\Tm = e^{\Lm}$ for a Liouvillian $\Lm :\M_d\ra \M_d$ of the form (\ref{equ:Liouvillian}), is denoted by $\sgroup(d)$. When not specifying the dimension we will simply write $\sgroup$ to denote the set of all quantum channels $\Tm = e^{\Lm}$ for arbitrary Liouvillians $\Lm$. An example of quantum channels $\Tm\in \sgroup(d)$ are unitary channels of the form $\Tm(\rho) = U\rho U^\dagger$, which are generated by Liouvillians of the form $\Lm(\rho) = -i\lbr H,\rho\rbr$ for some Hermitian matrix $H\in \M_d$. We will denote the set of all unitary channels mapping $\M_d$ to $\M_d$ by $\unitary\lb d\rb$. It is clear that quantum channels $\Tm\in \sgroup$ are infinitely divisible~\cite{Wolf2008} as we have $\Tm = e^{\Lm} = \prod^k_{l=1} e^{\frac{\Lm}{k}}$ for all $k\in \N$. 

We will need certain norms as distance measures. For a matrix $\rho\in\M_d$ we define the trace-norm as $\left\Vert \rho\right\Vert_1 := \sum^{d}_{i=1}s (\rho)_i$, where $\lset s (\rho)_i\rset\subset\R^+$ denote the singular values of $\rho$. For maps $\Tm :\M_d\ra \M_d$ we use the induced norm given by $\left\Vert\Tm\right\Vert_{1\ra 1} := \sup_{\left\Vert\rho\right\Vert=1} \left\Vert\Tm(\rho)\right\Vert_1$ and a regularized version $\left\Vert\Tm\right\Vert_\diamond := \sup_{n\in\N}\left\Vert\i_n\otimes \Tm\right\Vert_{1\ra 1}$, which is the dual of the completely bounded norm ~\cite[p. 26]{Paulsen2003}. 

In the following we are interested in the transmission of quantum information through a system, undergoing a Markovian time-evolution. Therefore we will need some more results from quantum information theory:

\begin{defn}[\text{Quantum capacity }$\mathcal{Q}$, see \cite{Lloyd1997,Kretschmann2004}]
\label{defn:QuantCap}
The \textbf{quantum capacity} of a quantum channel $\Tm:\M_{d_A}\ra \M_{d_B}$ is defined as 
\begin{align*}
\mathcal{Q}\lb \Tm\rb := \sup\lset R\in \R^+ :\text{ R achievable rate}\rset
\end{align*}
where a rate $R\in\R^+$ is called achievable if there exist sequences $\lb n_\nu\rb^\infty_{\nu=1},\lb m_\nu\rb^\infty_{\nu=1}$ such that $R = \limsup_{\nu\ra \infty}\frac{n_\nu}{m_\nu}$ and 
\begin{align}
\inf_{\Em,\Dm}\left\Vert \i_2^{\otimes n_\nu} - \Dm\circ \Tm^{\otimes m_\nu}\circ \Em\right\Vert_\diamond \ra 0 \hspace*{0.3cm}\text{as } \nu\ra \infty .
\label{equ:limitQuantCap}
\end{align}
The latter infimum is over all encoding and decoding quantum channels $\Em:\M^{\otimes n_\nu}_{2}\ra \M^{\otimes m_\nu}_{d_A}$ and $\Dm:\M^{\otimes m_\nu}_{d_B}\ra \M^{\otimes n_\nu}_{2}$, see Fig. \ref{fig:StandardCodingScheme}. 
\end{defn} 

\begin{figure}
     \centering
     \includegraphics[width=0.35\textwidth]{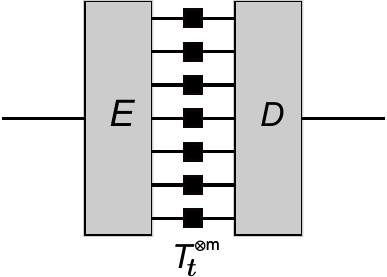}
     \caption{Coding scenario for the Quantum Capacity $\mathcal{Q}\lb\Tm\rb$ with encoding channel $\Em$ and decoding channel $\Dm$.}
	 \label{fig:StandardCodingScheme}
\end{figure}

We will need some theorems that have been proven for the quantum capacity. The first is the LSD-Theorem by Lloyd~\cite{Lloyd1997}, Shor~\cite{Shor2002} and Devetak~\cite{Devetak2005}, which expresses the quantum capacity through certain entropic quantities. In order to state this theorem we need some definitions. Note that here and in the following $\log$ will always mean logarithm with base 2. 

\begin{defn}[Coherent information, see~{\cite[p. 564]{Nielsen2007}}]
\label{defn:Entropies}
For a bipartite density matrix $\rho\in\D\lb\C^{d_A}\otimes\C^{d_B}\rb$ we define the \textbf{coherent information} of $\rho$ w.r.t. to the bipartition $A : B$ as 
\begin{align*}
I^{\text{coh}}\lb A\succ B\rb_\rho := S(\rho^B) - S(\rho^{AB}) =: -S\lb A|B\rb_\rho
\end{align*}  
where we denote by $S(\sigma) := -\text{tr}(\sigma\log(\sigma))$ the von-Neumann entropy and by $S(A|B)_\rho$ the \textbf{quantum conditional entropy of A given B} in the state $\rho$.

For a quantum channel $\Tm:\M_{d_A}\ra \M_{d_B}$ and a quantum state $\rho^{A'A}\in\D\lb\C^{d_{A'}}\otimes\C^{d_A}\rb$ we will write
\begin{align*}
I^{\text{coh}}\lb \rho,\Tm\rb := I^{\text{coh}}\lb A'\succ B\rb_{\lb\i_{A'}\otimes \Tm\rb\lb\rho_{A'A}\rb}.
\end{align*}
 
\end{defn}  

With this definition the following holds.

\begin{thm}[LSD, see ~\cite{Lloyd1997,Shor2002,Devetak2005}]
\label{thm:LSD}
For a quantum channel $\Tm:\M_{d_A}\ra \M_{d_B}$ we have 
\begin{align*}
\mathcal{Q}\lb \Tm\rb = \lim_{n\ra\infty} \frac{1}{n} \lbr \max_{\sigma^{A'A}} I^{\text{coh}}\lb \sigma ,\i_{A'}\otimes \Tm^{\otimes n}\rb\rbr. 
\end{align*}
The maximum is over states $\sigma^{A'A}\in \D\lb\C^{d_{A'}^n}\otimes \C^{d_{A'}^n}\rb$.
\end{thm}

Using this theorem and continuity of the coherent information it has been shown in ~\cite{Leung2009}, that the quantum capacity is continuous, or more specifically that the following holds.

\begin{thm}[Continuity of quantum capacity, see ~\cite{Leung2009}]
\label{thm:ContinuityQuantCap}
For quantum channels $\Tm,\Sm:\M_{d_A}\ra \M_{d_B}$ and $1\geq \epsilon >0$ with $\left\Vert \Tm - \Sm\right\Vert_\diamond\leq \epsilon$ we have
\begin{align*}
|\mathcal{Q}(\Tm) - \mathcal{Q}(\Sm)| \leq 8\epsilon d_B + 4 H(\epsilon)
\end{align*}
where $H(p) = -p\log(p) - (1-p)\log(1-p)$ denotes the binary entropy.
\end{thm}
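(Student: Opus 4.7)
The plan is to combine the LSD theorem (Theorem~\ref{thm:LSD}), which expresses $\mathcal{Q}$ as a regularized coherent information, with Fannes-type continuity of the entropies composing the coherent information.

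The first step is to promote the hypothesis to tensor products. The telescoping identity
\[
\Tm^{\otimes n}-\Sm^{\otimes n}=\sum_{k=0}^{n-1}\Tm^{\otimes k}\otimes(\Tm-\Sm)\otimes\Sm^{\otimes(n-1-k)}
\]
together with submultiplicativity of $\|\cdot\|_\diamond$ under tensoring with channels gives $\|\Tm^{\otimes n}-\Sm^{\otimes n}\|_\diamond\leq n\epsilon$, hence $\|(\i\otimes\Tm^{\otimes n})(\sigma)-(\i\otimes\Sm^{\otimes n})(\sigma)\|_1\leq n\epsilon$ for every input $\sigma^{A'A^n}$.

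Next, for any $\sigma$, I would bound $|I^{\text{coh}}(\sigma,\i\otimes\Tm^{\otimes n})-I^{\text{coh}}(\sigma,\i\otimes\Sm^{\otimes n})|$ by expanding each coherent information as $S(\tau^{B^n})-S(\tau^{A'B^n})$ and applying the Fannes/Audenaert inequality to the two resulting entropy differences. The LSD theorem then gives $\mathcal{Q}(\Tm)=\lim_{n}\frac{1}{n}\max_\sigma I^{\text{coh}}(\sigma,\i\otimes \Tm^{\otimes n})$ and analogously for $\Sm$; evaluating the max for one channel on the optimizer of the other (and swapping $\Tm\leftrightarrow\Sm$ for the opposite inequality) followed by dividing by $n$ and taking the limit yields the desired continuity estimate on $|\mathcal{Q}(\Tm)-\mathcal{Q}(\Sm)|$.

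The main obstacle is making the per-copy bound in the second step uniform in $n$: a naive Fannes estimate on $S(B^n)$ with trace distance up to $n\epsilon$ and ambient dimension $d_B^n$ produces a term scaling like $n^2\epsilon\log d_B$, which fails to disappear after division by $n$. The resolution is to work with the dual formulation $I^{\text{coh}}=S(B^n)-S(E^n)$ for pure inputs with a common Stinespring environment (invoking Kretschmann--Werner continuity of Stinespring dilations to control $\|\Tm^c-\Sm^c\|_\diamond$ in terms of $\|\Tm-\Sm\|_\diamond$) and then perform a case distinction: for $n\epsilon\leq 1$, Fannes can be applied directly and yields a per-use contribution of order $\epsilon d_B+H(\epsilon)$; for $n\epsilon>1$ one uses the trivial bound on the entropies combined with a block decomposition of the $n$ channel uses into subblocks of length $\lfloor 1/\epsilon\rfloor$, on each of which the previous estimate applies. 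Summing the block contributions and dividing by $n$ gives the claimed $8\epsilon d_B+4H(\epsilon)$ bound, independently of $n$.
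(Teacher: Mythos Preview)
The paper does not prove this theorem; it is quoted from Leung--Smith \cite{Leung2009} and used only as a black box, so there is no ``paper's own proof'' to compare against. That said, your resolution of the key difficulty does not work as written.

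Your overall structure---telescoping, entropy continuity, then LSD---is exactly the Leung--Smith strategy, and you correctly locate the obstacle: a naive Fannes bound on $S(B^n)$ with trace distance $n\epsilon$ and ambient dimension $d_B^n$ produces an $n^2\epsilon\log d_B$ term that survives division by $n$. But the block decomposition you propose does not cure this. On a block of length $m=\lfloor 1/\epsilon\rfloor$ the trace distance is $m\epsilon\approx 1$ and the ambient dimension is $d_B^m$, so Fannes gives a contribution of order $m^2\epsilon\log d_B\approx m\log d_B$, i.e.\ roughly $\log d_B$ \emph{per use}---this does not vanish with $\epsilon$, contrary to your claim of ``per-use contribution of order $\epsilon d_B+H(\epsilon)$''. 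Moreover, $S(B^n)$ for a correlated output state is not a sum of block entropies, so even a correct per-block estimate would not add up in the way you need. Passing to $S(B^n)-S(E^n)$ via Kretschmann--Werner does not help; the same dimension blowup occurs on $E^n$.

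The actual Leung--Smith argument fixes the problem differently. With $\rho_j$ the output of $\Tm^{\otimes j}\otimes\Sm^{\otimes(n-j)}$ on a pure $\phi^{RA^n}$, trace preservation in factor $j$ forces $\rho_j^{RB_{\neq j}}=\rho_{j-1}^{RB_{\neq j}}$. The chain-rule identity
\[
S(R\,|\,B^n)\;=\;S(B_j\,|\,RB_{\neq j})-S(B_j\,|\,B_{\neq j})+S(R\,|\,B_{\neq j})
\]
then makes the last term cancel between $\rho_j$ and $\rho_{j-1}$, and the Alicki--Fannes continuity of conditional entropy applied to the first two terms has only the \emph{single} output factor $B_j$ (dimension $d_B$) in the role whose logarithm enters the bound. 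Each telescoping step therefore contributes at most $8\epsilon\log d_B+4H(\epsilon)$; summing $n$ of them and dividing by $n$ gives the stated estimate (with $\log d_B$; the bare $d_B$ in the displayed bound appears to be a typo). The ingredient missing from your proposal is precisely this chain-rule isolation of a single $B_j$, which is what makes the Alicki--Fannes dimension constant independent of $n$.
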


The above theorem shows, that for every quantum dynamical semigroup $\Tm_t:\M_{d_A}\ra \M_{d_A}$ we have $\mathcal{Q}(\Tm_t)\ra \log(d_A)$ for $t\ra 0$, thus the capacity converges to its maximal possible value. For our argumentation continuity of the coherent information~\cite{Leung2009} will be sufficient. More specifically we will need that for every quantum dynamical semigroup $\Tm_t:\M_{d_A}\ra \M_{d_A}$ we have 

\begin{align}
I^{\text{coh}}\lb \omega^{A'A} ,\i_{A'}\otimes \Tm_t\rb \ra \log\lb d_A\rb\hspace*{0.3cm}\text{as }t\ra 0.
\label{equ:CoherentInfoToZero}
\end{align}  

\section{Quantum subdivision capacities}
\label{sec:3}

\subsection{Definition}

Consider a Liouvillian $\Lm:\M_d\ra\M_d$ of the form (\ref{equ:Liouvillian}). We want to define a capacity for the transmission of quantum information using a system that undergoes a noisy time-evolution generated by $\Lm$. The capacity will take into account, that we may interrupt the time-evolution at any point, i.e., that the quantum channel generated by $\Lm$ is infinitely divisible, and perform certain quantum operations before resuming the time-evolution. 

We will denote the set of quantum channels that are allowed to be applied in the intermediate steps by $\chanSet$. Formally this may be any subset $\chanSet\subseteq \chan$ and we will state our definition in the most general form for arbitrary $\chanSet$. Note that this definition yields a family of quantum capacities depending on the choice of $\chanSet$ and we will later study some relevant choices in more detail. 
 
For convenience we will include a time parameter $t\in\R^+$ in our definition of the capacity. This is because we want to emphasize the time-dependence of a quantum dynamical semigroup generated by a fixed Liouvillian $\Lm$. 

\begin{figure*}
     \centering
     \includegraphics[width=0.8\textwidth]{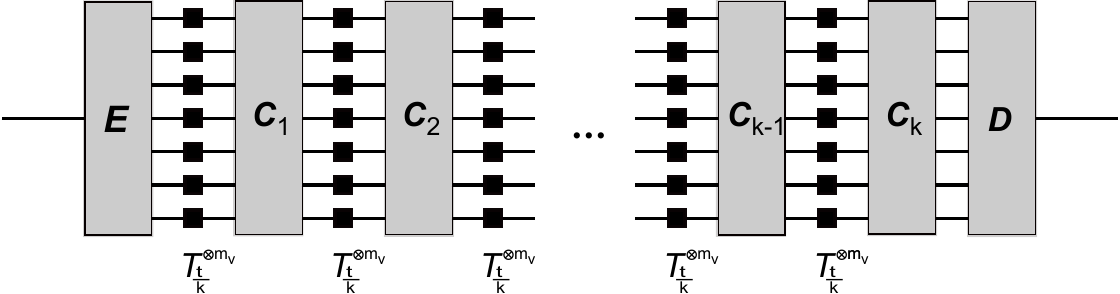}
     \caption{General coding scenario for the quantum subdivision capacity $\mathcal{Q}_\chanSet$. The intermediate coding channels $\Cm_l$ have to be from the set $\chanSet$ and $\Em,\Dm$ can be arbitrary quantum channels. Furthermore the number of subdivisions $k\in\N$ can be chosen freely.}
     \label{fig:subdivision}
\end{figure*} 

\begin{defn}[Quantum subdivision capacity $\mathcal{Q}_\chanSet$]
\label{defn:SubdivisionCap}
For $\chanSet\subseteq \chan$ the $\chanSet-$\textbf{quantum subdivision capacity} of a  Liouvillian $\Lm:\M_{d}\ra \M_{d}$ of the form (\ref{equ:Liouvillian}) at a time $t\in\R^+$ is defined as 
\begin{align*}
\mathcal{Q}_\chanSet\lb t\Lm\rb := \sup\lset R\in \R^+ :\text{ R achievable rate}\rset
\end{align*}
where a rate $R\in\R^+$ is called achievable if there exist sequences $\lb n_\nu\rb^\infty_{\nu=1},\lb m_\nu\rb^\infty_{\nu=1}$ such that $R = \limsup_{\nu\ra \infty}\frac{n_\nu}{m_\nu}$ and we have 
\begin{align}
\inf_{k,\Em,\Dm,\Cm_1,\ldots,\Cm_k} \left\Vert \i^{\otimes n_\nu}_2 - \Dm\circ \prod^k_{l=1} \lb\Cm_l\circ \Tm^{\otimes m_\nu}_{\frac{t}{k}}\rb\circ \Em\right\Vert_{\diamond} \ra 0
\label{equ:CodingSubdivisionCap}
\end{align}
as $\nu\ra \infty$. The latter infimum is over the number of subdivisions $k\in\N$ for which the channels $\Tm_{\frac{t}{k}} := e^{\frac{t}{k}\Lm}$ are defined, arbitrary encoding and decoding quantum channels $\Em:\M^{\otimes n_\nu}_{2}\ra \M^{\otimes m_\nu}_{d}$ and $\Dm:\M^{\otimes m_\nu}_{d}\ra \M^{\otimes n_\nu}_{2}$ and appropriate coding channels $\Cm_l\in \chanSet$ from the chosen subset, see Fig. \ref{fig:subdivision}.
\end{defn}

Different choices of $\chanSet$ in Definition \ref{defn:ContQuantCap} will in general lead to different quantities $\mathcal{Q}_\chanSet$. Note that by choosing $\chanSet=\lset \i_n : n\in\N\rset$ we have $\mathcal{Q}_\chanSet\lb t\Lm\rb = \mathcal{Q}\lb e^{t\Lm}\rb$, as we can only choose the identity map in the intermediate steps, and we recover the quantum capacity from Definition \ref{defn:QuantCap}. In the following we will consider three sets $\chanSet\subseteq \chan$ of physical relevance and the corresponding quantum subdivision capacities $\mathcal{Q}_\chanSet$. 

At first we will look at the case where arbitrary quantum channels are allowed in the intermediate coding steps. This corresponds to the choice $\chanSet = \chan$ for which we will obtain $\mathcal{Q}_\chan\lb t\Lm\rb = \log(d)$ for any Liouvillian $\Lm:\M_d\ra \M_d$ of the form (\ref{equ:Liouvillian}) and any $t\in\R^+$, see Theorem \ref{thm:CPTPSubCap}. Note that this is the maximal possible value on a $d$-dimensional system.  

In the second case, we will only allow quantum channels in $\chanSet$ which are composed of quantum dynamical semigroups to be applied in the intermediate steps, i.e., channels of the form $\prod^N_{i=1} e^{\Lm_i}$ for arbitrary $N\in\N$ and Liouvillians $\Lm_i$ of the form (\ref{equ:Liouvillian}). We will denote this set by $\sgroup^\ast$. The channels in $\sgroup^\ast$ can be thought of as generated from time-dependent Liouvillians and as their determinant is always positive they form a proper subset of $\chan$ ~\cite{Wolf2008}. Note also that $\sgroup^\ast\neq \sgroup$ ~\cite{Wolf20082}. For the quantum subdivision capacity we will again obtain the highest possible capacity $\mathcal{Q}_{\sgroup^\ast}\lb t\Lm\rb = \log(d)$ for any Liouvillian $\Lm:\M_d\ra \M_d$ of the form (\ref{equ:Liouvillian}) and any $t\in\R^+$, see Theorem \ref{thm:SemigroupSubCap}. The proof of this statement will turn out to be more involved as the set of possible coding channels is restricted.    

For the third case we constrain the set of coding operations further and only allow unitary channels to be applied in the intermediate steps, i.e., channels of the form $U\rho U^{\dagger}$ for a unitary matrix $U$. We will denote the set of unitary channels by $\unitary$. Here we will show, that there exist Liouvillians $\Lm:\M_d\ra\M_d$ such that $\mathcal{Q}_\unitary\lb t\Lm\rb\leq e^{-t}\log(d)$ which therefore becomes arbitrary small as $t\ra \infty$. We will derive some further lower bounds in this case.\newpage

Note that the capacities defined above share some similarities with quantum capacities of quantum serial relays, see for instance ~\cite{Savov2011,JinJing2012} and the references therein. The major difference here is that, while for quantum relays the noise channels acting between the relays are fixed quantum channels, we have the additional freedom to subdivide the noise channel into ever smaller time-steps while keeping the total time fixed. We also restrict the allowed intermediate coding operations to be from the chosen subset $\chanSet\subseteq \chan$ and we show that the value of $\mathcal{Q}_\chanSet\lb t\Lm\rb$ will depend on this choice.

\subsection{Quantum subdivision capacity with general add-ons}

We will start with the widest case $\chanSet = \chan$ of arbitrary coding quantum channels in the intermediate steps, see Definition \ref{defn:SubdivisionCap}.

\begin{thm}[$\chan$-quantum subdivision capacity]
\label{thm:CPTPSubCap}
For any Liouvillian $\Lm:\M_d\ra \M_d$ of the form (\ref{equ:Liouvillian}) and any $t\in\R^+$ we have
\begin{align*}
\mathcal{Q}_{\chan}\lb t\Lm\rb = \log(d)\,.
\end{align*} 
\end{thm}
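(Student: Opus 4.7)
The upper bound $\mathcal{Q}_\chan(t,\Lm) \leq \log(d)$ is immediate from the data-processing inequality: in any scheme of the form (\ref{equ:CodingSubdivisionCap}) the quantum information must pass through the $d^{m_\nu}$-dimensional system that is handed from one operation to the next, so $n_\nu \leq m_\nu \log(d)$ and every achievable rate is bounded by $\log(d)$. The substance of the theorem is therefore the matching lower bound.

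My plan for the lower bound is to exploit that $\chanSet = \chan$ imposes no restriction whatsoever on the intermediate operations, so one is free to \emph{decode and immediately re-encode} between any two subdivisions. Fix any $R < \log(d)$. Since the semigroup $s \mapsto \Tm_s$ is continuous, $\|\Tm_{\frac{t}{k}} - \i_d\|_\diamond \to 0$ as $k \to \infty$, and Theorem \ref{thm:ContinuityQuantCap} then forces $\mathcal{Q}(\Tm_{\frac{t}{k}}) \to \log(d)$. Fix once and for all a $k \in \N$ with $\mathcal{Q}(\Tm_{\frac{t}{k}}) > R$. By Definition \ref{defn:QuantCap} there then exist sequences $(n_\nu),(m_\nu)$ with $R = \limsup_\nu n_\nu/m_\nu$ together with encoders and decoders $\Em_\nu : \M_2^{\otimes n_\nu} \to \M_d^{\otimes m_\nu}$, $\Dm_\nu : \M_d^{\otimes m_\nu} \to \M_2^{\otimes n_\nu}$ such that
\[
\epsilon_\nu := \left\|\Dm_\nu \circ \Tm_{\frac{t}{k}}^{\otimes m_\nu} \circ \Em_\nu - \i_2^{\otimes n_\nu}\right\|_\diamond \longrightarrow 0 \quad \text{as } \nu \to \infty.
\]
I now plug this one-block code into the $k$-fold subdivision scheme by taking external maps $\Em := \Em_\nu$, $\Dm := \Dm_\nu$, and intermediate coding channels $\Cm_l := \Em_\nu \circ \Dm_\nu \in \chan$ for $l = 1, \ldots, k-1$, and $\Cm_k := \i$. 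The full composition in (\ref{equ:CodingSubdivisionCap}) then telescopes to $B_\nu^k$, where $B_\nu := \Dm_\nu \circ \Tm_{\frac{t}{k}}^{\otimes m_\nu} \circ \Em_\nu$ is the one-block code.

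To finish, a standard telescoping estimate together with the fact that quantum channels have diamond norm equal to $1$ gives
\[
\left\|B_\nu^k - \i_2^{\otimes n_\nu}\right\|_\diamond \;\leq\; k \left\|B_\nu - \i_2^{\otimes n_\nu}\right\|_\diamond \;=\; k\,\epsilon_\nu,
\]
which vanishes as $\nu \to \infty$ because $k$ is fixed. Hence $R$ is achievable for $\mathcal{Q}_\chan(t,\Lm)$, and letting $R \nearrow \log(d)$ yields $\mathcal{Q}_\chan(t,\Lm) \geq \log(d)$. There is really no substantial obstacle in this approach: the only point requiring care is that $k$ must be chosen once, independently of $\nu$, so that the accumulated error $k\epsilon_\nu$ still tends to zero. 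This is exactly what the continuity of $\mathcal{Q}$ supplied by Theorem \ref{thm:ContinuityQuantCap} buys us, and it is the sole reason the unrestricted case is easy; the restricted cases treated later in the paper are harder precisely because one cannot so cheaply implement the decode--re-encode interlude.
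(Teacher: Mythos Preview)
Your proposal is correct and follows essentially the same approach as the paper: fix $k$ once via the continuity Theorem~\ref{thm:ContinuityQuantCap} so that $\mathcal{Q}(\Tm_{t/k})>R$, take the one-block code $(\Em_\nu,\Dm_\nu)$ for $\Tm_{t/k}$, insert $\Cm_l=\Em_\nu\circ\Dm_\nu$ (with $\Cm_k=\i$) as the intermediate channels, and telescope the error to $k\epsilon_\nu\to 0$. Your explicit identification of the full scheme as $B_\nu^k$ and your remark on the upper bound are the only cosmetic additions; the paper's proof is otherwise identical in structure and in the use of Theorem~\ref{thm:ContinuityQuantCap}.
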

\begin{proof}
By continuity, Theorem \ref{thm:ContinuityQuantCap}, we have $\mathcal{Q}\lb \Tm_{\frac{t}{k}}\rb\ra \mathcal{Q}\lb \i_d \rb = \text{log}\lb d\rb$ for quantum channels $\Tm_{\frac{t}{k}} := e^{\frac{t}{k}\Lm}$ as $k\ra \infty$. 
Thus for any $\delta>0$ we can choose a $K\in\N$ such that $\mathcal{Q}\lb \Tm_{\frac{t}{K}}\rb > \text{log}\lb d\rb-\delta$. By Definition \ref{defn:QuantCap} there are sequences $\lb n_\nu\rb^\infty_{\nu=1}, \lb m_\nu\rb^\infty_{\nu=1}$ such that $\text{limsup}_{\nu\ra \infty}\frac{n_\nu}{m_ \nu} \geq \text{log}\lb d\rb-\delta$ and 
\begin{align}
 \left\Vert \i^{\otimes n_\nu}_2 - \tilde{\Dm}_\nu\circ \Tm^{\otimes m_\nu}_{\frac{t}{K}}\circ \tilde{\Em}_\nu\right\Vert_{\diamond} \ra 0 \hspace*{0.3cm}\text{ as }\nu\ra\infty
\label{Equ:Limit}
\end{align}
holds for encoding and decoding quantum channels $\tilde{\Em}_\nu:\M^{\otimes n_\nu}_{2}\ra \M^{\otimes m_\nu}_{d_A}$ and $\tilde{\Dm}_\nu:\M^{\otimes m_\nu}_{d_B}\ra \M^{\otimes n_\nu}_{2}$. Now choose the coding maps in Definition \ref{defn:SubdivisionCap} as $\tilde{\Cm}^\nu_k=\i_{d^{m_\nu}}$ and $\tilde{\Cm}^\nu_l=\tilde{\Em}_\nu\circ\tilde{\Dm}_\nu$ for $l\leq k-1$, see Fig. \ref{fig:CPTPCodingScheme}. 

\begin{figure*}
     \centering
     \includegraphics[width=0.8\textwidth]{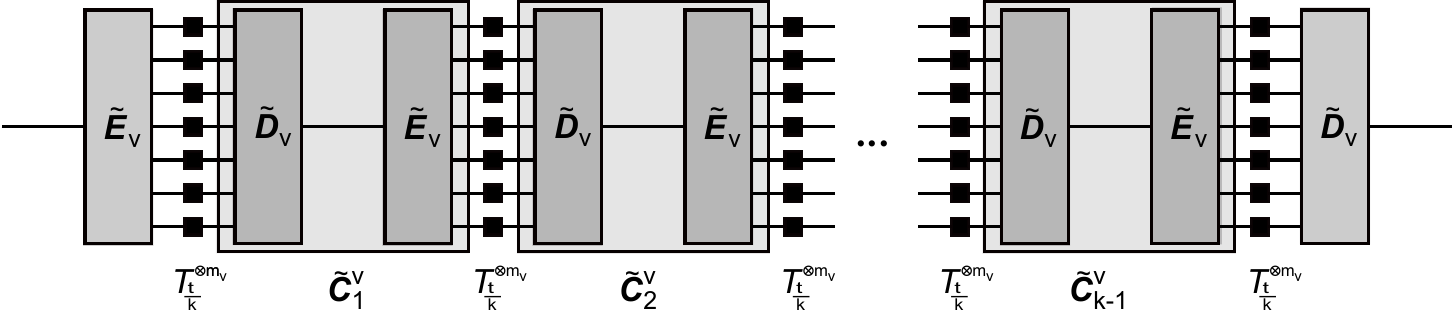}
     \caption{Subdivision Coding Scheme for the proof of Theorem \ref{thm:CPTPSubCap}. For any $\delta>0$ choose $k\in\N$ such that a rate $R > \log(d)-\delta$ is achievable for the usual capacity $\mathcal{Q}$. In each of the $k$ steps the local error satisfies $\|\i^{\otimes n_\nu}_{2} -\tilde{\Dm_\nu}\circ\Tm^{\otimes m_\nu}_{t/k}\circ\tilde{\Em_\nu}\|_\diamond\ra 0$ for encoding and decoding channels $\tilde{\Em_\nu},\tilde{\Dm_\nu}$ achieving the rate $R$. As $k$ is independent of $\nu$ the total error vanishes in the limit $\nu\ra\infty$ showing that $R$ is achievable for the quantum subdivision capacity.}
     \label{fig:CPTPCodingScheme}
\end{figure*}

Inserting these special coding maps into (\ref{equ:CodingSubdivisionCap}) leads to 
\begin{align*}
\inf_{k,\Em,\Dm,\Cm^\nu_1,\ldots,\Cm^\nu_k} &\left\Vert \i^{\otimes n_\nu}_2 - \Dm\circ\prod^k_{l=1} \lb \Cm^\nu_l\circ\Tm^{\otimes m_\nu}_{\frac{t}{k}}\rb\circ \Em\right\Vert_{\diamond} \leq  \left\Vert \i^{\otimes n_\nu}_2 - \prod^K_{l=1}\lb \tilde{\Dm}_\nu\circ \Tm^{\otimes m_\nu}_{\frac{t}{K}}\circ \tilde{\Em}_\nu\rb\right\Vert_{\diamond} \\
& = \bigg{\Vert} \i^{\otimes n_\nu}_2 - \tilde{\Dm}_\nu\circ \Tm^{\otimes m_\nu}_{\frac{t}{K}}\circ \tilde{\Em}_\nu + \tilde{\Dm}_\nu\circ \Tm^{\otimes m_\nu}_{\frac{t}{K}}\circ \tilde{\Em}_\nu\circ\lb \i^{\otimes n_\nu}_2 - \prod^{K-1}_{l=1}\lb \tilde{\Dm}_\nu\circ \Tm^{\otimes m_\nu}_{\frac{t}{K}}\circ \tilde{\Em}_\nu\rb\rb\bigg{\Vert}_{\diamond} \\
&\leq \left\Vert \i^{\otimes n_\nu}_2 - \tilde{\Dm}_\nu\circ \Tm^{\otimes m_\nu}_{\frac{t}{K}}\circ \tilde{\Em}_\nu\right\Vert_{\diamond} + \left\Vert \i^{\otimes n_\nu}_2 - \prod^{K-1}_{l=1}\lb \tilde{\Dm}_\nu\circ \Tm^{\otimes m_\nu}_{\frac{t}{K}}\circ \tilde{\Em}_\nu\rb\right\Vert_{\diamond} \\
&\vdots \\
&\leq K \left\Vert \i^{\otimes n_\nu}_2 - \tilde{\Dm}_\nu\circ \Tm^{\otimes m_\nu}_{\frac{t}{K}}\circ \tilde{\Em}_\nu\right\Vert_{\diamond} \rightarrow 0 \hspace*{0.3cm}\text{as }\nu\ra\infty.
\end{align*}
For the above chain of inequalities we used the triangle inequality, submultiplicativity of $\left\Vert\cdot\right\Vert_\diamond$, the fact that $\left\Vert \Tm\right\Vert_\diamond = 1$ for any quantum channel $\Tm$ and finally the limit (\ref{Equ:Limit}) from the usual quantum capacity. This shows that any rate $R<\log(d)$ is achievable, which implies that $\mathcal{Q}_{\chan}\lb t\Lm\rb = \log(d)$ for all $\Lm$ and $t\in\R^+$.

\end{proof}

\section{Quantum subdivision capacity with continuous semigroup add-ons}
\label{sec:4}

In this section we will consider the subdivision capacity from Definition \ref{defn:SubdivisionCap} for the case $\chanSet = \sgroup^\ast$, which corresponds to the use of arbitrary quantum channels composed of quantum dynamical semigroups.

\begin{thm}[$\sgroup^\ast$-quantum subdivision capacity]
\label{thm:SemigroupSubCap} 
For any Liouvillian $\Lm:\M_{d}\ra \M_{d}$ of the form (\ref{equ:Liouvillian}) and any $t\in\R^+$ we have
\begin{align*}
\mathcal{Q}_{\sgroup^\ast}\lb t\Lm\rb = \log(d)\,.
\end{align*}

\end{thm}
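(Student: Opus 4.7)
The plan is to follow the strategy of the proof of Theorem \ref{thm:CPTPSubCap}: fix $\delta > 0$, use continuity of the quantum capacity to pick $K$ so large that $\mathcal{Q}(\Tm_{t/K}) > \log(d) - \delta$, take encoder/decoder pairs $(\tilde{\Em}_\nu, \tilde{\Dm}_\nu)$ achieving this rate for the single block $\Tm_{t/K}^{\otimes m_\nu}$, and insert between consecutive applications of $\Tm_{t/K}^{\otimes m_\nu}$ an intermediate coding channel $\Cm_l$ that approximates the ``decode-then-reencode'' map $\tilde{\Em}_\nu \circ \tilde{\Dm}_\nu$. The obstruction relative to Theorem \ref{thm:CPTPSubCap} is that $\sgroup^\ast$ is a strict subset of $\chan$: its elements all have strictly positive determinant, whereas $\tilde{\Em}_\nu \circ \tilde{\Dm}_\nu$ is generically rank-deficient. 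Consequently the direct choice $\Cm_l = \tilde{\Em}_\nu \circ \tilde{\Dm}_\nu$ is unavailable, and we must approximate the naive reencoder by a finite composition of semigroup channels acting on $\M_d^{\otimes m_\nu}$.

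The key technical input I would invoke is the decoupling approach of \cite{Dupuis2010}. For a noise channel $\Tm_{t/K}$ close to the identity, this should allow us to take $\tilde{\Em}_\nu$ as an isometric embedding of $\M_2^{\otimes n_\nu}$ into a random subspace of $\M_d^{\otimes m_\nu}$, and to construct $\tilde{\Dm}_\nu$ as the Stinespring isometry that reverses this embedding up to a partial trace over an ancilla of sublinear size $a_\nu = o(m_\nu)$ (intuitively carrying only the decoupled ``environment'' of the encoded state). With such a pair the composition $\tilde{\Em}_\nu \circ \tilde{\Dm}_\nu : \M_d^{\otimes m_\nu} \to \M_d^{\otimes m_\nu}$ admits a Stinespring representation consisting of a single unitary on $\M_d^{\otimes m_\nu}$ followed by a partial trace over only $\lceil a_\nu / \log d \rceil$ of the $m_\nu$ tensor factors. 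I would then realize $\Cm_l$ as the composition of (i) a Hamiltonian-generated unitary channel in $\sgroup$ implementing this Stinespring isometry (with the designated ancilla qudits initialized by a preceding depolarization) and (ii) a strong depolarizing semigroup $e^{s\Lm_{\mathrm{dep}}}$ on those ancilla qudits, which in the limit $s \to \infty$ reproduces the partial-trace-plus-reinitialize operation to arbitrary precision. The result lies in $\sgroup^\ast$ by construction and is $\epsilon_\nu$-close to $\tilde{\Em}_\nu \circ \tilde{\Dm}_\nu$ in diamond norm, with $\epsilon_\nu$ made as small as desired by taking $s$ large.

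Plugging these $\Cm_l$ into \eqref{equ:CodingSubdivisionCap} and running the telescoping triangle-inequality argument used in the proof of Theorem \ref{thm:CPTPSubCap}, together with submultiplicativity of $\|\cdot\|_\diamond$ and $\|\Tm\|_\diamond = 1$ for quantum channels, the total error over $K$ subdivisions is bounded by $K \, \| \i_2^{\otimes n_\nu} - \tilde{\Dm}_\nu \circ \Tm_{t/K}^{\otimes m_\nu} \circ \tilde{\Em}_\nu \|_\diamond + K \epsilon_\nu$, which tends to zero as $\nu \to \infty$ once $\epsilon_\nu$ is chosen to decay. Since $\delta > 0$ was arbitrary this proves $\mathcal{Q}_{\sgroup^\ast}(t, \Lm) \geq \log(d)$, and the reverse inequality is trivial. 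The main obstacle I foresee is the second step: extracting from the decoupling theorem the quantitative statement that a rate-$(\log(d) - \delta)$ code admits a decoder whose Stinespring environment fits inside $\M_d^{\otimes m_\nu}$ with sublinearly many qudits, in such a way that the rate still approaches $\log(d)$ as $K \to \infty$. The decoupling inequality by itself bounds only the decoding error, so translating it into a usable bound on the ancilla size — in the correct order of limits $\nu \to \infty$ and $K \to \infty$ — is where the genuinely new technical work beyond Theorem \ref{thm:CPTPSubCap} is required.
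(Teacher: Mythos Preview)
Your proposal is correct and follows essentially the same approach as the paper. The obstacle you flag at the end is resolved there via two ingredients: since $I^{\text{coh}}(\omega,\Tm_{t/K})>\log(d)-\delta$ forces $S(\sigma^E)<\delta$ for the environment marginal of the purified Choi state, a typical-subspace projection (Lemma~\ref{lem:ChoiMatrSmallEnv}) replaces $(\sigma^E)^{\otimes\nu}$ by a rank-$2^{O(\nu\delta)}$ approximant without spoiling decoupling, after which Uhlmann's theorem (Lemma~\ref{lem:GeneralDecoding}) produces a decoding isometry with environment of that reduced dimension; note that the resulting ancilla register is $\Theta(\delta\nu)$ qudits --- linear in $\nu$, not $o(m_\nu)$ for fixed $\delta$ --- and the paper carries it as an explicit enlargement of the total system to $\nu(1+\delta m)$ qudits, so the achieved rate is $\tfrac{\log(d)-\delta}{1+\delta m}\to\log(d)$ as $\delta\to 0$.
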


Note that the quantum channels depolarizing towards pure states $\Dm_t:\M_{d}\ra \M_{d}$ given as
\begin{align*}
\Dm_t\lb\rho\rb = (1-e^{-rt})\text{tr}\lb\rho\rb\l| \phi\rk\lk \phi\r| + e^{-rt} \rho 
\end{align*}
for an arbitrary pure state $\l| \phi\rk\lk \phi\r|\in \D_{d}$ and a rate $r\in\R^+$ are contained in $\sgroup^\ast$. The idea to prove Theorem \ref{thm:SemigroupSubCap} is to use these channels to obtain almost pure states. The almost pure states obtained this way are then used to implement the optimal encoding-decoding operations used in the proof of Theorem \ref{thm:CPTPSubCap}, see Fig. \ref{fig:CPTPCodingScheme}, via unitary quantum channels, which are contained in $\sgroup$. As $\sgroup^\ast$ is closed under composition, we may compose depolarizing channels and unitary channels yielding overall quantum channels still contained in $\sgroup^\ast$ and therefore valid coding channels for the intermediate coding steps.  

Note that Definition \ref{defn:SubdivisionCap} of the quantum subdivision capacity requires, that the generation of these pure state ancillas has to occur within the coding scheme (\ref{equ:CodingSubdivisionCap}). This is not the same as having pure ancillas for free. Every ancilla system leads in the present context to an increase of channel copies required to carry it through the coding scheme. It is a priori not clear how many ancilla systems will be needed to implement the coding scheme from Theorem \ref{thm:CPTPSubCap} using unitary channels and how this number scales with $\nu$ in the limit (\ref{equ:CodingSubdivisionCap}). We will prove that a sublinear number of pure ancillas is sufficient. 

The following simple lemma will be needed for the proof:  

\begin{lem}[Implementing isometries via unitaries]
\label{lem:ImplementIsometry}
For $k,n\in\N$, any isometry $V:\C^{n}\ra \C^{nk}$ can be written as 
\begin{align*}
V\l|\psi\rk = U\lb \l|\psi\rk\otimes \l|\phi\rk\rb
\end{align*}
with a unitary $U:\C^{nk}\ra\C^{nk}$ and an auxiliary state $\l|\phi\rk\in\C^{k}$
\end{lem}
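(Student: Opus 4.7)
The plan is to realize the isometry $V$ as the restriction of a unitary on $\C^{nk}$ to the subspace $\C^n \otimes |\phi\rangle$, for a suitably fixed ancilla $|\phi\rangle \in \C^k$. Since $V$ embeds an $n$-dimensional space into an $nk$-dimensional space, and since $\C^n \otimes |\phi\rangle$ is also an $n$-dimensional subspace of $\C^{nk}$, matching these two subspaces isometrically and extending to the orthogonal complements should work, essentially by dimension counting.

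More concretely, first I would fix an orthonormal basis $\{|i\rangle\}_{i=1}^{n}$ of $\C^n$ and an orthonormal basis $\{|j\rangle\}_{j=1}^{k}$ of $\C^k$, and choose the ancilla state $|\phi\rangle := |1\rangle \in \C^k$. Then the vectors $\{|i\rangle \otimes |\phi\rangle\}_{i=1}^{n}$ form an orthonormal set in $\C^{nk}$. On the other side, since $V$ is an isometry, the vectors $|v_i\rangle := V|i\rangle$ form an orthonormal set in $\C^{nk}$ as well. I would then define a linear map $U$ on $\C^{nk}$ by declaring $U(|i\rangle \otimes |\phi\rangle) := |v_i\rangle$ for $i=1,\ldots,n$.

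Next I would extend $U$ to a unitary on all of $\C^{nk}$. The orthogonal complement of $\mathrm{span}\{|i\rangle \otimes |\phi\rangle\}_{i=1}^{n}$ in $\C^{nk}$ has dimension $nk - n = n(k-1)$, and likewise the orthogonal complement of $\mathrm{span}\{|v_i\rangle\}_{i=1}^{n}$ in $\C^{nk}$ has the same dimension $n(k-1)$. Choosing any isometric isomorphism between these two complements and defining $U$ on them accordingly yields a well-defined unitary $U: \C^{nk} \to \C^{nk}$. By construction, $U(|\psi\rangle \otimes |\phi\rangle) = \sum_i \langle i | \psi \rangle\, U(|i\rangle \otimes |\phi\rangle) = \sum_i \langle i | \psi \rangle |v_i\rangle = V|\psi\rangle$, which is the desired identity.

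There is no real obstacle here, as the argument is purely linear algebra; the only thing to be careful about is that the two orthogonal complements have matching dimensions, which is precisely why the statement is formulated with $V$ going from $\C^n$ to $\C^{nk}$ (so that $\C^n$ tensored with a $k$-dimensional ancilla matches the target dimension). If one instead had an isometry into a space whose dimension is not a multiple of $n$, one would need to enlarge the ancilla space, but under the stated hypothesis no such enlargement is needed.
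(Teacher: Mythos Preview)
Your proof is correct and follows essentially the same approach as the paper: both arguments observe that $\{|i\rangle\otimes|\phi\rangle\}_{i=1}^n$ and $\{V|i\rangle\}_{i=1}^n$ are orthonormal systems of the same size in $\C^{nk}$ and hence are related by a unitary. The paper states this last step in one line (``all orthonormal systems of the same size are related by some unitary transformation''), whereas you spell out the extension to the orthogonal complements explicitly, but the content is the same.
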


\begin{proof}

We can write the isometry $V = \sum^{n}_{i=1} \l| v_i\rk\lk i\r|$ for some orthonormal basis $\lset \l| i\rk\rset_{i}\subset \C^{n}$ and an orthonormal system $\lset \l| v_i\rk\rset_{i}\subset \C^{nk}$. Now fix an auxiliary state $\l|\phi\rk\in \C^{k}$ to obtain an orthonormal system $\lset \l| i\rk\otimes\l| \phi\rk\rset_{i}\subset\C^{nk}$. Since all orthonormal systems of the same size are related by some unitary transformation $U$, we are finished.
\end{proof}

Before we state the proof of the coding Theorem \ref{thm:SemigroupSubCap}, we also need a technical lemma, which we will use to quantify the number of ancillas needed to implement the optimal coding schemes achieving rates as required in the statement of the theorem. The proof of the lemma will be done using a standard technique from the method of types ~\cite{Winter1999}.     

\begin{lem}[Approximate Choi matrix purification with small environment]
\label{lem:ChoiMatrSmallEnv}
Let $\Tm:\M_{d_A}\ra \M_{d_B}$ be a quantum channel with Stinespring isometry $V^{A\ra BE}:\C^{d_A}\ra\C^{d_B}\otimes\C^{d_E}$ and consider a purification of the Choi matrix of $\Tm$ given by 
\begin{align*}
\l|\sigma^{A'BE}\rk &\lk\sigma^{A'BE}\r| = \lb \mathbbm{1}_{d_{A'}}\otimes V^{A\ra BE}\rb \omega^{A'A} \lb \mathbbm{1}_{d_{A'}}\otimes V^{A\ra BE}\rb^\dagger
\end{align*}
with a system labeled by $A'$ of size $d_{A'}=d_A$.
Then for arbitrary $\delta > 0$ and all sufficiently large $m\in\N$, there exist pure states $\l|\tilde{\sigma}^{A'BE}\rk\lk\tilde{\sigma}^{A'BE}\r|\in \D\lb\C^{d^m_{A'}}\otimes\C^{d^m_B}\otimes \C^{d^m_E}\rb$ such that 
\begin{align*}
&\left\Vert \l|\tilde{\sigma}^{A'BE}\rk\lk\tilde{\sigma}^{A'BE}\r| - \lb\l|\sigma^{A'BE}\rk\lk\sigma^{A'BE}\r|\rb^{\otimes m}\right\Vert_1 \leq 2^{1 -\frac{mc'\delta^2}{2}}
\end{align*} 
and $\text{rank}\lb \tilde{\sigma}^{E}\rb \leq 2^{m S\lb \sigma^{E}\rb + cm\delta}$ for some constants $c,c'\in\R^+$ independent of $\delta$ and $m$.

\end{lem}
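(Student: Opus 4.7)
The plan is to reduce the statement to a standard ``typical subspace'' construction applied to the Schmidt decomposition of $\l|\sigma^{A'BE}\rk$ across the bipartition $A'B : E$. Write
\begin{align*}
\l|\sigma^{A'BE}\rk = \sum_{i=1}^{r} \sqrt{\lambda_i}\,\l|u_i\rk_{A'B}\l|e_i\rk_E,
\end{align*}
where $\lset \lambda_i\rset$ are the eigenvalues of $\sigma^E$, $\lset \l|e_i\rk\rset$ its eigenbasis, and $\lset \l|u_i\rk\rset$ an orthonormal family. The $m$-fold tensor product $\l|\sigma^{A'BE}\rk^{\otimes m}$ then has Schmidt coefficients $\sqrt{\lambda_{i_1}\cdots\lambda_{i_m}}$ indexed by strings $\mathbf{i}=(i_1,\ldots,i_m)\in\lset 1,\ldots,r\rset^m$.

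Next I would introduce the $\delta$-typical set
\begin{align*}
T^m_\delta := \lset \mathbf{i} : \l| -\tfrac{1}{m}\log(\lambda_{i_1}\cdots\lambda_{i_m}) - S(\sigma^E)\r| \leq \delta \rset,
\end{align*}
and invoke a standard Chernoff/Hoeffding bound (this is the ``method of types'' input of \cite{Winter1999}) to obtain two facts: first, the total Schmidt weight on $T^m_\delta$ satisfies $P := \sum_{\mathbf{i}\in T^m_\delta} \lambda_{i_1}\cdots\lambda_{i_m} \geq 1 - 2^{-mc'\delta^2}$ for some constant $c'>0$ depending only on the distribution $\lset \lambda_i\rset$; second, $|T^m_\delta|\leq 2^{m S(\sigma^E) + cm\delta}$, since each element has probability at least $2^{-m(S(\sigma^E)+\delta)}$. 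Both steps are textbook once the Schmidt picture is set up.

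Then I would define the truncated and renormalized state
\begin{align*}
\l|\tilde{\sigma}^{A'BE}\rk := \frac{1}{\sqrt{P}} \sum_{\mathbf{i}\in T^m_\delta} \sqrt{\lambda_{i_1}\cdots\lambda_{i_m}}\,\l|u_{i_1}\rk\cdots\l|u_{i_m}\rk \otimes \l|e_{i_1}\rk\cdots\l|e_{i_m}\rk.
\end{align*}
By construction the reduced state $\tilde{\sigma}^E$ is supported on the span of $\lset \l|e_{i_1}\rk\cdots\l|e_{i_m}\rk : \mathbf{i}\in T^m_\delta \rset$, hence $\text{rank}(\tilde{\sigma}^E)\leq |T^m_\delta|\leq 2^{mS(\sigma^E)+cm\delta}$, giving the second claim.

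For the trace-norm bound, I would use the standard identity that for two pure states $\l|\psi\rk,\l|\phi\rk$ one has $\left\Vert \l|\psi\rk\lk\psi\r| - \l|\phi\rk\lk\phi\r|\right\Vert_1 = 2\sqrt{1-|\lk\psi|\phi\rk|^2}$. Here $\lk\sigma^{\otimes m}|\tilde{\sigma}\rk = \sqrt{P}$, so
\begin{align*}
\left\Vert \l|\tilde{\sigma}^{A'BE}\rk\lk\tilde{\sigma}^{A'BE}\r| - \lb\l|\sigma^{A'BE}\rk\lk\sigma^{A'BE}\r|\rb^{\otimes m}\right\Vert_1 = 2\sqrt{1-P} \leq 2\sqrt{2^{-mc'\delta^2}} = 2^{1 - mc'\delta^2/2},
\end{align*}
which is exactly the claimed bound. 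The main (minor) obstacle is making sure the constants $c,c'$ are independent of $\delta$ and $m$, which follows from the quantitative form of the Chernoff bound applied to the bounded random variable $-\log\lambda_i$ under the distribution $\lset \lambda_i\rset$. No step is genuinely hard, as this is a packaging of the typical-subspace argument into the purified picture required by the coding construction for Theorem \ref{thm:SemigroupSubCap}.
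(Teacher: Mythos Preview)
Your proposal is correct and is essentially the same argument as the paper's, just unpacked in more detail. The paper simply invokes the typical projector $\Pi^E_\delta$ on the $E$-system and defines $\l|\tilde{\sigma}\rk$ as the normalized projection $\lb\mathbbm{1}\otimes\mathbbm{1}\otimes\Pi^E_\delta\rb\l|\sigma\rk^{\otimes m}$, citing standard typical-subspace properties for the dimension and weight bounds; your Schmidt-decomposition truncation to the typical set $T^m_\delta$ is precisely this projection written out in the Schmidt basis, and your overlap computation $\lk\sigma^{\otimes m}|\tilde{\sigma}\rk=\sqrt{P}$ makes explicit the trace-norm estimate that the paper leaves to the references.
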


\begin{proof}

By the method of typical subspaces, see e.g. ~\cite{Devetak2005,Bennett2012,Winter2007}, there exists a projector $\Pi^E_\delta:\C^{d_E^m}\ra\C^{d_E^m}$ onto a subspace $E_\delta\subset \C^{d_E^m}$ with dimension $\dim\lb E_\delta\rb \leq 2^{m S\lb \sigma^E\rb + cm\delta}$ for some constant $c\in\R^+$ which does not depend on $\delta$ or $m$, such that the statements in the lemma are fulfilled for the pure states
\begin{align*}
\l|\tilde{\sigma}^{A'BE}\rk = \frac{\lb \mathbbm{1}_{d^m_{A'}}\otimes \mathbbm{1}_{d^m_B}\otimes \Pi^E_\delta\rb\l|\sigma^{A'BE}\rk^{\otimes m}}{\text{tr}\lb \Pi^E_\delta \lb\sigma^E\rb^{\otimes m}\rb}
\end{align*}

\end{proof}

With the lemmata in place we can prove the capacity theorem stated above.

\begin{proof}$\lb\text{of Theorem \ref{thm:SemigroupSubCap}} \rb$

We will show that $\tilde{R}=\frac{\log\lb d\rb - \delta}{1+\delta m}$ is an achievable rate as defined in Definition \ref{defn:SubdivisionCap} for any $\delta>0$ and some constant $m\in\R^+$, which will be specified in the proof. For $R= \text{log}\lb d\rb-\delta $, consider the sequences $n_\nu = R\nu$ and $m_\nu = \nu$ and note that we will for brevity write $n_\nu = R\nu$ instead of $n_\nu = \lfloor R\nu\rfloor$ and omit the floor operations, which are always implicitly assumed, when we are talking about integer sequences.

To construct a quantum subdivision coding scheme achieving the rate $\tilde{R}=\frac{\log\lb d\rb - \delta}{1+\delta m}$, we start as in the proof of Theorem \ref{thm:CPTPSubCap} and consider a coding scheme for the usual capacity, Definition \ref{defn:QuantCap}, for a channel of the form $\Tm_{t/k}:= e^{\frac{t}{k}\Lm}$ achieving the rate $R$ close to the maximal value $\log(d)$. Then we use this 'local' scheme to build the total coding scheme for the subdivision capacity similar to the previous proof. Compare Fig. \ref{fig:CPTPCodingScheme} to Fig. \ref{fig:SGCodingScheme} to get further insight into the intuition behind this proof.

By continuity $(\ref{equ:CoherentInfoToZero})$ there exists a $K\in\N$ such that $I^{\text{coh}}\lb \omega^{A'A} ,\i_{A'}\otimes \Tm_{t/K}\rb  > \text{log}\lb d\rb-\delta$ for the quantum channel $\Tm_{t/K}:= e^{\frac{t}{K}\Lm}$. Thus the rate $R= \text{log}\lb d\rb-\delta $ is achievable and we can use the decoupling approach, see Appendix A, to construct a coding scheme.   

By Lemma \ref{lem:RateAchievDecoup} there is a sequence $\epsilon_\nu\ra 0$ for $\nu\ra\infty$ and there exist unitary channels $\Um\in \unitary\lb {d}^\nu\rb$ and isometric embeddings $\mathcal{V}^{R\ra A}_\nu : \M_{2^{R\nu}}\ra \M_{{d}^\nu} $ for all $\nu\in\N$ such that for all states $\rho^{R'R}\in\D\lb\C^{2^{R\nu}}\otimes \C^{2^{R\nu}}\rb$  
\begin{align}
\bigg{\Vert} \lb\i_{R'}\otimes \lbr\lb\Tm_{t/K}^c\rb^{\otimes\nu}\circ \Um\circ \mathcal{V}^{R\ra A}_\nu\rbr\rb\lb \rho^{R'R}\rb -\rho^{R'}\otimes \lb\sigma^E\rb^{\otimes\nu}\bigg{\Vert}_1 \leq \epsilon_\nu 
\label{equ:Decoup}
\end{align}
is fulfilled. We denote by $\sigma^E$ the reduced density matrix of the purified Choi matrix corresponding to the channel $\Tm_{t/K}= e^{\frac{t}{K}\Lm}$ given by
\begin{align*}
\l|\sigma^{A'AE}\rk&\lk\sigma^{A'AE}\r| = \lb \mathbbm{1}_{A'}\otimes V^{A\ra AE}\rb \omega^{A'A} \lb \mathbbm{1}_{A'}\otimes V^{A\ra AE}\rb^\dagger
\end{align*}  
where as before $ V^{A\ra AE}:\C^{d}\ra \C^{d}\otimes \C^{d_E}$ denotes the Stinespring isometry of the quantum channel $\Tm_{t/K}$.

As described in Appendix A we can use the encoding maps $\tilde{\Em}_\nu : \M_{2^{R\nu}}\ra \M_{{d}^\nu}$ given by 
\begin{align}
\tilde{\Em}_\nu\lb\rho\rb = \Um\circ \mathcal{V}^{R\ra A}_\nu\lb\rho\rb
\label{equ:encoding}
\end{align}
to define a coding scheme for the channel $\Tm_{t/K}$, see Definition \ref{defn:QuantCap}. By Lemma \ref{lem:ImplementIsometry} the encoding operation (\ref{equ:encoding}) can be implemented as
\begin{align}
\tilde{\Em}_\nu\lb\rho\rb = \tilde{U}_\nu\lb \rho\otimes \l|\phi_\nu\rk\lk \phi_\nu\r|\rb \tilde{U}_\nu^\dagger
\label{equ:EncodUni}
\end{align}
with a unitary $\tilde{U}_\nu:\C^{d^\nu}\ra \C^{d^\nu}$ and a pure states $\l|\phi_\nu\rk\in\C^{2^{\delta\nu}}$.

So far we have implemented the encoding operation via unitaries. These encoding operations achieve decoupling in the limit $\nu\ra\infty$ as explained in Appendix A. Now we will do the same for the decoding operations of the 'local' coding scheme, see Lemma \ref{lem:GeneralDecoding}. For reasons that will become clear in the further discussion, we will employ Lemma \ref{lem:ChoiMatrSmallEnv} to decrease the dimension of the ancilla system needed for implementing the decoding operation via unitaries. \newpage

Using the Lemma \ref{lem:ChoiMatrSmallEnv} there is a pure state $\l|\tilde{\sigma}^{A'AE}\rk\in\C^{d^\nu_{A'}}\otimes\C^{d^\nu_B}\otimes \C^{d^\nu_E}$ such that 
\begin{align*}
\left\Vert \l|\tilde{\sigma}^{A'AE}\rk\lk\tilde{\sigma}^{A'AE}\r| - \lb\l|\sigma^{A'AE}\rk\lk\sigma^{A'AE}\r|\rb^{\otimes \nu}\right\Vert_1 \leq 2^{1 -\frac{\nu c'\delta^2}{2}}
\end{align*} 
and such that $\text{rank}\lb \tilde{\sigma}^{E}\rb\leq 2^{\nu S\lb \sigma^E\rb + c\nu\delta}$ for some constants $c,c'\in\R^+$ independent of $\nu$ and $\delta$. 
Inserting this approximate state into the decoupling bound (\ref{equ:Decoup}) shows that 
\begin{align*}
&\left\Vert \lb\i_{R'}\otimes \lbr\lb\Tm_{t/K}^{c}\rb^{\otimes\nu}\circ \Em_\nu\rbr\rb\lb \rho^{R'R}\rb - \rho^{R'}\otimes \tilde{\sigma}^{E}\right\Vert_1 \leq \lbr\epsilon_\nu + 2^{1 -\frac{\nu c'\delta^2}{2}}\rbr \ra 0 
\end{align*}
as $\nu\ra\infty$. Therefore using the above encoding operations $\Em_\nu$ we achieve decoupling for environment states $\tilde{\sigma}^E$ and we can construct decoding quantum channels $\tilde{\Dm}_\nu:\M^{\otimes \nu}_{d}\ra \M^{\otimes R\nu}_{2}$ as shown in Lemma \ref{lem:GeneralDecoding}. These decoding operations may be assumed to be of the form
\begin{align}
\tilde{\Dm}_\nu\lb\rho\rb = \text{tr}_{d_{E^\nu_\delta}} W_\nu\rho W_\nu^\dagger
\label{equ:Decoding}
\end{align}
for isometries $W_\nu:\C^{d^\nu}\ra \C^{2^{R\nu}}\otimes \C^{d_{E^\nu_\delta}}$, where $d_{E^\nu_\delta} \leq 2^{\nu S\lb \sigma^E\rb + c\nu\delta}$. To determine the dimension of the ancilla system we use the assumptions made above and get
\begin{align*}
\log\lb d\rb - \delta &< I^{\text{coh}}\lb \omega^{A'A} ,\i_{A'}\otimes \Tm_{t/K}\rb  \\&= S\lb\sigma^A\rb - S\lb \sigma^{A'A}\rb \\
& \leq \log\lb d\rb - S\lb \sigma^{E}\rb.
\end{align*}
Finally we obtain $S\lb \sigma^E\rb < \delta$ and therefore $d_{E^\nu_\delta}< 2^{\nu \delta \lb 1+c\rb}$. By Lemma \ref{lem:ImplementIsometry} we can implement the decoding operation $\tilde{\Dm}_\nu$, see equation (\ref{equ:Decoding}), as 
\begin{align}
\tilde{\Dm}_\nu\lb\rho\rb = \text{tr}_{d_{E^\nu_\delta}} \lb U_\nu \lbr\rho\otimes\l|\tilde{\phi}_\nu\rk\lk \tilde{\phi}_\nu\r|\rbr U^\dagger_\nu\rb
\label{equ:DecodUni}
\end{align}
with a unitary $U_\nu\in \unitary\lb d^\nu 2^{\nu\delta c }\rb$ and a pure state $\l|\tilde{\phi}_\nu\rk\in \C^{2^{\nu\delta c }}$, where we made the auxiliary system a bit larger than necessary.
 
So far we constructed a coding scheme, see Definition \ref{defn:QuantCap}, for the quantum channel $\Tm_{t/k}=e^{\frac{t}{k}\Lm}$, which achieves the rate $R=\log(d)-\delta$. Furthermore we have shown how to implement the coding operations of the scheme using unitaries and pure ancilla states, see (\ref{equ:EncodUni}) and (\ref{equ:DecodUni}).  

Now we construct a subdivision coding scheme, see Definition \ref{defn:SubdivisionCap}, by concatenating the 'local' scheme $K$-times, basically in the same way as in the proof of Theorem \ref{thm:CPTPSubCap}. The only difference is, that we have to implement the coding scheme using maps from $\sgroup$ instead of arbitrary coding channels. Therefore we divide the system in the intermediate steps of equation (\ref{equ:CodingSubdivisionCap}) into two parts. The first part is the information carrying system of dimension $d^\nu$, in which the information will be encoded at a rate $R = \log(d)-\delta$. The second part is an auxiliary system of size $d^{\nu\delta m} = 2^{\nu\delta c}$ for $m:=  c\log_{d}(2)$. This system holds the ancilla states used to implement the decoding maps $\tilde{\Dm}_\nu$ acting on the information carrying system via unitaries acting on the full system. We will also show, that the auxiliary states used to implement the encoding maps $\tilde{\Em}_\nu$ via unitaries can be generated within the coding scheme without enlarging the auxiliary system, see Fig. \ref{fig:SGCodingScheme}.  

\begin{figure*}
\center
     \includegraphics[width=0.8\textwidth]{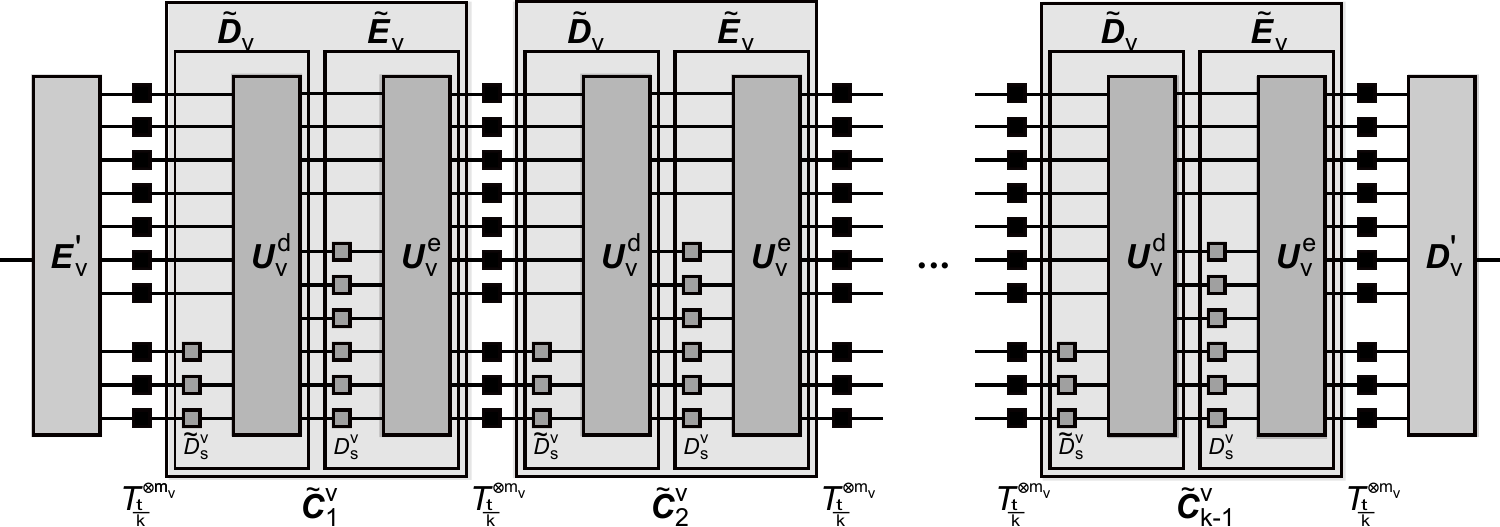}
     \caption{Subdivision Coding Scheme for the proof of Theorem \ref{thm:SemigroupSubCap}. Implement the subdivision coding scheme from Theorem \ref{thm:CPTPSubCap} via almost pure ancillas generated within the coding scheme using depolarizing channels $\Dm^\nu_{s}$ and unitary channels $\Um^e_\nu, \Um^d_\nu$. This requires some overhead depicted by a gap between the information carrying system and the auxiliary system producing the almost pure ancillas.}
     \label{fig:SGCodingScheme}
\end{figure*}

The rate of the scheme constructed this way can be seen to be $\tilde{R}=\frac{R}{1+\delta m} = \frac{\log\lb d\rb - \delta}{1+\delta m}$. The numerator is the rate with which the information is encoded in the information carrying system and the denominator is increased by $\delta m$, which corresponds to the additional channel uses needed to carry the auxiliary system. 

We have to define the subdivision coding scheme described above properly and show that the error, see equation (\ref{equ:CodingSubdivisionCap}), vanishes for $\nu\ra\infty$. Note that we hold the number of subdivisions $K$ in Definition \ref{defn:SubdivisionCap} fixed since the beginning of the proof. In the following take sequences $\lb n_\nu\rb_\nu$ and $\lb m_\nu\rb_\nu$, see Definition \ref{defn:SubdivisionCap}, defined as $n_\nu :=  R\nu$ and $m_\nu := \nu\lb 1+\delta m\rb$. Also note that the size of the auxiliary system $d^{\nu\delta m}$ is enough to hold the auxiliary states $\l|\tilde{\phi}_\nu\rk$ used to implement the decoding operations, see (\ref{equ:DecodUni}), on the information carrying system. By inspecting Equation (\ref{equ:DecodUni}) it is clear, that after applying the decoding operation a system of size $d_{E^\nu_\delta} = 2^{\nu\delta (1+c)}$ will be free from information. This system is large enough to contain the pure state $\l|\phi_\nu\rk\in\C^{2^{\nu\delta}}$ used to implement the encoding operation via a unitary, see (\ref{equ:EncodUni}). \newpage
Thus we can assume 
\begin{align}
\tilde{\Em}_\nu\lb\cdot\rb =  \Um^e_\nu \lb\cdot\otimes\l|\phi^e_\nu\rk\lk \phi^e_\nu\r|\rb 
\label{equ:Enc}
\end{align}
for unitary maps $\Um_\nu^e: \M_{2^{R\nu}}\otimes \M_{2^{\nu\delta (1+c)}}\ra  \M_{d^\nu}\otimes \M_{d^{\nu\delta m}}$ and pure auxiliary states $\l|\phi^e_\nu\rk\in \C^{2^{\nu\delta (1+c)}}$, which just extends the states $\l|\phi_\nu\rk$ from above. We may also assume
\begin{align}
\tilde{\Dm}_\nu\lb\cdot\rb = \text{tr}_{d^{\nu\delta m}} \lb \Um^d_\nu \lb\cdot\otimes\l|\tilde{\phi}_\nu\rk\lk \tilde{\phi}_\nu\r|\rb\rb 
\label{equ:Dec}
\end{align}
for unitary maps $\Um_\nu^d: \M_{d^\nu}\otimes \M_{d^{\nu\delta m}}\ra  \M_{2^{R\nu}}\otimes \M_{2^{\nu\delta (1+c)}}$. We can use the corresponding depolarizing channels $\Dm^\nu_{s_\nu}:\M_{2^{\nu\delta (1+c)}}\ra \M_{2^{\nu\delta (1+c)}}$ given by 
\begin{align*}
\Dm^\nu_{s_\nu}\lb\rho \rb = \lb 1-e^{-s_\nu}\rb\text{tr}\lb\rho\rb\l|\phi^e_\nu\rk\lk \phi^e_\nu\r| + e^{-s_\nu}\rho
\end{align*}
and $\tilde{\Dm}^\nu_{s_\nu}:\M_{d^{\nu\delta m}}\ra \M_{d^{\nu\delta m}}$ given by 
\begin{align*}
\tilde{\Dm}^\nu_{s_\nu}\lb\rho \rb = \lb 1-e^{-s_\nu}\rb\text{tr}\lb\rho\rb\l|\tilde{\phi}_\nu\rk\lk \tilde{\phi}_\nu\r| + e^{-s_\nu}\rho
\end{align*}
to obtain auxiliary states for implementing the encoding and decoding maps (\ref{equ:Enc}) and (\ref{equ:Dec}) with arbitrary accuracy. Note that these depolarizing channels can be defined as tensor products of local depolarizing channels, by choosing the unitaries in (\ref{equ:EncodUni}) and (\ref{equ:DecodUni}) accordingly. This is also depicted in Fig. \ref{fig:SGCodingScheme}. 

Define the coding maps $\lb\tilde{\Cm}^\nu_l\rb^{K-1}_{l=1}\subset \sgroup^\ast$ as 
\begin{align*}
\tilde{\Cm}^{\nu}_l = \Um^e_\nu \circ \lb\i_{2^{R\nu}}\otimes \Dm^\nu_s\rb \circ \Um^d_\nu\circ\lb\i_{d^{\nu}}\otimes \tilde{\Dm}^\nu_s\rb
\end{align*}
and $\tilde{\Cm}^\nu_K = \i_{d^{\nu\lb 1+\delta m\rb}}$.
Note that we can choose $s_\nu\in\R^+$ growing with $\nu$ fast enough to ensure 
\begin{align}
\left\Vert \text{tr}_{d^{\nu\delta m}}\circ\tilde{\Cm}^{\nu}_l - \tilde{\Em}_\nu\circ\tilde{\Dm}_\nu\circ \text{tr}_{d^{\nu\delta m}}\right\Vert_\diamond\leq \frac{1}{\nu}.
\label{equ:ApproxCoding}
\end{align}
With the maps $\Em'_\nu:\M_{2^{R\nu}}\ra \M_{d^{\nu\lb 1+\delta m\rb}}$ and $\Dm'_\nu : \M_{d^{\nu\lb 1+\delta m\rb}}\ra \M_{2^{R\nu}}$ defined via $\Em'_\nu\lb\rho\rb = \tilde{\Em}_\nu\lb\rho\rb\otimes \l|\tilde{\phi}_\nu\rk\lk \tilde{\phi}_\nu\r|$ and $\Dm'_\nu\lb\rho\rb = \tilde{\Dm}_\nu\lb\text{tr}_{d^{\nu\delta m}}\lb\rho\rb\rb$ we get
\begin{align*}
&\inf_{k,\Em,\Dm,\Cm_1,\ldots,\Cm_k} \left\Vert \i^{\otimes n_\nu}_2 - \Dm\circ\prod^k_{l=1} \lb \Cm_l\circ\Tm^{\otimes m_\nu}_{\frac{t}{k}}\rb\circ \Em\right\Vert_{\diamond} \\
&\leq \left\Vert \i^{\otimes n_\nu}_2 - \Dm'_\nu\circ\prod^K_{l=1}\lb \tilde{\Cm}^\nu_l\circ\Tm^{\otimes m_\nu}_{\frac{t}{K}}\rb\circ \Em'_\nu\right\Vert_{\diamond} \\
&= \left\Vert \i^{\otimes n_\nu}_2 - \tilde{\Dm}_\nu\circ\Tm^{\otimes \nu}_{\frac{t}{K}} \circ\text{tr}_{d^{\nu\delta m}}\circ\prod^{K-1}_{l=1}\lb \tilde{\Cm}^\nu_l\circ\Tm^{\otimes m_\nu}_{\frac{t}{K}}\rb\circ \Em'_\nu\right\Vert_{\diamond} \\
&\leq \left\Vert \i^{\otimes n_\nu}_2 - \tilde{\Dm}_\nu\circ\Tm^{\otimes \nu}_{\frac{t}{K}} \circ\prod^{K-1}_{l=1}\lb \tilde{\Em}_\nu\circ\tilde{\Dm}_\nu\circ\Tm^{\otimes \nu}_{\frac{t}{K}}\rb\circ \tilde{\Em}_\nu\right\Vert_{\diamond} + \frac{K}{\nu} \\
&\leq K \left\Vert \i^{\otimes n_\nu}_2 - \tilde{\Dm}_\nu\circ \Tm^{\otimes m_\nu}_{\frac{t}{K}}\circ \tilde{\Em}_\nu\right\Vert_{\diamond} + \frac{K}{\nu}\rightarrow 0 \text{ as } \nu\ra\infty. 
\end{align*}
For the first inequality we just inserted the coding maps constructed before. The second inequality used the triangle inequality K-times and the approximation (\ref{equ:ApproxCoding}). The rest of the proof works the same as for Theorem \ref{thm:CPTPSubCap}. The above calculation shows, that $\tilde{R} = \frac{\log\lb d\rb - \delta}{1+\delta m}$ is an achievable rate for any $\delta >0$. Therefore we have the capacity $\mathcal{Q}_{\sgroup^\ast}\lb t\Lm\rb = \log(d)$, which finishes the proof.

\end{proof}

\section{Quantum subdivision capacity with unitary add-ons}
\label{sec:5}

\subsection{Upper bound via entropy production}
In this section we will consider the subdivision capacity with $\chanSet = \unitary$, i.e., where we are only allowed to use unitary quantum channels to appear in the intermediate coding steps $\Cm^\nu_l$ of (\ref{equ:CodingSubdivisionCap}). Unlike the previous two sections we will show, that the subdivision capacity in this case is in general not equal to the dimension upper bound of $\log(d)$. 

To prove this we consider a Liouvillian $\Lm^{\text{dep}}_r:\M_d\ra\M_d$, which depolarizes onto $\rho_0\in\D_d$ and is given by 
\begin{align}
\Lm^{\text{dep}}_r\lb\rho\rb = r\lb \text{tr}\lb\rho\rb \rho_0 - \rho\rb.
\label{equ:depLiou}
\end{align}
Here $r\in\R^+$ denotes a rate with which the depolarizing noise is applied. The quantum channels generated by such Liouvillians are of the form $\Tm_t\lb\rho\rb = e^{t\Lm^{\text{dep}}_r}\lb\rho\rb = (1-e^{-rt})\text{tr}\lb\rho\rb\rho_0 + e^{-rt}\rho$.

For Liouvillians of the above form we will show the following 
\begin{thm}(Upper bound for depolarizing Liouvillians)
\label{thm:UnitarySubDCapDepolBound}
Let $r\in\R^+$ be given. For the Liouvillian $\Lm^{\text{dep}}_r :\M_{d}\ra\M_d$ defined as in (\ref{equ:depLiou}) we have
\begin{align*}
\mathcal{Q}_\unitary\lb t\Lm^{\text{dep}}_r\rb\leq \log(d) - \lb 1-e^{-rt}\rb S\lb\rho_0\rb.
\end{align*}
\end{thm}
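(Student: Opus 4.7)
The plan is to reduce the upper bound on $\mathcal{Q}_\unitary(t,\Lm^{\text{dep}}_r)$ to a single-letter coherent-information bound on the effective noise channel that appears inside the subdivision coding scheme. For any achievable rate, write $\Psi_\nu:=\prod_{l=1}^{k}\Um_l\circ\Tm_{t/k}^{\otimes m_\nu}$ for the whole noise channel with all unitary interventions folded in; then $\Dm\circ\Psi_\nu\circ\Em$ is $\varepsilon_\nu$-close to $\i_2^{\otimes n_\nu}$. Data processing for the quantum capacity together with Theorem~\ref{thm:ContinuityQuantCap} (or a Fannes-type refinement) gives $n_\nu\leq\mathcal{Q}(\Psi_\nu)+o(m_\nu)$, so the task reduces to proving $\mathcal{Q}(\Psi_\nu)\leq m_\nu\bigl[\log(d)-(1-e^{-rt})S(\rho_0)\bigr]$.

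By Theorem~\ref{thm:LSD} applied to $\Psi_\nu^{\otimes n}$, which itself has the form $\prod_l(\Um_l^{\otimes n})\circ\Tm_{t/k}^{\otimes nm_\nu}$, this reduces further to the single-letter bound
\begin{align*}
I^{\text{coh}}(\sigma,\Phi)\leq L\bigl[\log(d)-(1-e^{-rt})S(\rho_0)\bigr]
\end{align*}
for every channel $\Phi=\prod_{l=1}^{k}\Um_l\circ\Tm_{t/k}^{\otimes L}$ on $L$ qudits and every input $\sigma$. Purifying the input to $|\phi\rangle^{RA^L}$ and setting $s(l):=S(\sigma_l^{RA^L})$ for the joint reference-output state after $l$ rounds, the unitaries leave $s(l)$ invariant and keep $S(\sigma_l^R)=S(\sigma_0^A)$ fixed, so all entropy production sits in the depolarizing rounds. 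Writing $\Tm_{t/k}=(1-p_k)\i+p_k\Rm_0$ with $p_k:=1-e^{-rt/k}$ and $\Rm_0(\rho)=\rho_0\text{tr}(\rho)$, the expansion
\begin{align*}
(\i_R\otimes\Tm_{t/k}^{\otimes L})(\sigma_l^{RA^L})=\sum_{S\subseteq[L]}p_k^{|S|}(1-p_k)^{L-|S|}\,\sigma_l^{R\bar S}\otimes\rho_0^{\otimes S}
\end{align*}
combined with concavity of the von Neumann entropy yields the one-step bound $s(l+1)\geq Lp_k S(\rho_0)+\mathbb{E}_{S}\bigl[S(\sigma_l^{R\bar S})\bigr]$, the expectation being over $S\subseteq[L]$ with independent inclusion probability $p_k$.

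The key technical step is then the auxiliary lemma
\begin{align*}
\mathbb{E}_{S}\bigl[S(\sigma^{R\bar S})\bigr]\geq (1-p_k)S(\sigma^{RA^L})+p_k S(\sigma^R),
\end{align*}
which I would prove by induction on $L$: conditioning on whether the last index belongs to $S$ reduces the step to strong subadditivity in the form $S(\sigma^{RA_L})+S(\sigma^{RA^{L-1}})\geq S(\sigma^{RA^L})+S(\sigma^R)$. Substituting back produces the geometric recursion $s(l+1)\geq (1-p_k)s(l)+p_k\bigl(L S(\rho_0)+S(\sigma_0^A)\bigr)$, which telescopes from $s(0)=0$ to $s(k)\geq(1-e^{-rt})\bigl(L S(\rho_0)+S(\sigma_0^A)\bigr)$ since $(1-p_k)^k=e^{-rt}$. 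Coupled with $S(\Phi(\sigma_0^{A^L}))\leq L\log(d)$, this is the single-letter inequality announced above.

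I expect the induction with strong subadditivity to be the main obstacle: the naive Araki--Lieb-type bound $S(\sigma^{R\bar S})\geq S(\sigma^{RA^L})-|S|\log(d)$ makes the one-step recursion \emph{decreasing}, so one really needs to exploit the full $L$-qudit SSA structure in order not to lose the bound as either the number of qudits $L$ or the number of subdivisions $k$ grows. Once the lemma is in place, the LSD reduction and the continuity-plus-data-processing argument assemble into the claim $\mathcal{Q}_\unitary(t,\Lm^{\text{dep}}_r)\leq \log(d)-(1-e^{-rt})S(\rho_0)$.
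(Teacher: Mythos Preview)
Your argument is correct and the SSA-based lemma you single out as the main obstacle does go through exactly as you sketch; the induction step reduces precisely to $S(\sigma^{RA_L})+S(\sigma^{RA^{L-1}})\geq S(\sigma^{RA^L})+S(\sigma^R)$. One small imprecision: the reduction ``$n_\nu\leq\mathcal{Q}(\Psi_\nu)+o(m_\nu)$'' via Theorem~\ref{thm:ContinuityQuantCap} does not work as stated, because the continuity bound there carries a factor $2^{n_\nu}$ in the output dimension. What you actually need (and what your parenthetical ``Fannes-type refinement'' presumably means) is to bypass $\mathcal{Q}(\Psi_\nu)$ entirely: apply Fannes--Audenaert to the two entropies in $I^{\text{coh}}(\omega,\Dm\circ\Psi_\nu\circ\Em)$ to get $n_\nu(1-O(\varepsilon_\nu))-O(H(\varepsilon_\nu))$, then use data processing for coherent information to bound this by $\max_\sigma I^{\text{coh}}(\sigma,\Psi_\nu)$, to which your single-letter bound applies directly with $L=m_\nu$. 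The LSD detour through $\Psi_\nu^{\otimes n}$ is then unnecessary.

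The paper takes a genuinely different and somewhat more elementary route. Instead of tracking the joint entropy $S(\sigma_l^{RA^L})$ with a reference, it tracks only the \emph{output} entropy $S(\tilde{\Tm}^\nu_K(\rho))$, invoking the Aharonov entropy-production lemma $S(\Tm_t^{\otimes m}(\rho))\geq e^{-rt}S(\rho)+(1-e^{-rt})mS(\rho_0)$, which is your auxiliary lemma specialized to trivial $R$ and hence needs only subadditivity rather than full SSA. The paper then feeds this into a Holevo $\chi$-quantity argument: for the maximally mixed input $\sigma=2^{-R\nu}\sum_i|i\rangle\langle i|$, one writes $\nu\log d\geq S(\tilde{\Tm}^\nu_K(\sigma))=\chi+\sum_i 2^{-R\nu}S(\tilde{\Tm}^\nu_K(|i\rangle\langle i|))$, uses data processing for $\chi$ through the decoder, and closes with Fannes--Audenaert. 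This buys two things: the proof avoids SSA altogether, and it actually bounds the \emph{classical} subdivision capacity (hence a fortiori the quantum one), a strengthening the paper notes explicitly. Your approach, by contrast, stays within the coherent-information framework throughout and even yields the slightly sharper state-dependent bound $I^{\text{coh}}(\sigma,\Phi)\leq L\log d-(1-e^{-rt})\bigl(LS(\rho_0)+S(\sigma^{A^L})\bigr)$, though the extra term is not needed for the theorem.
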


Before proving the theorem we will need a Lemma proved in ~\cite[Lemma 8]{Aharonov1996} for the special case $\rho_0 = \frac{\mathbbm{1}_d}{d}$. The proof of the following slightly more general version proceeds in the same way. 

\begin{lem}[Entropy growth by local depolarizing channels, see ~\cite{Aharonov1996}]
\label{lem:EntropyProduction}
Let $r\in\R^+$ and $\rho_0\in\D_d$ be given. Consider the quantum channel $\Tm_t:\M_d\ra \M_d$ defined as $\Tm_t\lb\rho\rb := e^{t\Lm^{\text{dep}}_r}\lb\rho\rb = \lb 1-e^{-rt}\rb\text{tr}\lb\rho\rb\rho_0 + e^{-rt}\rho$ and any state $\rho\in\D\lb\C^{d^m}\rb$. Then we have
\begin{align*}
S\lb \Tm^{\otimes m}_t\lb\rho\rb\rb &\geq e^{-rt}S\lb\rho\rb + \lb 1-e^{-rt}\rb mS\lb\rho_0\rb\\ 
&\geq \lb 1-e^{-rt}\rb mS\lb\rho_0\rb.
\end{align*}
\end{lem}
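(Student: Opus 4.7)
The plan is an induction on $m$. Setting $p := 1-e^{-rt}$ and $q := e^{-rt}$, the base case $m=1$ is immediate from concavity of the von Neumann entropy applied to the convex combination $\Tm_t(\rho) = q\rho + p\rho_0$. The second inequality in the statement follows from the first by dropping the non-negative term $qS(\rho)$, so only the first inequality requires work.

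For the inductive step I would label the first qudit $A$ and the remaining $m-1$ qudits $B$, and apply $\Tm_t^{\otimes m} = \Tm_t^A \otimes \Tm_t^{\otimes (m-1),B}$ in two stages, first on $A$ only and then on $B$. Writing $\Tm_t^A = p\,\mathcal{R}_0 + q\,\i$ with $\mathcal{R}_0(\sigma) = \text{tr}(\sigma)\rho_0$ and pushing $\i_A \otimes \Tm_t^{\otimes(m-1)}$ through both summands yields the convex decomposition
\begin{align*}
\Tm_t^{\otimes m}(\rho_{AB}) = p\,\rho_0 \otimes \Tm_t^{\otimes (m-1)}(\rho_B) + q\,\lb\i_A \otimes \Tm_t^{\otimes (m-1)}\rb(\rho_{AB}).
\end{align*}
Concavity lower bounds $S(\Tm_t^{\otimes m}(\rho))$ by $p\lbr S(\rho_0)+S(\Tm_t^{\otimes (m-1)}(\rho_B))\rbr + q\,S((\i_A\otimes\Tm_t^{\otimes (m-1)})(\rho_{AB}))$, and the induction hypothesis applied to $\rho_B$ gives $S(\Tm_t^{\otimes (m-1)}(\rho_B)) \geq q\,S(\rho_B) + (m-1)p\,S(\rho_0)$.

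The remaining term $S((\i_A\otimes\Tm_t^{\otimes (m-1)})(\rho_{AB}))$ I would bound from below via monotonicity of the quantum mutual information under the local channel $\i_A\otimes\Phi_B$ (equivalently, data processing / strong subadditivity), which yields
\begin{align*}
S\lb(\i_A\otimes\Phi)(\rho_{AB})\rb \geq S(\rho_{AB}) + S(\Phi(\rho_B)) - S(\rho_B),
\end{align*}
and I would then reapply the induction hypothesis to $S(\Phi(\rho_B))$ with $\Phi = \Tm_t^{\otimes (m-1)}$. Assembling these bounds, the terms involving $S(\rho_B)$ collect with coefficient $pq + q^2 - q = q(p+q-1)$, which vanishes thanks to $p+q=1$; what remains is precisely $q\,S(\rho) + p m\,S(\rho_0)$.

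The main obstacle is handling the term $S((\i_A\otimes\Tm_t^{\otimes (m-1)})(\rho_{AB}))$ correctly: the naive subadditivity bound goes in the wrong direction. The essential observation is that the quantity in need of control is the conditional entropy $S(A|B)$, which under a local channel on $B$ can only increase, so mutual information monotonicity is the right tool. Once that inequality is in hand, the induction is a matter of careful bookkeeping, and the cancellation of all the $S(\rho_B)$ contributions is exactly the normalisation $p+q=1$ doing its work.
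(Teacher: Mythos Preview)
Your argument is correct. The induction with the mutual-information data-processing step is exactly the right way to control $S\lb(\i_A\otimes\Phi)(\rho_{AB})\rb$, and your bookkeeping checks out: the $S(\rho_B)$ coefficient $pq+q^2-q=q(p+q-1)$ vanishes, and the $S(\rho_0)$ coefficient collects to $p+(m-1)p(p+q)=mp$.

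As for comparison: the paper does not actually supply a proof here but simply refers to Lemma~8 of Aharonov et al.\ for the maximally-mixed case and asserts that the general $\rho_0$ case goes through identically. Your write-up is therefore more complete than what appears in the paper, and the inductive scheme you use---peeling off one site, concavity on the resulting two-term convex combination, then data processing for $S(A|B)$---is the standard route and essentially what the cited reference does.
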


This lemma shows the entropy produced by $m$ copies of the quantum channel $\Tm_t = e^{t\Lm^{\text{dep}}_r}$ to be lower bounded by $\lb 1-e^{-rt}\rb mS\lb\rho_0\rb$, which tends to $m S\lb\rho_0\rb$ for $t\ra \infty$. 

In the following we use the fact that unitaries, which form the intermediate coding steps, cannot remove entropy from the system. This entropy growth gives a limit on the maximal possible achievable rate after which a faithful decoding is not possible anymore.

\begin{proof}$\lb\text{of Theorem \ref{thm:UnitarySubDCapDepolBound}} \rb$

For fixed $t\geq 0$ assume that $R>0$ is an achievable rate, in the sense of Definition \ref{defn:SubdivisionCap}, for the depolarizing Liouvillian of rate $r\in\R^+$ onto the state $\rho_0\in\D_d$ as in (\ref{equ:depLiou}). Note that by an argument in ~\cite{Kretschmann2004}, which works the same way for the quantum subdivision capacities, it is enough to consider the sequences $n_\nu = R\nu$ and $m_\nu = \nu$ in (\ref{equ:CodingSubdivisionCap}) when testing whether a certain rate $R<\mathcal{Q}_{\unitary}\lb t\Lm\rb$ is achievable. Thus for the sequences  $n_\nu = R\nu$, $m_\nu = \nu$ and quantum channels $\Tm_{\frac{t}{k}}:=e^{\frac{t}{k}\Lm^{\text{dep}}_r}$ we have  
\begin{align}
\inf_{k,\Em,\Dm,\Um_1,\ldots,\Um_k} \left\Vert \i^{\otimes R\nu}_d - \Dm\circ \prod^k_{l=1} \lb\Um_l\circ \Tm_{\frac{t}{k}}^{\otimes \nu}\rb\circ \Em\right\Vert_{\diamond} \leq \epsilon_\nu.
\label{equ:ErrorCoding}
\end{align}
For any fixed coding scheme of length $K\in\N$ achieving the above bound and which is defined by quantum channels $\Em_\nu,\Dm_\nu$ and unitary channels $\Um^\nu_1,\ldots,\Um^\nu_K\in\unitary$, consider the quantum channels $\tilde{\Tm}_{K}^\nu := \prod^{K}_{l=1} \lb\Um^\nu_l\circ \Tm_{\frac{t}{K}}^{\otimes \nu}\rb\circ \Em_\nu$. These map the input state to the state right before the final decoding operation. Applying Lemma \ref{lem:EntropyProduction} for the depolarizing channel $K$ times and using that the entropy is invariant under unitary transformations, we obtain
\begin{align}
S\lb\tilde{\Tm}_{K}^\nu\lb\rho\rb\rb \geq \lb 1-e^{-rt}\rb \nu S\lb\rho_0\rb 
\label{equ:entropyBound}
\end{align} 
for any quantum state $\rho\in \M^{\otimes R\nu}_{2}$. 

Now consider the maximally mixed input state $\sigma = \sum^{2^{R\nu}}_{i=1} \frac{1}{2^{R\nu}}\l|i \rk\lk i\r|$ for some orthonormal basis $\lset\l| i\rk\rset^{2^{R\nu}}_{i=1}$ of $\C^{2^{R\nu}}$. Inserting this state we obtain
\begin{align*}
\nu\log(d)&\geq S\lb\tilde{\Tm}_{K}^\nu\lb \sigma\rb\rb  = \chi\lb \lset\frac{1}{2^{R\nu}}, \tilde{\Tm}_{K}^\nu\lb\l| i\rk\lk i\r|\rb\rset\rb + \sum^{2^{R\nu}}_{i=1} \frac{S\lb\tilde{\Tm}_{K}^\nu\lb \l| i\rk\lk i\r|\rb\rb}{2^{R\nu}} 
\end{align*} 
where the first inequality is the dimension upper bound on the entropy of a $d^\nu$-dimensional system and where we introduced Holevo's $\chi$-quantity~\cite[p. 531]{Nielsen2007}. This is defined for an ensemble of density operators $\lset p_i,\rho_i\rset^N_{i=1}$ as 
\begin{align*}
\chi\lb\lset p_i,\rho_i\rset\rb := S\lb\sum^N_{i=1} p_i\rho_i\rb - \sum^N_{i=1} p_i S\lb\rho_i\rb.
\end{align*}
Using the data processing inequality for the $\chi$-quantity, see ~\cite[p. 602]{Nielsen2007}, and the entropy bound (\ref{equ:entropyBound}), we obtain 
\begin{align}
\nu\log(d) &\geq \chi\lb \lset\frac{1}{2^{R\nu}}, \Dm_\nu\circ\tilde{\Tm}_{K}^\nu\lb\l| i\rk\lk i\r|\rb\rset\rb + \lb 1-e^{-rt}\rb \nu S\lb\rho_0\rb \nonumber\\
&= S\lb \Dm_\nu\circ\tilde{\Tm}_{K}^\nu\lb\sigma\rb \rb - \sum^{2^{R\nu}}_{i=1} \frac{1}{2^{R\nu}} S\lb\Dm_\nu\circ\tilde{\Tm}_{K}^\nu\lb \l| i\rk\lk i\r|\rb\rb + \lb 1-e^{-rt}\rb \nu S\lb\rho_0\rb 
\label{equ:Estimate}
\end{align}

Finally we can apply the Fannes-Audenaert inequality~\cite{Audenaert2007}, i.e. for all quantum states $\rho_1,\rho_2\in\D_d$ with trace distance $\delta = \frac{1}{2}\left\Vert\rho_1 -\rho_2\right\Vert_1$ we have
\begin{align}
| S\lb\rho_1\rb - S\lb\rho_2\rb | \leq \delta\log(d) + H\lb \delta\rb 
\label{equ:FAInequ}
\end{align}
for the binary entropy $H\lb\delta\rb = -\delta\log(\delta) - (1-\delta)\log(\delta)$. Estimating the entropies in (\ref{equ:Estimate}) using (\ref{equ:FAInequ}) and (\ref{equ:ErrorCoding}) leads to
\begin{align*}
S\lb\Dm_\nu\circ\tilde{\Tm}_{K}^\nu\lb \l| i\rk\lk i\r|\rb\rb \leq \frac{1}{2}\epsilon_\nu + 1
\end{align*}
for all $i\in\lset 1,\ldots,2^{R\nu}\rset$ and 
\begin{align*}
S\lb \Dm_\nu\circ\tilde{\Tm}_{K}^\nu\lb\sigma\rb \rb \geq R\nu - \frac{1}{2}\epsilon_\nu - 1.
\end{align*}
Inserting these bounds in (\ref{equ:Estimate}) we obtain 
\begin{align}
\nu\log(d) \geq R\nu + \lb 1-e^{-rt}\rb \nu S\lb\rho_0\rb - \epsilon_\nu - 2
\label{equ:FinalEstimate} 
\end{align} 
where the approximation error fulfills $\epsilon_\nu \ra 0$ as $\nu\ra \infty$ if the coding scheme achieves the rate $R$. Dividing through $\nu$ in (\ref{equ:FinalEstimate}) and taking the limit $\nu\ra\infty$ leads to the bound $R < \log(d) - (1-e^{-rt})S\lb\rho_0\rb$, which finishes the proof.

\end{proof}

Note that the above proof also shows that it is not even possible to transmit classical information with a rate higher than the bound given in Theorem \ref{thm:UnitarySubDCapDepolBound} through the subdivision coding scheme. It is possible to define subdivision capacities for the transmission of classical information through a  quantum dynamical semigroup in a similar way as the quantum subdivision capacities in Definition \ref{defn:SubdivisionCap}. The corresponding capacity for unitary intermediate coding operations would then be an upper bound on $\mathcal{Q}_\unitary$ in general. For a depolarizing Liouvillian onto $\rho_0\in\D_d$, see (\ref{equ:depLiou}), the above proof shows that the classical subdivision capacity is smaller than $\log(d) - (1-e^{-rt})S\lb\rho_0\rb$.  

Applying Theorem \ref{thm:UnitarySubDCapDepolBound} to the completely depolarizing Liouvillian, i.e. the depolarizing Liouvillian onto a maximally mixed state $\rho_0=\frac{\mathbbm{1}_d}{d}\in\D_d$, see (\ref{equ:depLiou}), gives the following corollary.

\begin{cor}[$\mathcal{Q}_\unitary$ arbitrarily small]
Let $r\in\R^+$ be given. For the completely depolarizing Liouvillian $\Lm^{\text{cd}}_r :\M_{d}\ra\M_d$ defined as $\Lm^{\text{cd}}_r\lb\rho\rb := r\lb \text{tr}\lb\rho\rb\frac{\mathbbm{1}_d}{d} - \rho\rb$  we have
\begin{align*}
\mathcal{Q}_\unitary\lb t\Lm^{\text{cd}}_r\rb\leq e^{-rt} \log(d)\,.
\end{align*}
\end{cor}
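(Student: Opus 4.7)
The plan is to recognize that the corollary is a direct specialization of Theorem \ref{thm:UnitarySubDCapDepolBound}. Observe that the completely depolarizing Liouvillian $\Lm^{\text{cd}}_r$ is precisely the depolarizing Liouvillian $\Lm^{\text{dep}}_r$ of the form (\ref{equ:depLiou}) with the particular target state $\rho_0 = \mathbbm{1}_d / d$. Thus the hypotheses of Theorem \ref{thm:UnitarySubDCapDepolBound} are satisfied with no additional work.

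First I would invoke Theorem \ref{thm:UnitarySubDCapDepolBound} to obtain
\begin{align*}
\mathcal{Q}_\unitary\lb t,\Lm^{\text{cd}}_r\rb \leq \log(d) - \lb 1-e^{-rt}\rb S\lb \tfrac{\mathbbm{1}_d}{d}\rb .
\end{align*}
Then I would use that the von Neumann entropy of the maximally mixed state on a $d$-dimensional system is $S(\mathbbm{1}_d/d) = \log(d)$, which is the maximal possible entropy in dimension $d$. Substituting and simplifying,
\begin{align*}
\log(d) - \lb 1-e^{-rt}\rb \log(d) = e^{-rt}\log(d),
\end{align*}
which yields the claimed bound.

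There is essentially no obstacle here, since all of the analytic content -- the entropy production lemma, the Fannes--Audenaert estimate, and the Holevo $\chi$-quantity argument -- has already been absorbed into the statement of Theorem \ref{thm:UnitarySubDCapDepolBound}. The only thing to verify is that the maximally mixed state maximizes $S(\rho_0)$, making this the worst case of the bound, which accounts for why the capacity decays to zero as $t \to \infty$ at the explicit rate $r$ governing the Liouvillian.
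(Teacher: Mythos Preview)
Your proposal is correct and matches the paper's approach exactly: the paper simply notes that the corollary follows by applying Theorem \ref{thm:UnitarySubDCapDepolBound} with $\rho_0=\frac{\mathbbm{1}_d}{d}$, using $S(\mathbbm{1}_d/d)=\log(d)$. Your additional remark about the maximally mixed state being the worst case is accurate but not needed for the proof itself.
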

This corollary shows that, unlike $\mathcal{Q}_\chan$ and $\mathcal{Q}_{\sgroup^\ast}$, the subdivision capacity $\mathcal{Q}_\unitary$ can become arbitrarily small. In the following subsection we will prove lower bounds showing that $\mathcal{Q}_\unitary$ is always strictly larger than zero. 

\subsection{Lower bound via finite subdivisions}
Here we will prove lower bounds on $\mathcal{Q}_\unitary(t\Lm)$ by exploiting the fixed points of the noise Liouvillian $\Lm:\M_d\ra\M_d$, i.e., points $\rho_0\in\D_d$ fulfilling $\Lm(\rho_0)=0$. These lower bounds show that $\mathcal{Q}_\unitary\lb t\Lm\rb>0$ for all $t\in\R^+$ and any noise Liouvillian $\Lm$. Note that almost all noise Liouvillians $\Lm$ lead to time-evolutions $e^{t\Lm}$ which become entanglement breaking eventually after a finite time. This implies that $\mathcal{Q}(e^{t\Lm})=0$ for all $t\geq t_0$. Therefore our lower bounds prove that for almost all $\Lm$ there is a finite time $t$ such that $\mathcal{Q}_\unitary\lb t\Lm\rb>\mathcal{Q}\lb e^{t\Lm}\rb = 0$ holds.

We will start by considering the case of a pure fixed point $\l|\psi\rk\lk \psi\r|\in\D_d$ of the noise Liouvillian $\Lm:\M_d\ra\M_d$. For any number $k\in\N$ of subdivisions in the sense of Definition \ref{defn:SubdivisionCap}, we try to implement the coding scheme used for the proof of Theorem \ref{thm:SemigroupSubCap}. This coding scheme implements the successive decoding and encoding operations used for the proof of Theorem \ref{thm:CPTPSubCap} via unitaries and pure ancillas. Within this coding scheme, these pure ancillas are created using certain depolarizing channels, but as these channels are not unitary, we cannot use them in a coding scheme for $\mathcal{Q}_\unitary\lb t\Lm\rb$. Nevertheless we can implement the coding scheme if we simply prepare sufficiently many copies of the pure fixed point $\l|\psi\rk\lk \psi\r|\in\D_d$ of the noise Liouvillian $\Lm$ in the encoding stage $\Em$ at the beginning of the coding scheme. As fixed points of the noisy time-evolution they are not disturbed, and they can be used as ancillas in the $k$ subdivision steps, see Fig. \ref{fig:UnitaryCodingSchemeFix}. This idea leads to our lower bound on $\mathcal{Q}_\unitary\lb t\Lm\rb$. This may be further improved by ``recycling'' the used ancilla states, but for simplicity we will not pursue this here.

\begin{figure*}
\center
     \includegraphics[width=0.8\textwidth]{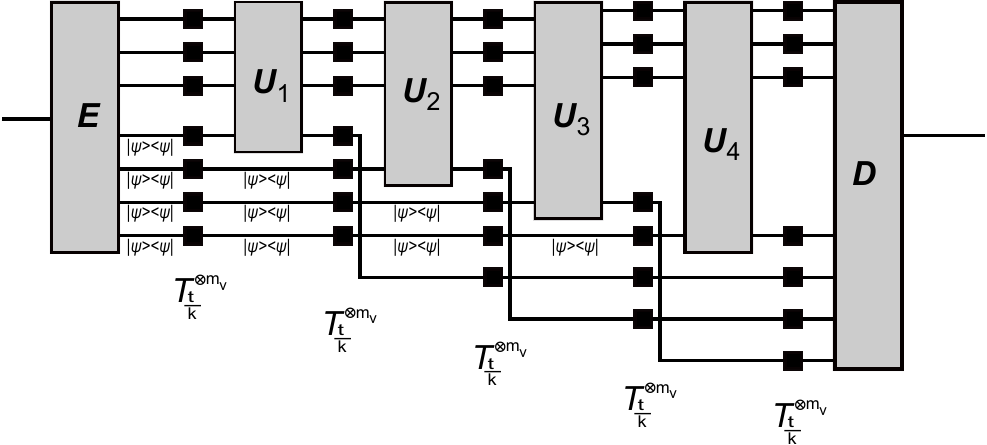}
     \caption{Unitary Coding Scheme for $k=4$, exploiting the pure fixed point $\l|\psi\rk\lk \psi\r|$ of the time-evolution. As $\l|\psi\rk\lk \psi\r|$ is not disturbed by the noise we can introduce enough copies of this state in the beginning to implement the coding scheme from Theorem \ref{thm:CPTPSubCap} with unitaries and the pure states $\l|\psi\rk\lk \psi\r|$. }
     \label{fig:UnitaryCodingSchemeFix}
\end{figure*}

Note that the number of pure ancillas needed to implement the $k$ intermediate coding operations in the proof of Theorem \ref{thm:SemigroupSubCap} depends on the difference 
\begin{align}
\delta_k(t,\Lm) := \log(d)- I^{\text{coh}}\lb \omega^{A'A} ,\i_{A'}\otimes \Tm_{t/k}\rb
\end{align}
for the quantum channel $\Tm_{t/k}:=e^{\frac{t}{k}\Lm}$, with the coherent information from Definition \ref{defn:Entropies} and where $k\in\N$ denotes the number of subdivisions, see Definition \ref{defn:ContQuantCap}. 

For any fixed number $k\in\N$ of subdivisions the proof of Theorem \ref{thm:SemigroupSubCap} gives a coding scheme using pure ancilla qubits at a rate of $\delta_k(t,\Lm)(1+c)$  in each of the $k$ coding steps, where $c\in\R$ denotes the constant introduced in the typical subspace argument preceding the proof of Theorem \ref{thm:SemigroupSubCap}, see Lemma \ref{lem:ChoiMatrSmallEnv}. In the limit of asymptotically many parallel channel uses, the achievable rate $I^{\text{coh}}\lb \omega^{A'A} ,\i_{A'}\otimes \Tm_{t/k}\rb$ from the proof of Theorem \ref{thm:SemigroupSubCap}, goes thus down by a factor of $\lb 1 + k\frac{\delta_k(t,\Lm)(1+c)}{\log(d)}\rb$ because for each faithfully transmitted qubit one needs $k\frac{\delta_k(t,\Lm)(1+c)}{\log(d)}$ local noise channels to transmit the pure ancilla states. This gives 

\begin{thm}[Lower bound with pure fixed points]
\label{thm:lowerBound1}
Let $\Lm:\M_d\ra\M_d$ denote a Liouvillian with a pure fixed point $\l|\psi\rk\lk\psi\r|\in\D_d$. Then we have 
\begin{align*}
\mathcal{Q}_\unitary\lb t\Lm \rb \geq \sup_k \lb\frac{I^{\text{coh}}\lb \omega^{A'A} ,\i_{A'}\otimes \Tm_{t/k}\rb\log(d)}{\log(d)+k\delta_k(t,\Lm)(1+c)}\rb.
\end{align*} 
\end{thm}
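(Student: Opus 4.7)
The plan is to adapt the coding scheme constructed in the proof of Theorem~\ref{thm:SemigroupSubCap} to the more restrictive setting $\chanSet=\unitary$. Recall that that scheme already implements the intermediate encoding/decoding operations via unitaries; its only non-unitary ingredient consists of the two depolarizing channels $\Dm^\nu_{s_\nu}$ and $\tilde{\Dm}^\nu_{s_\nu}$, whose sole purpose is to regenerate, in each of the $K$ intermediate steps, fresh (approximately pure) copies of the ancilla states $|\phi^e_\nu\rangle$ and $|\tilde\phi_\nu\rangle$ needed to realize the unitary dilations of $\tilde\Em_\nu$ and $\tilde\Dm_\nu$. My starting observation is that if $|\psi\rangle\langle\psi|$ is a fixed point of $\Lm$, then every tensor power $(|\psi\rangle\langle\psi|)^{\otimes n}$ is a fixed point of every $\Tm_{t/k}^{\otimes n}=e^{\frac{t}{k}\Lm}^{\otimes n}$, so pure ancillas prepared once at the very beginning of the coding scheme survive all $k$ rounds of noise untouched.

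Concretely, I would fix $k\in\N$ and pick the local rate $R=I^{\text{coh}}(\omega^{A'A},\i_{A'}\otimes \Tm_{t/k})$, which is achievable for $\Tm_{t/k}$ by the decoupling arguments invoked in the previous section; by definition $\delta_k(t,\Lm)=\log(d)-R$. Following the proof of Theorem~\ref{thm:SemigroupSubCap} verbatim, each of the $k$ intermediate blocks needs an ancilla register of size $2^{\nu\delta_k(t,\Lm)(1+c)}$ in order to supply the pure states required by the unitary implementations (\ref{equ:EncodUni}) and (\ref{equ:DecodUni}); summed over the $k$ rounds this amounts to a total of $k\,\nu\,\delta_k(t,\Lm)(1+c)$ ancilla qubits, i.e.\ $k\,\nu\,\delta_k(t,\Lm)(1+c)/\log(d)$ extra uses of the $d$-dimensional noisy memory per parallel block. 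In the new encoding map $\Em$ I would simply prepend $k$ independent copies of the required pure ancilla blocks, each prepared in a tensor product of $|\psi\rangle\langle\psi|$, and then use local unitaries in each intermediate step to swap the appropriate ancilla block into the position previously occupied by the output of the depolarizing channel.

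The rate analysis is then purely arithmetic: per input qubit we now use $\log(d)+k\delta_k(t,\Lm)(1+c)$ noisy $d$-dimensional channel uses instead of $\log(d)$, so the achievable rate becomes
\begin{equation*}
\frac{R\cdot \log(d)}{\log(d)+k\delta_k(t,\Lm)(1+c)}=\frac{I^{\text{coh}}(\omega^{A'A},\i_{A'}\otimes \Tm_{t/k})\,\log(d)}{\log(d)+k\delta_k(t,\Lm)(1+c)}.
\end{equation*}
Optimizing over $k\in\N$ yields the claimed lower bound on $\mathcal{Q}_\unitary(t,\Lm)$.

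The step I expect to be the main obstacle, and where care is warranted, is the verification that the modified coding scheme is genuinely a $\unitary$-scheme in the sense of Definition~\ref{defn:SubdivisionCap}: one must be sure that all intermediate operations $\Cm^\nu_l$ really are unitary channels acting on the enlarged block of size $d^{\nu(1+k\delta_k(t,\Lm)(1+c)/\log(d))}$, and that the ancilla copies used in round $l$ do not interfere with those reserved for later rounds. This is a bookkeeping issue only --- the ancilla register is partitioned once and for all at the encoding stage into $k$ independent sub-blocks of $|\psi\rangle^{\otimes \cdot}$, and the intermediate unitary swaps the $l$-th sub-block into place in round $l$ --- but it must be laid out explicitly to make sure that the approximation (\ref{equ:ApproxCoding}) from the proof of Theorem~\ref{thm:SemigroupSubCap} carries over, with the depolarizing channels replaced by the noiseless fixed-point preparation. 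Once this is in place, the error-estimate telescoping from the proof of Theorem~\ref{thm:SemigroupSubCap} applies without modification and yields the bound.
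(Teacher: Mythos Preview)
Your proposal is correct and follows essentially the same approach as the paper: prepare all needed pure ancillas as tensor powers of the fixed point $|\psi\rangle\langle\psi|$ in the initial encoding $\Em$, observe that they are untouched by the noise $\Tm_{t/k}^{\otimes m_\nu}$, and then replace the depolarizing channels of the $\sgroup^\ast$-scheme by unitary swaps that bring a fresh ancilla block into place in each of the $k$ rounds. The paper's rate computation and yours coincide, including the factor $(1+c)$ arising from Lemma~\ref{lem:ChoiMatrSmallEnv}, and the bookkeeping concern you flag (partitioning the ancilla register into $k$ independent sub-blocks) is exactly the content of the paper's Figure~\ref{fig:UnitaryCodingSchemeFix}.
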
 

Note that by continuity 
\begin{align*}
I^{\text{coh}}\lb \omega^{A'A} ,\i_{A'}\otimes \Tm_{t/k}\rb\ra \log(d)>0
\end{align*}
as $k\ra \infty$. Therefore we have $\mathcal{Q}_\unitary\lb t\Lm \rb>0$ for any Liouvillian $\Lm$ with a pure fixed point and any $t\in\R^+$.

To generalize the above scheme to noise Liouvillians $\Lm:\M_d\ra\M_d$ with arbitrary fixed points $\rho_0\in\D_d$ fulfilling $S(\rho_0)<\log(d)$ we will use Schumacher compression~\cite{Schumacher1995}, see also~\cite[Theorem 12.6]{Nielsen2007}. Given many independent copies $\rho^\nu_0$ of a quantum state $\rho_0\in\D_d$, the theory of typical subspaces~\cite{Devetak2005,Bennett2012,Winter1999} can be applied. For any $\delta>0$ there exists a typical projector $\Pi_{\delta}$ onto a $\delta$-typical subspace $\mathfrak{T}_\delta$ with respect to $\rho_0$, i.e.:

\begin{enumerate}
\item We have $\text{tr}\lb\Pi_\delta \rho_0^{\otimes \nu}\rb\geq 1-2^{-\nu c'\delta^2}$ for some constant $c'\in\R$ independent of $\nu$. 
\item Furthermore $\text{tr}\lb\Pi_\delta\rb\leq 2^{\nu\lb S(\rho_0)+c\delta\rb}$ for some constant $c\in\R$ independent of $\nu$.
\end{enumerate}

These properties imply that we can write
\begin{align*}
\rho_0^{\otimes \nu} = p_\nu \rho_{\text{typ}} + (1-p_\nu)\sigma
\end{align*} 
for $p_\nu = \text{tr}\lb\Pi_\delta \rho_0^{\otimes \nu}\rb$, a state $\rho_{\text{typ}}$ supported on the typical subspace and some state $\sigma\in\D\lb\mathfrak{T}^\perp_\delta\rb$ not supported on the typical subspace. The property 1.\ from above implies that $p_\nu\ra 1$ exponentially as $\nu\ra \infty$. By 2.\ from above we can choose a unitary $U$ mapping the typical subspace onto vectors of the form $\l|\cdot\rk\otimes \l| 0\rk$ where the first tensor factor contains the compressed information on $\nu\lb S(\rho_0)+c\delta\rb$ qubits and the second tensor factor is in a pure state. This yields
\begin{align*}
U\rho_0^{\otimes\nu}U^\dagger =p_\nu \rho_{\text{compr}}\otimes \l| 0\rk\lk 0\r| + (1-p_\nu)U\sigma U^\dagger
\end{align*} 
Therefore Schumacher compression generates a pure state $\l| 0\rk\lk 0\r|$ on the space of $\nu\lb \log(d) - S\lb\rho_0\rb - c\delta\rb$ qubits with fidelity exponentially good in $\nu$. We will use these states as almost pure ancillas. The idea of the following coding scheme is to prepare sufficiently many copies of the fixed point $\rho_0\in\D_d$ of the noise Liouvillian $\Lm$ in the beginning of the scheme. As fixed points they are not disturbed by the noisy time-evolution and their entropy does not increase with time. We can apply Schumacher compression on many copies of $\rho_0$ to generate almost pure ancillas right before they are needed. The maximal rate of generating these almost pure ancilla qubits can be derived from the dimension of the typical subspace introduced above. As $\delta>0$ for the Schumacher compression protocol can be chosen arbitrarily small, the rate of generating almost pure ancilla qubits is $\log(d)-S(\rho_0)$ per local noise channel in the fixed point $\rho_0$. Note that we cannot create any pure ancillas from a maximally mixed fixed point $\rho_0 = \frac{\mathbbm{1}_d}{d}$ and this coding scheme does not work in that case.

As in the case of pure fixed points we can compute the rates of the above coding scheme. This gives: 

\begin{thm}[Lower bound with fixed points]
\label{thm:lowerBound2}
Let $\Lm:\M_d\ra\M_d$ denote a Liouvillian with fixed point $\rho_0\in\D_d$. Then we have 
\begin{align*}
&\mathcal{Q}_\unitary\lb t\Lm \rb \geq \sup_k \lb\frac{I^{\text{coh}}\lb \omega^{A'A} ,\i_{A'}\otimes \Tm_{t/k}\rb\lb \log(d)-S(\rho_0)\rb}{\log(d)-S(\rho_0)+k\delta_k(t,\Lm)(1+c)}\rb.
\end{align*} 

\end{thm}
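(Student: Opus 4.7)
The plan is to mimic the proof of Theorem \ref{thm:lowerBound1}, replacing the pure-fixed-point ancilla source by Schumacher compression applied to tensor powers of $\rho_0$. Fix the number of subdivisions $k\in\N$. The construction used in Theorem \ref{thm:SemigroupSubCap}, applied to the local channel $\Tm_{t/k}$, already provides an encoding, a decoding, and $k-1$ intermediate coding channels which are realised by unitaries on a joint information-and-ancilla register. Per information-carrying channel use, this implementation requires, at each of the $k$ subdivision steps, an ancilla of dimension $2^{\delta_k(t,\Lm)(1+c)}$, hence a total of $k\delta_k(t,\Lm)(1+c)$ pure-ancilla qubits to supply the unitary realisations of all encoding, decoding, and intermediate maps.

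First I would enlarge the register into two blocks: an information-carrying block of $\nu$ system copies, and an ancilla block containing many i.i.d.\ copies of $\rho_0$. Because $\Lm(\rho_0)=0$, the ancilla block is exactly invariant under $\Tm_{t/k}^{\otimes \ell}$, so its state and entropy are untouched during the storage intervals. Just before each intermediate coding step I would apply a Schumacher compression unitary to a fresh chunk of $n$ copies of $\rho_0$: as recalled in the paragraph preceding the theorem, this produces, up to an error that is exponentially small in $n$ in trace distance, a tensor product of a compressed register and a pure-ancilla register of dimension $2^{n(\log(d)-S(\rho_0)-c\delta)}$. Choosing $\delta>0$ small and $n$ large, this yields almost pure ancillas at asymptotic rate $\log(d)-S(\rho_0)$ per $\rho_0$-copy consumed, which can then be fed into the unitary implementations of the encoding, decoding, and intermediate maps borrowed from Theorem \ref{thm:SemigroupSubCap}.

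Next I would bound the cumulative error. Each Schumacher compression step contributes a trace-norm error exponentially small in the number of $\rho_0$-copies it consumes, while the local decoupling-based scheme contributes its own error which goes to zero in the block length $\nu$. A telescoping triangle-inequality argument along the lines of Theorems \ref{thm:CPTPSubCap} and \ref{thm:SemigroupSubCap}, with $k$ held fixed before taking $\nu\ra\infty$, absorbs the $k$ intermediate approximations and shows that the overall diamond-norm error in (\ref{equ:CodingSubdivisionCap}) still tends to zero. Each information qubit transmitted at the local rate $I^{\text{coh}}(\omega^{A'A},\i_{A'}\otimes\Tm_{t/k})$ requires $k\delta_k(t,\Lm)(1+c)$ pure ancilla qubits, and each ancilla qubit costs $1/(\log(d)-S(\rho_0))$ further channel uses of $\rho_0$, so the asymptotic overall rate is $I^{\text{coh}}(\omega^{A'A},\i_{A'}\otimes\Tm_{t/k})$ divided by $1+k\delta_k(t,\Lm)(1+c)/(\log(d)-S(\rho_0))$, which is exactly the quantity appearing in the theorem. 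Taking the supremum over $k\in\N$ finishes the proof.

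The main obstacle I would expect is the bookkeeping: one must route the $\rho_0$-copies through the scheme so that they are preserved until used (which works because they are genuine Liouvillian fixed points), ensure that the compressed registers and the information register can be coherently recoupled by a single unitary at each subdivision, and check that the accumulated errors of the $k$ Schumacher steps and the local decoupling step do not outpace the rate gain as $\nu\ra\infty$. A minor subtlety is that the scheme only produces ancillas when $\log(d)-S(\rho_0)>0$, so the bound degenerates to $0$ in the maximally mixed case, consistently with the upper bound of Theorem \ref{thm:UnitarySubDCapDepolBound}.
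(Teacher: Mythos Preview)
Your proposal is correct and follows essentially the same approach as the paper: prepare many copies of the fixed point $\rho_0$ in the encoding stage, leave them invariant under the noise, apply Schumacher compression to extract almost pure ancilla qubits at asymptotic rate $\log(d)-S(\rho_0)$ per copy, and feed those ancillas into the unitary implementation of the coding maps from Theorem~\ref{thm:SemigroupSubCap}. Your rate accounting and your remark on the degeneracy at $S(\rho_0)=\log(d)$ agree with the paper's; if anything, your telescoping error analysis is spelled out in slightly more detail than the paper's own sketch.
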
  

Note that Theorem \ref{thm:lowerBound1} is a special case of Theorem \ref{thm:lowerBound2} for the case $S(\rho_0)=0$. As before, this proves that $\mathcal{Q}_\unitary\lb t\Lm\rb >0$ for any Liouvillian $\Lm$ with a fixed point that is not maximally mixed and any $t\in\R^+$. 

The coding schemes introduced here may be improved further by reusing the ancillas again. When the noisy time-evolution generated by the noise Liouvillian $\Lm$ has a limit point of not maximal entropy, i.e., there exists a state $\rho_\infty$ fulfilling $S(\rho_\infty)<\log(d)$ such that $e^{t\Lm}\lb\rho\rb\ra \rho_\infty$ as $t\ra\infty$, this can be used to ``cool'' the system in the sense of \cite{BenOr2013}. We could for instance let the used ancillas be affected by the noise, which drives them towards $\rho_\infty$. If they are close enough to $\rho_\infty$ we can again use Schumacher compression and re-use them to obtain some pure ancillas to be used in the coding scheme. 

The above theorem proves $\mathcal{Q}_\unitary\lb t\Lm\rb>0$ only when the fixed point $\rho_0\in\D_d$ of the noise Liouvillian $\Lm$ to be not maximally mixed, i.e., $S(\rho_0)<\log(d)$. To obtain a similar coding scheme yielding a non-zero rate for a maximally mixed fixed point $\rho_0 = \frac{\mathbbm{1}_d}{d}$ we could again introduce pure ancillas in the encoding stage $\Em$ of the coding scheme. As the noise acts on these ancillas their entropy will grow but never reaches the maximum $\log(d)$ in finite time. As the entropy grows the number of pure ancillas that we may obtain via Schumacher compression decreases. Still when enough ancillas are introduced in the beginning of the protocol, we can use Schumacher compression to obtain all the pure ancillas we need. Therefore a coding scheme as before can be implemented showing that $\mathcal{Q}_\unitary(t\Lm)>0$ even in this case.

\section{Continuous-time coding schemes}
\label{sec:6}

In the previous sections we studied the quantum subdivision capacities $\mathcal{Q}_\chanSet$, Definition \ref{defn:SubdivisionCap}, with respect to different subsets $\chanSet\subseteq \chan$ of quantum channels, from which the intermediate coding channels $\Cm_l$  were taken, see (\ref{equ:CodingSubdivisionCap}). These capacities are all discrete in the sense that the infimum in Definition \ref{defn:ContQuantCap} captures only countably many subdivisions. In this section we present a variation of the quantum subdivision capacities to account for ``time-continuous'' coding schemes. For a rough idea, consider quantum systems undergoing a noisy time-evolution generated by some Liouvillian $\Lm:\M_d\ra\M_d$. To quantify the optimal rate of information transmission using this quantum system and ``time-continuous'' coding we consider the limit of infinitely many copies of the system and allow for additional ``correction'' generated by Liouvillians $\Lm_c:\M_{d^m}\ra\M_{d^m}$ acting on all $m$ copies of the system. We will show, the time-evolution generated by the noise Liouvillian $\Lm$ together with that \emph{coding Liouvillian} $\Lm_c$ acting on many copies of the system can improve the transmission of quantum information compared to the time-evolution generated by the noise $\Lm$ alone.     

Similar schemes to use supplementary Liouvillians to improve the storage time of quantum memories have been studied before ~\cite{Pastawski2009,Pastawski2011}, albeit not in the asymptotic setting. 

We will state our initial definition in the most general form for time-dependent Liouvillians and this definition will include the subdivision capacities from  Definition \ref{defn:SubdivisionCap} when applied to semigroups as in Sec.\ref{sec:4}. Later we will provide an example of a time-\emph{in}dependent Liouvillian and a scheme, which can improve the transmission of quantum information using only a time-\emph{in}dependent coding Liouvillian.

\subsection{Continuous quantum capacity}
\label{sec:6.1}
Consider a time-dependent Liouvillian~\cite{Rivas2012}, i.e. a map $\Lm:\R^+\times\M_d\ra\M_d$ of the form 
\begin{align}
\Lm\lb t,\rho\rb =& -i\lbr H(t),\rho\rbr + \sum^N_{k=1}\bigg{(} A_k(t)\rho A_k(t)^\dagger -\frac{1}{2}A_k(t)^\dagger A_k(t)\rho - \frac{1}{2}\rho A_k(t)^\dagger A_k(t)\bigg{)}
\label{equ:TDLiouvillian}
\end{align}
where $H:\R^+\ra \mathcal{H}_d$ is a piecewise continuous map into the set of Hermitian matrices and $A_k:\R^+\ra \M_d$ are piecewise continuous maps into $d\times d-$matrices. In the following, for any $t\geq 0$, we will write $\Lm(t):\M_d\ra\M_d$ to denote the map $\Lm(t,\cdot)$ as defined above. We will denote the set of all time-dependent Liouvillians (\ref{equ:TDLiouvillian}) as $\mathfrak{TDL}$. Note that the case (\ref{equ:Liouvillian}) from before corresponds to $\Lm(t)\equiv\Lm$.       

For a quantum state $\rho_0\in\D_d$ the solution of
\begin{align}
\frac{d}{dt}\rho(t) = \Lm(t)\rho(t)
\label{equ:Master}
\end{align} 
with the initial condition $\rho(0)=\rho_0$ describes the time-evolution of the initial state $\rho_0$ under the time-dependent Liouvillian $\Lm$ of the form (\ref{equ:TDLiouvillian}), see ~\cite{Rivas2012}. The equation (\ref{equ:Master}) is called the master equation of the evolution generated by the Liouvillian $\Lm$. For each $t\geq 0$ we may define a quantum channel $\Tm_t :\M_d\ra \M_d$ mapping each initial state $\rho_0$ to its time-evolved state $\rho(t)=\Tm_t\lb\rho_0\rb$. This quantum channel is $\Tm_t\lb\rho\rb = T\exp\lb \int^t_0 \Lm\lb t'\rb dt'\rb\lb\rho\rb$, where $T\text{exp}$ denotes the time-ordered exponential function~\cite{Rivas2012}. 

To define a capacity for transmitting quantum information over a system affected by continuous noise $\Lm$, we consider the limit of many copies of the system, each of them affected by the noise independently, as in Definition \ref{defn:QuantCap}. It can be easily seen that the global noise resulting from taking a tensor power of $m$ systems of dimension $d$ each affected by the local noise Liouvillian $\Lm:\M_d\ra\M_d$ is again generated by a time-dependent noise Liouvillian denoted by $\Lm^{\Ltimes m} :\R^+\times \M_{d^m}\ra\M_{d^m}$. This Liouvillian can be expressed in terms of the 'local' noise Liouvillian $\Lm$ via 
\begin{align}
\Lm^{\Ltimes m}\lb t\rb := \sum^m_{i=1} \i_{m\setminus i} \otimes \Lm\lb t\rb 
\label{equ:TensorLiou}
\end{align}
where $\i_{m\setminus i} \otimes \Lm\lb t\rb: \M^{\otimes m}_d\ra \M^{\otimes m}_d$ denotes the operator acting as the identity on all tensor factors except the one indexed by $i\in\lset 1,\ldots,m\rset$, where $\Lm(t)$ is applied.     

As for the subdivision capacities we will state our definition in the most general framework by considering arbitrary subsets $\chanSet\subseteq \mathfrak{TDL}$ of time-dependent coding Liouvillians. 

\begin{defn}[Continuous quantum capacity $\mathcal{Q}_\chanSet^\text{cont}$]
\label{defn:ContQuantCap}

For $\chanSet\subseteq \mathfrak{TDL}$ the $\chanSet-$\textbf{continuous quantum capacity} of a time-dependent Liouvillian $\Lm: \R^+\times\M_d\ra \M_d$, see (\ref{equ:TDLiouvillian}), is defined as 
\begin{align*}
\mathcal{Q}^\text{cont}_\chanSet\lb t\Lm\rb := \sup\lset R\in \R :\text{ R achievable rate}\rset
\end{align*}
where a rate $R\in\R^+$ is called achievable if there exist sequences $\lb n_\nu\rb^\infty_{\nu=1},\lb m_\nu\rb^\infty_{\nu=1}$ such that $R = \limsup_{\nu\ra \infty}\frac{n_\nu}{m_\nu}$ and 
\begin{align*}
\inf &\left\Vert \i^{\otimes n_\nu}_2 - \Dm\circ T\exp\lb \int^t_0 \Lm^{\Ltimes m_\nu}\lb t'\rb + \Lm_c\lb t'\rb  dt'\rb \circ \Em\right\Vert_{\diamond} \ra 0 \text{ as } \nu\ra \infty.
\end{align*}
Here the infimum is over arbitrary encoding and decoding quantum channels $\Em:\M^{\otimes n_\nu}_{2}\ra \M^{\otimes m_\nu}_{d}$ and $\Dm:\M^{\otimes m_\nu}_{d}\ra \M^{\otimes n_\nu}_{2}$ and appropriate time-dependent coding Liouvillians $\Lm_c\in \chanSet$ from the chosen subset.
\end{defn} 

Using the ideas from the proof of Theorem \ref{thm:SemigroupSubCap}, i.e., that the $\sgroup^\ast$-quantum subdivision capacity on a $d$-dimensional system is $\log(d)$, we can also show that $\mathcal{Q}^\text{cont}_\mathfrak{TDL}\lb t\Lm\rb = \log(d)$ for any $t\in\R^+$ and any time-dependent noise Liouvillian $\Lm:\M_d\ra\M_d$ of the form (\ref{equ:TDLiouvillian}). To see this consider the time-evolution according to the Liouvillian $\Lm$ which can be subdivided into quantum channels of the form  
\begin{align*}
\Tm_{t_1,t_2} = T\exp\lb \int^{t_2}_{t_1} \Lm\lb t'\rb dt'\rb.
\end{align*}

For any $\delta>0$ there exists, by continuity, a $K\in\N$ such that every quantum channel $\Tm_{\frac{t(l-1)}{K},\frac{tl}{K}}$ of the time-evolution fulfills $I^{\text{coh}}\lb \omega^{A'A} ,\i_{A'}\otimes \Tm_{\frac{t(l-1)}{K},\frac{tl}{K}}\rb> \log(d)-\delta$. Thus we can construct a coding scheme as in the proof of Theorem \ref{thm:SemigroupSubCap} achieving the rate $R=\log(d)-\delta$ and which can be written as
\begin{align*}
\Dm_\nu\circ \prod^K_{l=1} \lb\Cm^\nu_l\circ \Tm^{\otimes m_\nu}_{\frac{t(l-1)}{K},\frac{tl}{K}}\rb\circ \Em_\nu
\end{align*} 
where $m_\nu$ is a sequence denoting the number of channel uses and $\Dm_\nu,\Em_\nu,\Cm^\nu_l$ are coding channels in the sense of Definition \ref{defn:SubdivisionCap}. As  $\Cm^\nu_l\in\sgroup^\ast$ there are time-dependent Liouvillians $\Lm^\nu_l$ such that $\Cm^\nu_l = T\exp\lb \int^s_0 \Lm^\nu_l\lb t'\rb dt'\rb$ for a time $s\in\R^+$ that can be chosen arbitrarily small when only the strength $\|\Lm^\nu_l\|$ is chosen high enough. Now we are finished by approximating
\begin{align*}
\prod^K_{l=1} \lb\Cm^\nu_l\circ \Tm^{\otimes m_\nu}_{\frac{t(l-1)}{K},\frac{tl}{K}}\rb \simeq T\exp\lb \int^t_0 \Lm^{\Ltimes m_\nu}\lb t'\rb + \Lm_c\lb t'\rb  dt'\rb
\end{align*}
with a time-dependent coding Liouvillian $\Lm_c$ that turns on the Liouvillians $\Lm^\nu_l$ implementing the coding maps $\Cm^\nu_l$ from above at the right times. To get a vanishing error in Definition \ref{defn:ContQuantCap} the strength $\|\Lm_c\|$ can be chosen arbitrarily high compared to that of $\Lm$.

There are some obvious operational restrictions one might impose on the set $\chanSet$ of allowed coding Liouvillians. The first possibility is to put a bound on the strength of the coding Liouvillians in Definition \ref{defn:ContQuantCap}, e.g. $\left\Vert\Lm(t)\right\Vert\leq c$ for all $t\in\R^+$ with a constant $c>0$. Then it would not be possible to generate arbitrarily pure ancilla states on arbitrarily small time-scales, as in the proof of Theorem \ref{thm:SemigroupSubCap}. A second possibility is to restrict to time-\emph{in}dependent coding Liouvillians $\Lm_c(t)\equiv\Lm_c$ in Definition \ref{defn:ContQuantCap}, so that it is not obvious anymore how to implement a coding scheme as in the proof of Theorem \ref{thm:SemigroupSubCap}. A third possibility would be to impose locality constrains on the coding Liouvillians $\Lm_c$ in Definition \ref{defn:ContQuantCap}. Each of the above possibilities would lead to a continuous quantum capacity and it is a priori not clear how they behave and when they coincide. We will now focus on a combination of the last two cases and consider an example of a coding scheme with a time-\emph{in}dependent and local coding Liouvillian $\Lm_c$, specializing Definition \ref{defn:ContQuantCap}, for a time-\emph{in}dependent noise Liouvillian $\Lm$ of a special form. 

\subsection{Coding schemes with quasi-local and time-independent coding Liouvillians}
\label{sec:6.2}
In this section we will discuss how to construct continuous-time coding schemes from quantum error correcting codes for a particular class of noisy time-evolutions. Consider a $d$-dimensional physical system affected by noise given by a quantum channel $\Tm:\M_d\ra\M_d$. If the noise channel acts on the system continuously in time it corresponds to a continuous time-evolution generated by a time-\emph{in}dependent local noise Liouvillian $\Lm:\M_d\ra\M_d$ of the form
\begin{align}
\Lm\lb \rho\rb = \Tm\lb\rho\rb -\rho .
\label{equ:SpecialLiou}
\end{align}  

An $(n,m)$-quantum error correcting code~\cite[pp. 435]{Nielsen2007} for the noise $\Tm$ is formally defined as a $2^n$-dimensional subspace(``codespace'') of the $d^m$-dimensional Hilbert space corresponding to $m$-copies of a given system. We will denote by $\Vm:\M_{2^n}\ra\M_{d^m}$ an isometric embedding, which encodes $n$ 'logical' qubits into the codespace on $m$-copies of the system. Note that the overall noise affecting $m$ copies of the physical system is generated by the Liouvillian

\begin{align}
\Lm^{\Ltimes m} = \Tm^{\Ltimes m} - m \i_{2^m}\,.
\label{equ:LiovillianDep5}
\end{align}

When the noise in each system is generated by the Liouvillian $\Lm$ from (\ref{equ:SpecialLiou}). When the linear map $\Tm^{\Ltimes m}$ acts on the codespace it disturbs the encoded states, taking them outside the codespace. We denote by $\Rm:\M_{d^m}\ra\M_{d^m}$ a recovery quantum channel fulfilling 

\begin{align}
\Rm\circ \Vm = \Vm
\label{equ:Recovery2}
\end{align}
which corrects these errors by mapping the disturbed states back to the codespace, i.e., satisfies

\begin{align}
\Rm\circ \Tm^{\Ltimes m}\circ\Vm = m\Vm.
\label{equ:ChannelErrCorr2}
\end{align}

By assuming the existence of an $(n,m)$-quantum error correcting code we are implicitly making assumptions about the noise channel $\Tm$. However, there are relevant cases of noise channels, where such a code exists. An important instance are channels with unitary Kraus operators such that there is a stabilizer code correcting these unitary errors. For these codes the recovery operation $\Rm$ is given by a projective measurement of the error syndrome and a unitary error correction conditioned on the measured error syndrome. An important example of a channel with unitary Kraus operators is the depolarizing channel $\Tm_{\text{dep}}:\M_{2}\ra \M_{2}$  defined as 
\begin{align}
\Tm_{\text{dep}}\lb\rho\rb = \frac{1}{3}\lb X\rho X+Y\rho Y+Z\rho Z\rb 
\label{equ:DepChannel}
\end{align}
for the Pauli matrices 
\begin{align*}
X = \begin{pmatrix}
0 & 1 \\ 1 & 0
\end{pmatrix}
\hspace*{0.3cm}
Y = \begin{pmatrix}
0 & -i \\ i & 0
\end{pmatrix}
\hspace*{0.3cm}
Z = \begin{pmatrix}
1 & 0 \\ 0 & -1
\end{pmatrix}.
\end{align*}
The error introduced by this channel can be corrected using the 5-qubit stabilizer code~\cite[pp. 468]{Nielsen2007} and we present more details in the example at the end of this section.
 
We will now introduce a continuous coding scheme in the sense of Definition \ref{defn:ContQuantCap} by implementing the recovery operation $\Rm$ of the quantum error correcting code continuously in time. The conditions (\ref{equ:Recovery2}) and (\ref{equ:ChannelErrCorr2}) guarantee that the recovery operation $\Rm$ corrects any error introduced by $\Tm^{\Ltimes m}$ with high fidelity. The time-evolution generated by the sum of the Liouvillian $\Lm$ corresponding to $\Tm^{\Ltimes m}$ and a coding Liouvillian, in the sense of Definition \ref{defn:ContQuantCap}, implementing the recovery operation $\Rm$ can be thought of as applying $\Tm^{\Ltimes m}$ or $\Rm$ very fast after another. By the assumptions made it seems reasonable that by making the rate of the recovery operation high enough any error introduced by the noise can be corrected. 
\newpage 
To make this intuition more precise we prove the following

\begin{thm}[Continuous-time quantum error correction]
\label{thm:ExplicitCodingLiou}

Let $\Tm:\M_d\ra\M_d$ be a quantum channel, such that there exists an $(n,m)$-quantum error correcting code with an isometric embedding $\Vm:\M_{2^n}\ra\M_{d^m}$ and a recovery quantum channel $\Rm:\M_{d^m}\ra\M_{d^m}$ fulfilling the conditions (\ref{equ:Recovery2}) and (\ref{equ:ChannelErrCorr2}). Furthermore let $\chanSet\subseteq\mathfrak{TDL}$ be a set of coding Liouvillians, see Definition \ref{defn:ContQuantCap}, containing $\lb r\lb \Rm -\i_{d^m}\rb\rb^{\Ltimes k}$ for all $k\in\N$ and $r\in\R^+$. 

Then for the noise Liouvillian $\Lm:\M_d\ra\M_d$ of the form $\Lm = \Tm -\i$ and any $t\in\R^+$ we have 
\begin{align}
\mathcal{Q}^\text{cont}_\chanSet\lb t\Lm\rb \geq \frac{n}{m}.
\end{align} 
\end{thm}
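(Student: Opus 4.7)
The plan is to implement the quantum error correcting code continuously in time. Take $\nu$ independent copies of the $(n,m)$ code, so $n_\nu = n\nu$, $m_\nu = m\nu$, and encode with $\Em = \Vm^{\otimes \nu}$. Since $\Vm$ is an isometry, there exists a CPTP decoding $\Dm$ with $\Dm \circ \Vm^{\otimes\nu} = \i^{\otimes n\nu}_2$ (e.g.\ project onto the codespace, invert the isometry, and trace out whatever is left). As the coding Liouvillian I would use $\Lm_c = r_\nu (\Rm - \i_{d^m})^{\Ltimes \nu}$, which is in $\chanSet$ by hypothesis; the free parameter $r_\nu > 0$ is chosen at the end. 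Since the joint evolution factorises across blocks, the task reduces to showing that the per-block evolution $\Phi_{r_\nu}^{(t)} := e^{t(\Lm^{\Ltimes m} + r_\nu (\Rm - \i))}$ approximates the identity on code states when $r_\nu$ is large, with error controllable uniformly in $\nu$.

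To quantify this, I would combine a Lie--Trotter splitting with a quantum Zeno-type estimate. Subdivide $[0,t]$ into $N$ intervals and approximate
\begin{equation*}
\Phi_{r_\nu}^{(t)} \approx \left(e^{\frac{t}{N} r_\nu (\Rm - \i)} \circ e^{\frac{t}{N}\Lm^{\Ltimes m}}\right)^{N},
\end{equation*}
picking up a Trotter error of the usual order per step. On a code state one expands
\begin{equation*}
e^{\frac{t}{N}\Lm^{\Ltimes m}}(\Vm(\rho)) = \Vm(\rho) + \tfrac{t}{N}(\Tm^{\Ltimes m} - m\,\i)(\Vm(\rho)) + O((t/N)^2).
\end{equation*}
Now the conditions (\ref{equ:Recovery2}) and (\ref{equ:ChannelErrCorr2}) combine to give the crucial identity $\Rm \circ (\Tm^{\Ltimes m} - m\,\i) \circ \Vm = 0$, i.e.\ $\Rm \circ \Lm^{\Ltimes m} \circ \Vm = 0$. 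The factor $e^{\frac{t r_\nu}{N}(\Rm - \i)}$ converges, as $tr_\nu/N \to \infty$, to the spectral projection onto the fixed-point subspace of $\Rm$, which contains $\Vm(\M_{2^n})$; moreover the single application of $\Rm$ already annihilates the first-order error term. Hence per step the output returns to the codespace up to $O((t/N)^2)$ plus a Zeno residue that decays in $tr_\nu/N$.

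Iterating over the $N$ steps, the accumulated per-block error is $O(t^2/N)$ once $r_\nu$ is chosen so that the Zeno residue is subdominant, e.g.\ $r_\nu \ge N^2$. The composition over the $\nu$ independent blocks yields a total diamond-norm error at most $\nu$ times the per-block error (this is the standard telescoping bound $\|T^{\otimes\nu} - S^{\otimes\nu}\|_\diamond \le \nu \|T-S\|_\diamond$). Thus choosing $N = N_\nu$ and $r_\nu$ to grow fast enough with $\nu$ (say $N_\nu = \nu^2$, $r_\nu = \nu^5$) the error $\nu \cdot t^2 / N_\nu$ vanishes. Finally $\Dm$ maps the preserved code state back to the logical input, completing the coding scheme and showing that the rate $n/m$ is achievable.

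The main obstacle is making the Zeno/adiabatic limit quantitative in the diamond norm with explicit control of all constants, in particular decoupling the scaling of $r_\nu$ from $\nu$ in such a way that the first-order cancellation suffices. A subtlety is that $e^{s(\Rm-\i)}$ only tends to the Cesàro projection onto the fixed points of $\Rm$, so bounding the finite-$s$ correction and showing that it is overwhelmed by the gap of $\Rm - \i$ on the orthogonal subspace takes some care; this is essentially a Lindbladian adiabatic theorem and, while standard, is where the technical effort sits.
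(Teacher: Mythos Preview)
Your overall strategy---encode with $\Vm$, add the coding Liouvillian $r(\Rm-\i_{d^m})$, and let $r$ become large---is the same as the paper's. The execution, however, differs substantially, and your Trotter route contains a gap.

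The paper avoids Trotter and Zeno entirely. It writes
\[
\Rm\circ e^{t(\Lm^{\Ltimes m}+r(\Rm-\i))}\circ\Vm
= e^{-t(r+m)}\,\Rm\circ\sum_{k\ge 0}\frac{t^k}{k!}\bigl(\Tm^{\Ltimes m}+r\Rm\bigr)^k\circ\Vm
\]
and observes that any word in $\{\Tm^{\Ltimes m},\Rm\}$ in which no two $\Tm^{\Ltimes m}$'s are adjacent acts on $\Vm$ (after the final $\Rm$) exactly as $m^{k-\ell}\Vm$, by repeated use of (\ref{equ:Recovery2}) and (\ref{equ:ChannelErrCorr2}). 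Counting such words gives the explicit lower bound
\[
\alpha(t,r)\;\ge\; e^{-t(r+m)}\sum_{k\ge 0}\frac{t^k}{k!}\sum_{\ell=0}^{k} m^{k-\ell}r^{\ell}\binom{\ell+1}{k-\ell}\;=:\;f\bigl(mt,\tfrac{r}{m}\bigr),
\]
which can be evaluated in closed form and tends to $1$ as $r\to\infty$ for every fixed $t$. This is completely elementary---no operator splitting, no spectral gap, no adiabatic theorem. Because the per-block fidelity already tends to $1$ in $r$ alone, the paper simply takes the \emph{constant} sequences $n_\nu=n$, $m_\nu=m$; there is no need to let $\nu\to\infty$ or to couple $r$ to $\nu$ at all.

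The gap in your argument is the Trotter step. The commutator $[\Lm^{\Ltimes m},\,r(\Rm-\i)]$ has norm of order $r$, so the accumulated Trotter error over $N$ steps is $O(rt^2/N)$, not $O(t^2/N)$. With your proposed scaling $r_\nu\ge N_\nu^2$ this bound diverges rather than vanishes, so the ``accumulated per-block error is $O(t^2/N)$'' claim is not justified by Trotter. You are right that a Zeno/strong-damping result would rescue the argument---one expects the effective error on the codespace to be $O(1/r)$ for large $r$---but establishing this requires exactly the careful adiabatic analysis you flag as the main obstacle, and it is not a consequence of the first-order Trotter splitting you wrote down. The paper's combinatorial expansion sidesteps this difficulty entirely and yields an explicit, closed-form fidelity bound.
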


\begin{proof}

Note that in the sense of Definition \ref{defn:ContQuantCap} we use the encoding map $\Em=\Vm$. Because $\Vm$ is an isometric embedding, a decoding map $\Dm$ can be chosen as the projection mapping the encoded qubits back to the corresponding logical qubit and mapping states not in the code to an error state. 

The noise acting on the codespace is given by $\Lm^{\Ltimes m}$, see (\ref{equ:LiovillianDep5}). To define a continuous coding scheme as in Definition \ref{defn:ContQuantCap} we consider the coding Liouvillian $\Lm_c:\M_{d^m}\ra\M_{d^m}$ as
\begin{align*}
\Lm_c := r\lb\Rm - \i_{d^m}\rb.
\end{align*}
for a rate $r\in\R^+$ controlling the strength of the error correction. As in Definition \ref{defn:ContQuantCap} we consider the Liouvillian $\Lm^{\Ltimes m}$ supplemented by the coding Liouvillian $\Lm_c$. We obtain
\begin{align*}
&\Rm\circ e^{t\lb \Lm^{\Ltimes m} + r\Lm_c\rb}\circ\Vm = e^{-t(r+m)} \Rm\circ\sum^\infty_{k=0} \frac{t^k}{k!}\lb\Tm^{\Ltimes m} + r\Rm\rb^k\circ \mathcal{V}. 
\end{align*}
It is clear that we can write this expression as a convex combination of $\Vm$ and a suitable quantum channel $\Sm$. Let $\alpha(t,r)\in [0,1]$ denote the maximal possible coefficient of the map $\Vm$ in such a convex combination. Then we can write
\begin{align}
&e^{-t(r+m)} \Rm\circ\sum^\infty_{k=0} \frac{t^k}{k!}\lb\Tm^{\Ltimes m} + r\Rm\rb^k\circ \mathcal{V}  =:\alpha\lb t,r\rb\Vm + \lb 1-\alpha\lb t,r\rb\rb\Sm.
\label{equ:alpha}
\end{align} 

We will show that $\alpha(t,r)\ra 1$ for any fixed time $t\in\R^+$ as $r\ra \infty$, thereby ensuring high-fidelity recovery. A lower bound on $\alpha\lb t,r\rb$ is obtained from the sum in (\ref{equ:alpha}) by only considering the contribution of terms in the expansion of each $\lb\Tm^{\Ltimes m} + r\Rm\rb^k$ such that no two $\Tm^{\Ltimes m}$ act right after another with no $\Rm$ in between to correct the error. Note that conditions (\ref{equ:Recovery2}) and (\ref{equ:ChannelErrCorr2}) guarantee that the recovery operation introduces no error when repeatedly applied to some encoded state $\mathcal{V}\lb\rho\rb$. And the quantum channel $\Tm^{\Ltimes m}$ is corrected by $\Rm$ when it acted only once on encoded state $\mathcal{V}\lb\rho\rb$. 

Using a combinatorial argument we can compute this contribution and obtain  
\begin{align*}
\alpha\lb t,r\rb &\geq e^{-t(r+m)}\sum^\infty_{k=0} \frac{t^k}{k!}\sum^{k}_{l=0} m^{k-l}r^l\binom{l+1}{k-l} \\
&=: f\lb mt,\frac{r}{m}\rb
\end{align*}
where we defined the lower bound $f:\R^+\times \R^+\ra\R$, which can be computed explicitly
\begin{align*}
f(t,r) =e^{-t(\frac{r}{2}+1)}\frac{a(r) \cosh\lb\frac{1}{2}a(r)t\rb + (2+r)\sinh\lb\frac{1}{2}a(r)t\rb}{a(r)} 
\end{align*} 
for $a(r)=\sqrt{r\lb 4+r\rb}$. See Fig. \ref{fig:alpha} for a plot of $f(t,r)$. It can be checked that indeed $f\lb mt,\frac{r}{m}\rb\ra 1$ for any fixed $t\in\R^+$ as $r\ra \infty$. This finishes the argument and shows that the rate $\frac{n}{m}$ is achievable, as we encoded $n$ qubits into $m$ qudits, for arbitrary $t\in \R^+$ by using the continuous coding scheme presented here. Note that we set $n_\nu := n$ and $m_\nu := m$ for all $\nu$ in Definition \ref{defn:ContQuantCap}.

\begin{figure}
     \centering
     \includegraphics[width=0.7\textwidth]{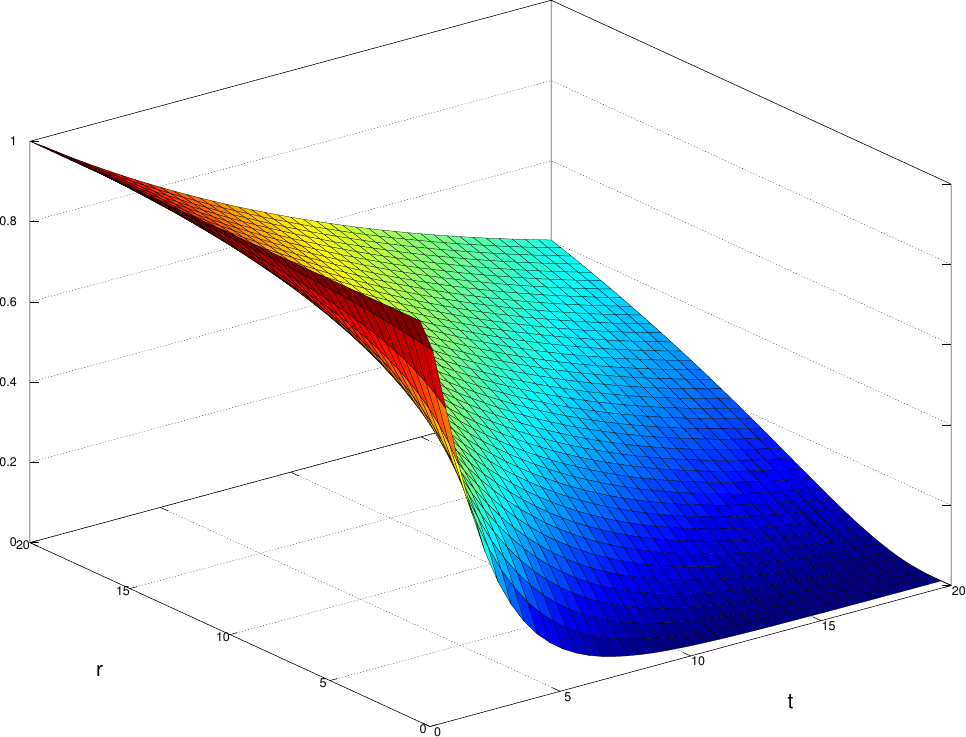}
     \caption{Plot of $f(t,r)$. This is a lower bound on the probability of recovering the information encoded in the continuously implemented quantum error correcting code as explained in the main text. For any time $t$ this probability can be made arbitrarily close to $1$ by choosing the rate $r$ of the coding Liouvillian high enough.}
     \label{fig:alpha}
\end{figure}

\end{proof}
\newpage
We will finish this section with two examples illustrating applications of the continuous coding scheme from the previous proof.

\begin{exmp}[Continuous coding schemes]\hfill
\begin{enumerate}
\item Consider the qubit depolarizing channel $\Tm_\text{dep}$ introduced in (\ref{equ:DepChannel}). We define the noise Liouvillian $\Lm:\M_2\ra \M_2$ as $\Lm = \Tm_{\text{dep}} - \i_2$. It is easy to see that the dynamical semigroup generated by the Liouvillian $\Lm$ has the form 
\begin{align*}
e^{t\Lm}\lb\rho\rb &=(1-e^{-\frac{4t}{3}})\text{tr}\lb\rho\rb \frac{\mathbbm{1}_2}{2} + e^{-\frac{4t}{3}}\i_2 \\
&\ra \text{tr}\lb \rho\rb\frac{\mathbbm{1}_{2}}{2}\text{ as }t\ra\infty.
\end{align*} 
For $t\geq \frac{3}{2}$, the channel $e^{t\Lm}$ remains completely positive after concatenating with the transposition map, which implies $\mathcal{Q}\lb e^{t\Lm}\rb = 0$ for $t\geq \frac{3}{2}$~\cite{Kretschmann2004}. Thus, any information stored even in many copies of the system will be lost after this time. 

For 5 copies of the system the noise is generated by the Liouvillian
\begin{align*}
\Lm^{\Ltimes 5} = \Tm^{\Ltimes 5}_{\text{dep}} - 5 \i_{2^5}
\end{align*}
with  
\begin{align*}
\Tm^{\Ltimes 5}_{\text{dep}}\lb\rho\rb = \frac{1}{3}\sum^5_{i=1} X^{(i)}\rho X^{(i)}+Y^{(i)}\rho Y^{(i)}+Z^{(i)}\rho Z^{(i)} .
\end{align*}
As the errors introduced by $\Tm^{\Ltimes 5}_{\text{dep}}$ are only single qubit errors we can use the 5-qubit stabilizer code~\cite[pp. 468]{Nielsen2007} to correct them. Theorem \ref{thm:ExplicitCodingLiou} then gives $\mathcal{Q}^\text{cont}_\chanSet\lb t\Lm\rb \geq \frac{1}{5}$ for arbitrarily long times, thus improving over $\mathcal{Q}\lb e^{t\Lm}\rb$ for times $t\geq \frac{3}{2}$. 

\item The scheme presented in this section is not restricted to the quantum setup, but can also be used to do continuous error correction of classical channels. To illustrate this consider a system of 3 classical bits and a bit-flip error of one uniform randomly chosen bit. This setting defines a Markov chain with a transition matrix $T^{\Ltimes 3}\in\M_{8}$, where $T\in\M_2$ is the transition matrix
\begin{align*}
T=\begin{pmatrix}
0 & 1 \\ 1 & 0
\end{pmatrix}.
\end{align*} 
The continuous-time Markov chain at a time $t\in\R^+$ corresponding to this error has the transition matrix
\begin{align*}
e^{tL}:= e^{t\lb T^{\Ltimes 3}-3\mathbbm{1}_8\rb}
\end{align*}
where we introduced the intensity matrix $L:= T^{\Ltimes 3}-3\mathbbm{1}_8$, which corresponds to the Liouvillian in the classical setup. 

To define an error correcting code, note that any single bit-flip error on 3 bits all in the same state (this defines the codespace) can be corrected by a majority vote. Therefore we define an encoding of 1 'logical' bit into the three bit system by a repetition code $(0)\mapsto(000)$ and $(1)\mapsto(111)$. This defines an isometric matrix $V$. The recovery matrix implementing the majority vote, i.e., mapping $(000)\mapsto(000)$, $(001)\mapsto(000)$, $(011)\mapsto(111)$, etc., is denoted by $R\in\M_{8}$. 

This error correcting code fulfills the classical analogues of the conditions (\ref{equ:Recovery2}) and (\ref{equ:ChannelErrCorr2}) from above.

In the same way as for the quantum case we can define a coding intensity matrix
\begin{align*}
L_c := r(R-\mathbbm{1}_8)
\end{align*}
for a rate $r\in\R^+$.

The calculation in the proof of Theorem \ref{thm:ExplicitCodingLiou} works now the same way in the classical case and we obtain
\begin{align*}
R\circ e^{t(L+L_c)}\circ V = \alpha(t,r)V + (1-\alpha(t,r))S
\end{align*}  
for an appropriate matrix S. For fixed $t\in\R^+$ we have $\alpha(t,r)\ra 1$ as $r\ra \infty$. 

\end{enumerate}
\label{exmp:ContCod}
\end{exmp}

The above example also shows that the addition of purely dissipative terms to the noisy time-evolution can increase its usual quantum capacity. A time-independent Liouvillian is called purely dissipative~\cite{Wolf2008} if it can be represented in the form 
\begin{align}
\Lm\lb\rho\rb = \sum^N_{k=1}\lb A_k\rho A^\dagger_k - \frac{1}{2}A^\dagger_k A_k\rho - \frac{1}{2}\rho A^\dagger_k A_k\rb
\end{align}
using only trace-less Lindblad operators $A_k$ and no Hamiltonian part $H$, see (\ref{equ:Liouvillian}).

The Lindblad operators $A_k$ of the coding Liouvillian constructed from a stabilizer code as in Example \ref{exmp:ContCod} 1)\ can be chosen as Hermitian projectors onto the syndrome spaces multiplied by error correcting unitaries conditioned on the measured syndrome~\cite[pp. 453 and pp. 468]{Nielsen2007}. Note that these Lindblad operators are all trace-less except the projector $P$ onto the code space itself, where the corresponding unitary is the identity matrix. But as this Lindblad operator is Hermitian, it can be replaced by the trace-less operator $\lb P-\text{tr}\lb P\rb\frac{\mathbbm{1}}{d}\rb$ yielding the same Liouvillian~\cite{Wolf2008}. In this way, Example 6.1\ leads to the following observation:

\begin{cor}[Dissipation can improve the usual quantum capacity]

There exist time-independent Liouvillians $\Lm:\M_d\ra\M_d$ and $\Lm':\M_d\ra\M_d$, where $\Lm'$ is purely dissipative such that 
\begin{align*}
\mathcal{Q}\lb e^\Lm\rb < \mathcal{Q}\lb e^{\Lm + \Lm'}\rb.
\end{align*}

\end{cor}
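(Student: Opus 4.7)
The plan is to instantiate Example 6.1(1) explicitly. Take $d := 2^5 = 32$, let $\Lm^{\mathrm{dep}} = \Tm_{\mathrm{dep}} - \i_2$ denote the single-qubit depolarizing Liouvillian of that example, and let $\Rm : \M_{32} \ra \M_{32}$ be the recovery channel of the 5-qubit stabilizer code. Fix $t_0 := 3/2$ and a rate $r > 0$ to be chosen sufficiently large, and set
\begin{align*}
\Lm := t_0 \lb\Lm^{\mathrm{dep}}\rb^{\Ltimes 5}, \qquad \Lm' := t_0 r \lb\Rm - \i_{32}\rb.
\end{align*}
Then $e^\Lm = \lb e^{t_0 \Lm^{\mathrm{dep}}}\rb^{\otimes 5}$, and by Example 6.1(1) the single-qubit channel $e^{t_0 \Lm^{\mathrm{dep}}}$ composed with the transposition map remains completely positive for $t_0 \geq 3/2$. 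Since this partial-transpose property is preserved under tensor products, $e^\Lm$ also becomes completely positive after composition with the full transpose, hence $\mathcal{Q}\lb e^\Lm\rb = 0$.

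It remains to verify that $\Lm'$ is purely dissipative and that $\mathcal{Q}\lb e^{\Lm + \Lm'}\rb > 0$ for large enough $r$. The first assertion is exactly the observation made in the paragraph preceding the corollary: the Lindblad operators of $\Rm - \i_{32}$ can be chosen as syndrome projectors multiplied by the corresponding correcting unitaries, all of which are traceless except the codespace projector $P$ (paired with the identity correction); being Hermitian, $P$ may be replaced by the traceless $P - \mathrm{tr}(P)\mathbbm{1}_{32}/32$ without changing the generator. No Hamiltonian part is needed.

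For the capacity lower bound, I would re-run the calculation in the proof of Theorem \ref{thm:ExplicitCodingLiou} with $t = t_0$, $m = 5$, $n = 1$. Taking $\Vm$ to be the isometric embedding of the 5-qubit stabilizer code,
\begin{align*}
\Rm \circ e^{\Lm + \Lm'} \circ \Vm = \alpha(t_0, r)\, \Vm + \lb 1 - \alpha(t_0, r)\rb \Sm
\end{align*}
for some quantum channel $\Sm$, with $\alpha(t_0, r) \to 1$ as $r \to \infty$. Post-composing with the decoder $\Dm$ projecting onto the codespace yields $\left\Vert\Dm \circ e^{\Lm + \Lm'} \circ \Vm - \i_2\right\Vert_\diamond \leq 2\lb 1 - \alpha(t_0, r)\rb$, which can be made arbitrarily small. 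Theorem \ref{thm:ContinuityQuantCap} then forces $\mathcal{Q}\lb \Dm \circ e^{\Lm + \Lm'} \circ \Vm\rb$ to be arbitrarily close to $\mathcal{Q}(\i_2) = 1$, and the elementary data-processing bound $\mathcal{Q}(T) \geq \mathcal{Q}\lb\Dm \circ T \circ \Em\rb$ (immediate from Definition \ref{defn:QuantCap} by absorbing $\Vm$ and $\Dm$ into the outer coding) gives $\mathcal{Q}\lb e^{\Lm + \Lm'}\rb > 0$.

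The main potentially delicate point is this last passage from a one-shot high-fidelity recovery to a strictly positive asymptotic capacity, but Theorem \ref{thm:ContinuityQuantCap} combined with data processing dispatches it without invoking any single-letter formula or further regularization; the purely dissipative character of $\Lm'$ is then a direct Lindblad-form bookkeeping exercise, exactly as recorded in the paragraph before the corollary.
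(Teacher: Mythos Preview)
Your proposal is correct and follows essentially the same route as the paper, which derives the corollary directly from Example~6.1(1) together with the pure-dissipativity observation in the paragraph preceding the statement. One minor slip: in your diamond-norm estimate the recovery $\Rm$ has dropped out and the decoder should read $\Dm\circ\Rm$ rather than $\Dm$ alone; with that fix your explicit appeal to Theorem~\ref{thm:ContinuityQuantCap} and data processing cleanly spells out the one-shot-to-asymptotic passage that the paper leaves implicit.
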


We even provided an example with usual quantum capacity $\mathcal{Q}\lb e^\Lm\rb = 0$ and $0<\mathcal{Q}\lb e^{\Lm + \Lm'}\rb$. The Example 6.1(2.)\ shows that a similar statement as in the above corollary also holds for the classical Shannon capacity, because $\mathcal{C}\lb e^{tL^{\Ltimes 3}}\rb\ra 0$ is also true for the intensity matrix $L$ introduced there.

\section{Conclusion and open problems}
\label{sec:7}

We have introduced and investigated quantum subdivision capacities and continuous quantum capacities to quantify the optimal rates of quantum information transmission in coding scenarios where intermediate intervention is possible. These capacities are a natural generalization of the standard quantum capacity and they capture recent results from quantum error correction~\cite{Pastawski2011} showing that controlled dissipation can help for information storage or transmission. 

There are many ways to adjust these two capacities to various applications. Every choice of the sets $\chanSet$, from which the intermediate coding channels or the coding Liouvillians are taken, yields a new capacity, which might have different properties. Many open problems arise in this context. What are good choices for the sets $\chanSet$ in the sense that they lead to mathematically interesting and physically reasonable capacities? More specifically, is there a closed form expression for the subdivision capacity $\mathcal{Q}_\unitary$ where only unitary add-ons are allowed? To what extend can $\mathcal{Q}\lb e^{\Lm+\Lm'}\rb$ differ from $\mathcal{Q}\lb e^\Lm\rb$? Do similar phenomena occur for classical capacities or for zero-error capacities?

\section*{Acknowledgements}
\label{sec:8}

We thank Geza Giedke, Fernando Pastawski, Oleg Szehr and Andreas Winter for fruitful discussions and valuable suggestions. Finally we are grateful to the Isaac Newton Institute for Mathematical Sciences and to the Benasque Center for Science where part of this work has been conducted.

\appendix
\section{The decoupling approach to quantum coding}
\label{sec:Appendix}
In this appendix we will review the proof that for an arbitrary quantum channel $\Tm:\M_{d_A}\ra \M_{d_B}$ any rate $R < I^{\text{coh}}\lb \omega^{A'A} ,\i_{A'}\otimes \Tm^{A\ra B}\rb$ is achievable. This is a special case of the direct part of the coding Theorem \ref{thm:LSD} for the quantum capacity, where any rate less than the regularized coherent information is shown to be achievable. For the proofs in Section 4 we only need this weaker version. This techniques were originally used in ~\cite{Winter2007} and further developed in ~\cite{Dupuis2010}, where the subsequent arguments can be found in a more general version. The basic idea of the decoupling approach can be motivated by stating the following theorem. 

\begin{thm}[Information-disturbance tradeoff, see ~\cite{Kretschmann2006}]
\label{thm:ID-Tradeoff}

For any quantum channel $\Tm:\M_{d_A}\ra\M_{d_B}$ with Stinespring isometry $V:\C^{d_A}\ra\C^{d_E}\otimes \C^{d_B}$ there exists a quantum state $\sigma^E\in\D_{d_E}$ such that
\begin{align}
\frac{1}{4}\inf_\Dm\left\Vert\i_{d_A} - \Dm\circ \Tm\right\Vert^2_\diamond &\leq \left\Vert \Tm^c - \text{tr}\lb\cdot\rb\sigma^E\right\Vert_\diamond \leq 2\inf_\Dm \left\Vert \i_{d_A} - \Dm\circ \Tm\right\Vert^\frac{1}{2}_\diamond
\label{equ:ID-Tradeoff}
\end{align}
where the infima are taken over all quantum channels $\Dm:\M_{d_B}\ra \M_{d_A}$.
\end{thm}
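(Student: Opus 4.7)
The idea is to express both quantities in the inequality through a single purification. Take the Stinespring isometry $V:\C^{d_A}\to\C^{d_B}\otimes\C^{d_E}$ and evaluate it on a maximally entangled state to obtain the pure state $|\psi\rangle^{A'BE}:=(\mathbbm{1}_{A'}\otimes V)|\omega\rangle^{A'A}$. Choose the candidate reference as $\sigma^E:=\psi^E=\Tm^c(\mathbbm{1}_{d_A}/d_A)$. The proof then consists of translating between trace distance and fidelity via the Fuchs--van de Graaf inequalities, and between closeness of reduced states and local isometries via Uhlmann's theorem; the asymmetric exponents $\tfrac{1}{2}$ and $2$ arise precisely because the two directions of Fuchs--van de Graaf are themselves asymmetric (one linear in the trace distance, the other quadratic).

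For the upper bound, pick a near-optimal decoder $\Dm$ with Stinespring $U:\C^{d_B}\to\C^{d_A}\otimes\C^{d_{E'}}$. The composite isometry $UV$ acts as an approximate identity on the $A$-output by hypothesis, so Uhlmann's theorem forces $UV|\phi\rangle$ to be close in fidelity to $|\phi\rangle\otimes|\eta\rangle$ for some fixed auxiliary state $|\eta\rangle\in\C^{d_E}\otimes\C^{d_{E'}}$. Tracing out $E'$ identifies $\Tm^c(|\phi\rangle\langle\phi|)$ as close to $\eta^E$ for every pure input; monotonicity of the trace norm under partial trace then yields $\|\Tm^c-\text{tr}(\bullet)\sigma^E\|_\diamond=O(\sqrt{\epsilon})$ with $\sigma^E$ identified with $\eta^E$ (and with $\psi^E$ up to the same error). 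The square-root loss appears at the step where a trace-distance bound is converted into a fidelity bound on the pure states that get compared by Uhlmann.

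For the lower bound, assume $\|\Tm^c-\text{tr}(\bullet)\sigma^E\|_\diamond\leq\delta$ and apply it on $\omega^{A'A}$ to get $\|\psi^{A'E}-\pi_{A'}\otimes\sigma^E\|_1\leq\delta$. Fix any purification $|\xi\rangle^{B''E}$ of $\sigma^E$; then $|\omega\rangle^{A'\tilde A}\otimes|\xi\rangle^{B''E}$ is a purification of $\pi_{A'}\otimes\sigma^E$ on the systems complementary to $E$, so Uhlmann's theorem yields an isometry $W:\C^{d_B}\to\C^{d_A}\otimes\C^{d_{B''}}$ taking $|\psi\rangle^{A'BE}$ fidelity-close to this product. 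Defining $\Dm(\rho):=\text{tr}_{B''}(W\rho W^\dagger)$ gives a decoder whose entanglement fidelity with $\Tm$ on $\omega^{A'A}$ is near $1$; the quadratic factor on the left-hand side arises when converting this entanglement-fidelity bound back into a diamond-norm distance from the identity via the other direction of Fuchs--van de Graaf.

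The main obstacle is choosing a single $\sigma^E$ compatible with both inequalities (the candidate $\Tm^c(\mathbbm{1}_{d_A}/d_A)$ should work) and carefully tracking how the two Fuchs--van de Graaf conversions produce precisely the asymmetric exponents $\tfrac{1}{2}$ and $2$. A secondary technical step is lifting the estimates, which are naturally formulated on the maximally entangled probe $\omega^{A'A}$, to the full diamond norm; this passage is standard but requires care to avoid spurious dimension factors, and is the reason the final constants come out as $\tfrac{1}{4}$ and $2$ rather than $1$.
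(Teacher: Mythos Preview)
The paper does not prove this theorem; it merely states it with a citation to~\cite{Kretschmann2006} and uses it as a black box in the decoupling appendix. There is therefore no ``paper's own proof'' to compare against. Your sketch follows the standard argument from the original Kretschmann--Schlingemann--Werner paper: Uhlmann's theorem to pass between closeness of reduced states and existence of isometries, combined with the Fuchs--van de Graaf inequalities to convert between fidelity and trace distance, with the asymmetric exponents arising from the asymmetry of those conversions. This is the correct approach and is essentially what the cited reference does.

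One point worth flagging: in your upper-bound sketch you identify $\sigma^E$ first with $\eta^E$ (coming from the approximate product state produced by Uhlmann) and then claim it coincides with $\psi^E=\Tm^c(\mathbbm{1}/d_A)$ ``up to the same error''. The theorem as stated only asserts \emph{existence} of some $\sigma^E$, so you do not actually need to pin it down as the Choi marginal; it suffices to take $\sigma^E=\eta^E$ for the upper bound and then observe that the lower bound holds for \emph{any} $\sigma^E$ once the right-hand side is small (or, more cleanly, prove both inequalities for the optimal $\sigma^E$ minimizing the middle term). Insisting on $\sigma^E=\Tm^c(\mathbbm{1}/d_A)$ throughout would require an extra triangle-inequality step and could cost you a constant.
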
 

The information-disturbance tradeoff states that the disturbance introduced by the quantum channel $\Tm$ can be corrected by a decoding quantum channel $\Dm$ iff the complementary channel, i.e. the channel that describes the information flow to the environment, is completely forgetful, i.e. conserves no information about the input state.

The idea of the decoupling approach is to use random encodings $\Em_\nu$, see Definition \ref{defn:QuantCap}, to ensure that the complementary channel $\lb\Tm^{\otimes m_\nu}\circ \Em_\nu\rb^c$ is completely forgetful in the limit $\nu\ra \infty$. Then Theorem \ref{thm:ID-Tradeoff} ensures the existence of decoding maps $\Dm_\nu$ that complete the coding scheme in Definition \ref{defn:QuantCap}.

Before we state the decoupling theorem we have to define some entropic quantities, which go back to ~\cite{Renner2004,Renner2005} and have been further developed in ~\cite{Tomamichel2012}.

\begin{defn}[Quantum conditional min-entropy, see ~\cite{Dupuis2010}]\hfill
\label{defn:SmoothEntropy}

For a positive matrix $\rho^{AB}\in \M^+\lb\C^{d_A}\otimes\C^{d_B}\rb$ we define the \textbf{conditional min-entropy of A given B} as
\begin{align*}
&H_\text{min}\lb A|B\rb_\rho := -\log\lb \min\lset\text{tr}\lb\sigma\rb : \sigma\in \M_{d_B}, \sigma \geq 0, \rho^{AB}\leq \mathbbm{1}_A\otimes \sigma\rset\rb.
\end{align*}
For $\epsilon > 0$ we define the $\epsilon$-\textbf{smooth conditional min-entropy of A given B} as 
\begin{align*}
H^\epsilon_\text{min}\lb A|B\rb_\rho := \max_{\sigma^{AB}\in\mathfrak{B}\lb\rho,\epsilon\rb} H_\text{min}\lb A|B\rb_\sigma  
\end{align*}
where $\mathfrak{B}\lb\rho,\epsilon\rb := \lset \tilde{\rho}^{AB}\in\M^+\lb\C^{d_A}\otimes \C^{d_B}\rb : \text{tr}\lb \tilde{\rho}^{AB}\rb \leq 1, \sqrt{1-F\lb \rho, \tilde{\rho}\rb}\leq \epsilon\rset$ denotes the $\epsilon-$ball in fidelity distance around $\rho$ within the set of subnormalized positive matrices.
\end{defn}
We will need some properties of the entropies defined above. The first property is a lower bound: For any density matrix $\rho\in \D\lb\C^{d_A}\otimes \C^{d_B}\rb$ we have
\begin{align*}
H^\epsilon_{\text{min}}\lb A|B\rb_\rho \geq -\log\lb d_B\rb.
\end{align*}
This lower bound can be seen easily from Definition \ref{defn:SmoothEntropy} as $\rho^{AB}\leq \mathbbm{1}_A\otimes \mathbbm{1}_B$ holds for every quantum state $\rho^{AB}\in\D\lb\C^{d_A}\otimes\C^{d_B}\rb$.

The second property is also called the \textbf{fully quantum asymptotic equipartition property}~\cite{Tomamichel2009} and is stated in the following lemma: 

\begin{lem}[Fully quantum AEP, see ~\cite{Tomamichel2009}]
\label{lem:AEP}

For a density operator $\rho^{AB}\in \D\lb\C^{d_A}\otimes\C^{d_B}\rb$ and any $\epsilon >0$ there exists a sequence $\Delta\lb n,\rho,\epsilon\rb\ra 0$ as $n\ra \infty$ such that for all $n\geq \frac{8}{5}\log\lb \frac{2}{\epsilon^2}\rb$ we have
\begin{align*}
\frac{1}{n}H_\text{min}^\epsilon\lb A^n| B^n\rb_{\rho^{\otimes n}} \geq S\lb A|B\rb_\rho - \Delta\lb n, \rho, \epsilon\rb.
\end{align*}   
\end{lem}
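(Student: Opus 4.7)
The plan is to sandwich the smooth conditional min-entropy between quantum conditional Rényi entropies $H_\alpha(A|B)_\rho$ of order $\alpha$ slightly larger than $1$, exploit additivity of these Rényi quantities on product states to extract the factor $1/n$, and finally let $\alpha\to 1$ at an $n$-dependent rate so that both the smoothing correction and the Taylor remainder vanish.

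First I would establish a one-shot inequality of the form
\begin{align*}
H^\epsilon_\text{min}(A|B)_\rho \;\geq\; H_\alpha(A|B)_\rho \;-\; \frac{g(\epsilon)}{\alpha-1},
\end{align*}
valid for all $\alpha>1$, with $g(\epsilon)$ of order $\log(2/\epsilon^2)$. The idea is that $H_\text{min}(A|B)_\rho$ is expressible as an operator-norm quantity of the form $-\log\|(\mathbbm{1}_A\otimes\sigma_B^{-1/2})\rho(\mathbbm{1}_A\otimes\sigma_B^{-1/2})\|_\infty$ minimised over $\sigma_B$, while $H_\alpha(A|B)_\rho$ corresponds to the $\alpha$-Schatten norm of the very same operator. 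A Markov-type truncation of the small-weight tail of $\rho$ produces a subnormalised state $\tilde\rho\in\mathfrak{B}(\rho,\epsilon)$ whose operator norm is dominated by the $\alpha$-norm of the original state at the cost of the logarithmic term $g(\epsilon)/(\alpha-1)$; the min-entropy of $\tilde\rho$ then provides a lower bound on the smooth min-entropy of $\rho$.

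Second, I would apply this bound to $\rho^{\otimes n}$ and invoke the additivity $H_\alpha(A^n|B^n)_{\rho^{\otimes n}}=nH_\alpha(A|B)_\rho$ to get
\begin{align*}
\tfrac{1}{n}H^\epsilon_\text{min}(A^n|B^n)_{\rho^{\otimes n}} \;\geq\; H_\alpha(A|B)_\rho \;-\; \frac{g(\epsilon)}{n(\alpha-1)}.
\end{align*}
I would then Taylor-expand at $\alpha=1$, using $H_\alpha(A|B)_\rho = S(A|B)_\rho - (\alpha-1)V(\rho) + O((\alpha-1)^2)$, where $V(\rho)$ is a finite state-dependent variance-like quantity bounded in terms of $\log d_A$ and $\log d_B$. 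Optimising by choosing $\alpha-1$ of order $n^{-1/2}$ balances the two error contributions and yields a correction $\Delta(n,\rho,\epsilon)=O(n^{-1/2})$ with an explicit $\epsilon$-dependent prefactor; the hypothesis $n\geq \tfrac{8}{5}\log(2/\epsilon^2)$ is precisely the regime in which this choice of $\alpha$ keeps the smoothing penalty from overwhelming the leading term.

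The main obstacle will be the one-shot inequality in the first step. Constructing a smoothed $\tilde\rho$ that simultaneously lies in the fidelity ball around $\rho$, has a spectrum truncated enough for the operator-norm estimate to match the $\alpha$-Schatten norm, and does so with an explicit and clean constant $g(\epsilon)$, requires a carefully chosen cutoff together with a duality-type argument relating the infimum over $\sigma_B$ in the definition of $H_\text{min}$ to the canonical optimiser appearing in $H_\alpha$. It is exactly this step that fixes the threshold $n\geq\tfrac{8}{5}\log(2/\epsilon^2)$ and the functional form of $\Delta(n,\rho,\epsilon)$ as claimed in the lemma.
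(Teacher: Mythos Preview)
The paper does not give its own proof of this lemma; it simply cites \cite{Tomamichel2009} and uses the result as a black box. Your outline is essentially the argument of that reference: bound $H^\epsilon_{\min}$ below by an $\alpha$-R\'enyi conditional entropy at the cost of a penalty $g(\epsilon)/(\alpha-1)$, use additivity on tensor powers, Taylor-expand around $\alpha=1$, and optimise $\alpha-1\sim n^{-1/2}$. So there is nothing to compare against in the present paper, and your proposed route is the standard one from the cited source.
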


With the above definitions we can state the following decoupling theorem:

\begin{thm}[Decoupling theorem, see ~\cite{Dupuis20102}]
\label{thm:Decoupling}

For a density operator $\rho^{RA}\in \D\lb\C^R\otimes \C^A\rb$ and a quantum channel $\Tm^{A\ra E} : \M_{d_A}\ra \M_{d_E}$ with Choi-matrix $\sigma^{A'E} = \lb\i_{A'}\otimes \Tm\rb\lb\omega^{A'A}\rb$ and an arbitrary $\epsilon>0$ we have
\begin{align*}
&\int_{\unitary\lb d_A\rb} \left\Vert\lb\i_R\otimes \lbr\Tm^{A\ra E}\circ \Um\rbr\rb\lb \rho^{RA}\rb - \rho^R\otimes \sigma^E\right\Vert_1 \hspace*{0.1cm} d\Um \leq 2^{-\frac{1}{2}H^\epsilon_\text{min}\lb A'|E\rb_{\sigma} -\frac{1}{2}H^\epsilon_\text{min}\lb A|R\rb_{\rho}} + 12\epsilon
\end{align*}
where the integration is done wrt the Haar measure on the group of unitary maps $U:\C^{d_A}\ra\C^{d_A}$, which define unitary quantum channels $\Um:\M_{d_A}\ra \M_{d_A}$ via $\Um\lb\rho\rb = U\rho U^\dagger$, 
\end{thm}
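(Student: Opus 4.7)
The plan is to reduce the trace-norm decoupling bound to a Hilbert--Schmidt estimate via a weighted Cauchy--Schwarz inequality, compute the resulting Haar average explicitly using standard twirling identities, and finally translate the resulting collision-entropy expression into smooth conditional min-entropies.

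First, I would use the smoothing built into Definition \ref{defn:SmoothEntropy}: pick subnormalised operators $\bar\rho^{RA}\in\mathfrak{B}(\rho,\epsilon)$ and $\bar\sigma^{A'E}\in\mathfrak{B}(\sigma,\epsilon)$ that attain $H_{\min}^\epsilon(A|R)_\rho$ and $H_{\min}^\epsilon(A'|E)_\sigma$ respectively. Using Fuchs--van de Graaf to pass from fidelity distance to trace distance, monotonicity of the trace norm under the CP map $\i_R\otimes(\Tm\circ\Um)$, and the triangle inequality, I would replace $\rho^{RA}$ and $\sigma^{E}$ inside the integrand by their smoothed counterparts $\bar\rho^{RA}$ and $\bar\sigma^{E}$; careful tracking of the constants in these inequalities contributes the additive $12\epsilon$ term appearing in the statement.

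Second, I would introduce a weighted $2$-norm. For positive operators $\alpha^R$ and $\beta^E$, the identity
\begin{equation*}
\|W\|_1 = \bigl\|\bigl((\alpha^R)^{-1/4}\!\otimes\!(\beta^E)^{-1/4}\bigr)\,\bigl((\alpha^R)^{1/4}\!\otimes\!(\beta^E)^{1/4}\bigr)W\bigl((\alpha^R)^{1/4}\!\otimes\!(\beta^E)^{1/4}\bigr)\,\bigl((\alpha^R)^{-1/4}\!\otimes\!(\beta^E)^{-1/4}\bigr)\bigr\|_1
\end{equation*}
together with Cauchy--Schwarz yields
\begin{equation*}
\|W\|_1^2 \leq \mathrm{tr}\bigl((\alpha^R)^{-1}\bigr)\,\mathrm{tr}\bigl((\beta^E)^{-1}\bigr)\,\mathrm{tr}\!\bigl(W^\dagger(\alpha^R\!\otimes\!\beta^E)W\bigr).
\end{equation*}
Pulling Jensen through the Haar integral reduces the problem to a second-moment computation. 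Choosing $\alpha^R$ and $\beta^E$ aligned with $\bar\rho^R$ and $\bar\sigma^E$ respectively converts the two prefactors $\mathrm{tr}((\alpha^R)^{-1})\,\mathrm{tr}((\beta^E)^{-1})$ into the exponentials of the conditional collision entropies $H_2(A|R)_{\bar\rho}$ and $H_2(A'|E)_{\bar\sigma}$.

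Third -- and this is the main technical step -- I would evaluate the Haar average. Writing $W_U=(\i_R\otimes\Tm\circ\Um)(\bar\rho^{RA})-\bar\rho^R\otimes\bar\sigma^E$ in terms of the Choi matrix of $\Tm$ and using $\int U^{\otimes 2}\,X\,(U^\dagger)^{\otimes 2}\,dU$ from Schur--Weyl duality, the integrand collapses to a linear combination of the identity and the swap on $A\otimes A$, with coefficients $(d_A^2-1)^{-1}\mathrm{tr}(X)$ and $(d_A^2-1)^{-1}\mathrm{tr}(\mathbb{F}X)$. The subtraction of $\bar\rho^R\otimes\bar\sigma^E$ is precisely what kills the identity-component, leaving only the swap-component; after simplification this reads, schematically, $\mathrm{tr}((\tilde\rho^A)^2)\,\mathrm{tr}((\tilde\sigma^E)^2)$ for the appropriately tilted marginals, i.e.\ $2^{-H_2(A|R)_{\bar\rho}-H_2(A'|E)_{\bar\sigma}}$. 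The principal obstacle is tracking the tilts and dimensional factors carefully enough that no extraneous $d_A$'s survive -- the cancellation between the numerator $d_A^2-1$ from Schur--Weyl and the normalisation of the weighted norm is delicate and is the crux of the argument.

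Finally, using the elementary bound $H_2(A|B)_{\bar\rho}\geq H_{\min}(A|B)_{\bar\rho}$, which follows directly from Definition \ref{defn:SmoothEntropy} by testing the min-entropy-defining inequality with the weight chosen in the previous step, the collision entropies can be replaced by min-entropies of the smoothed states, which in turn equal $H_{\min}^\epsilon$ of the original states by construction. Combining with the $12\epsilon$ accumulated in the smoothing reduction yields the stated bound.
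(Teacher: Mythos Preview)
The paper does not prove Theorem~\ref{thm:Decoupling}; it is quoted verbatim from \cite{Dupuis20102} and used as a black box in the subsequent Lemma~\ref{lem:RateAchievDecoup}. There is therefore no ``paper's own proof'' to compare against.

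That said, your outline is precisely the strategy of the cited reference: smooth, pass to a weighted Hilbert--Schmidt norm, compute the Haar second moment via Schur--Weyl, and bound $H_2$ by $H_{\min}$. One technical slip worth flagging: your Cauchy--Schwarz step is misstated. The inequality one actually uses is
\[
\|W\|_1 \;\le\; \sqrt{\mathrm{tr}(\xi)}\,\bigl\|\xi^{-1/4}\,W\,\xi^{-1/4}\bigr\|_2,
\]
with $\xi$ a positive operator on $R\otimes E$; taking $\xi=\zeta^R\otimes\eta^E$ for suitable $\zeta^R,\eta^E$ produces $\mathrm{tr}\bigl(\xi^{-1/2}W\,\xi^{-1/2}W\bigr)$ on the right, not $\mathrm{tr}\bigl(W^\dagger(\alpha\!\otimes\!\beta)W\bigr)$ as you wrote. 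The sandwiching by $\xi^{-1/4}$ on \emph{both} sides is essential for the swap--trick computation in the twirl step to output the conditional collision entropies cleanly. Similarly, it is not the prefactor alone that becomes $2^{-H_2}$; rather $H_2(A|B)$ emerges from the \emph{product} of the prefactor $\mathrm{tr}(\xi)$ and the tilted $2$-norm after the Haar average has been performed and the swap contribution isolated. These are bookkeeping issues rather than conceptual gaps, and once corrected your sketch matches the argument in \cite{Dupuis20102}.
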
 
 
For further decoupling theorems in different scenarios and stronger versions of the above theorem see ~\cite{Dupuis2010,Szehr2011,Szehr2013}. 

Note that Theorem \ref{thm:Decoupling} implies the existence of a unitary channel $\Um:\M_{d_A}\ra\M_{d_A}$, which achieves the bound stated in the theorem. Applying this result to the complementary channel $\Tm^c:\M_{d_A}\ra \M_{d_E}$ of a given quantum channel $\Tm:\M_{d_A}\ra \M_{d_B}$ gives a bound on the second norm in equation (\ref{equ:ID-Tradeoff}) in Theorem \ref{thm:ID-Tradeoff} for the encoded complementary channel $\Tm^c\circ \Um$. Here we used that the diamond norm of a linear map $\Lm:\M_{d_A}\ra\M_{d_B}$ fulfills $\left\Vert\Lm\right\Vert_\diamond = \left\Vert\i_{R}\otimes \Lm\right\Vert_{1\ra 1}$ for a reference system $\M_{d_R}$ with $d_R = d_A$. 
 
In order to prove the achievability of a rate $R < I^{\text{coh}}\lb \omega^{A'A} ,\i_{A'}\otimes \Tm^{A\ra B}\rb$, we apply the decoupling theorem to many copies of the complementary channel. 
We have to show that the bound on the second norm in equation (\ref{equ:ID-Tradeoff}) vanishes in the limit of arbitrarily many copies. To obtain a vanishing bound using Theorem \ref{thm:Decoupling} one has to introduce a further isometry embedding the smaller system which is to be transmitted through the quantum channel into the larger system on which many copies of the channel are applied. By doing so, one can prove the following lemma:

\begin{lem}[Encoding operations, see ~\cite{Dupuis2010}]
\label{lem:RateAchievDecoup}

For a quantum channel $\Tm:\M_{d_A}\ra \M_{d_B}$ with corresponding complementary channel $\Tm^{c}:\M_{d_A}\ra \M_{d_E}$ and any rate $R < I^{\text{coh}}\lb \omega^{A'A} ,\i_{A'}\otimes \Tm^{A\ra B}\rb$ there exists a sequence $\epsilon_\nu$ which fulfills $\epsilon_\nu\ra 0$ as $\nu\ra \infty$ such that for any quantum state $\rho^{R'R}\in \D\lb\C^{d_{R'}}\otimes \C^{d_R}\rb$ with $d_{R'} = d_{R} = 2^{\nu R}$ 
\begin{align*}
\int_{\unitary\lb d^\nu_A\rb} \bigg{\Vert}& \lb\i_{R'}\otimes \lbr\lb\Tm^{c}\rb^{\otimes\nu}\circ \Um\circ \mathcal{V}^{R\ra A}_\nu\rbr\rb\lb \rho^{R'R}\rb - \rho^{R'}\otimes \lb\sigma^E\rb^{\otimes \nu}\bigg{\Vert} _1 \hspace*{0.1cm} d\Um \leq \epsilon_\nu
\end{align*}
holds. Here $\sigma^{A'E} := \lb\i_{A'}\otimes {\Tm^{c}}^{\otimes\nu}\rb\lb\omega^{A'A}\rb$ denotes the Choi matrix of the complementary quantum channel and for each $\nu\in\N$ we denote by $\mathcal{V}^{R\ra A}_\nu :\M_{d_{R}}\ra\M_{d^\nu_A}$ an arbitrary isometry.
\end{lem}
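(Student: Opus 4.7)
The plan is to invoke the decoupling theorem (Theorem \ref{thm:Decoupling}) directly on the $\nu$-fold complementary channel $(\Tm^c)^{\otimes\nu}$ with the embedded input, then estimate the two smooth min-entropies appearing in the bound via the fully quantum AEP (Lemma \ref{lem:AEP}) for one and the generic dimension bound noted after Definition \ref{defn:SmoothEntropy} for the other.

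First, in the language of Theorem \ref{thm:Decoupling}, I identify the ``$A$'' system with $A^\nu$ (on which $\Um$ and $(\Tm^c)^{\otimes\nu}$ act) and the ``$R$'' reference with the user's $R'$. The input state becomes $\tilde\rho^{R'A^\nu} := \lb\i_{R'}\otimes \mathcal{V}^{R\ra A}_\nu\rb\lb\rho^{R'R}\rb\lb\i_{R'}\otimes \mathcal{V}^{R\ra A}_\nu\rb^\dagger$, i.e.\ the user's state pushed through the isometric embedding, and the relevant Choi matrix is the $\nu$-fold tensor $(\sigma^{A'E})^{\otimes\nu}$ of the single-copy complementary Choi matrix. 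Theorem \ref{thm:Decoupling} thus yields
\begin{align*}
\int_{\unitary(d_A^\nu)} \left\Vert \cdots \right\Vert_1\, d\Um \leq 2^{-\frac{1}{2}H^\epsilon_{\text{min}}(A'|E)_{\sigma^{\otimes\nu}} -\frac{1}{2}H^\epsilon_{\text{min}}(A^\nu|R')_{\tilde\rho}} + 12\epsilon .
\end{align*}

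Second, I estimate the two min-entropies. The i.i.d.\ structure of $\sigma^{\otimes\nu}$ lets Lemma \ref{lem:AEP} give $H^\epsilon_{\text{min}}(A'|E)_{\sigma^{\otimes\nu}} \geq \nu\lbr S(A'|E)_\sigma - \Delta(\nu,\sigma,\epsilon)\rbr$; a short calculation on the pure purification $\l|\sigma^{A'BE}\rk$ (using $S(\sigma^{A'E})=S(\sigma^B)$ and $S(\sigma^E)=S(\sigma^{A'B})$) identifies $S(A'|E)_\sigma = I^{\text{coh}}(\omega^{A'A},\i_{A'}\otimes\Tm)$, precisely the quantity appearing in the hypothesis. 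The second min-entropy is bounded below, uniformly in $\rho^{R'R}$, by the generic bound $H^\epsilon_{\text{min}}(A^\nu|R')_{\tilde\rho} \geq -\log d_{R'} = -\nu R$. Inserting both estimates, the $2^{(\cdot)}$ term becomes $2^{-\frac{\nu}{2}[ I^{\text{coh}}(\omega^{A'A},\i_{A'}\otimes\Tm) - R - \Delta(\nu,\sigma,\epsilon)]}$.

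Third, I choose $\epsilon = \epsilon_\nu\ra 0$ slowly enough (e.g.\ polynomially in $\nu^{-1}$) that $\Delta(\nu,\sigma,\epsilon_\nu)\ra 0$ while $12\epsilon_\nu\ra 0$. Since $R < I^{\text{coh}}(\omega^{A'A},\i_{A'}\otimes\Tm)$ by hypothesis, the bracket stays bounded below by half the positive gap for all large $\nu$, so the exponential term decays and the overall right-hand side defines the required sequence tending to zero. The main subtlety is precisely this joint tuning of $\epsilon$: $\Delta(\nu,\sigma,\epsilon)$ worsens as $\epsilon$ shrinks while the additive $12\epsilon$ demands $\epsilon\ra 0$, and both must be balanced against the fixed gap $I^{\text{coh}} - R$; this bookkeeping is the standard (and only nontrivial) analytic step in the decoupling approach.
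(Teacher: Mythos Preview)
Your proposal is correct and follows essentially the same route as the paper: apply the decoupling theorem to the isometrically embedded state, bound $H^\epsilon_{\text{min}}(A'|E)_{\sigma^{\otimes\nu}}$ via the fully quantum AEP (identifying $S(A'|E)_\sigma$ with the coherent information through the purification $\l|\sigma^{A'BE}\rk$), bound $H^\epsilon_{\text{min}}(A^\nu|R')$ by $-\nu R$ via the dimension bound, and then let $\epsilon\to 0$ with $\nu$. The paper simply takes $\epsilon=1/\nu$ for the final step, which is a concrete instance of the ``slow enough'' choice you describe.
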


\begin{proof}
Consider the state $\psi^{R'A} = \lb\i_{R'}\otimes \mathcal{V}^{R\ra A}_\nu\rb\lb\rho^{R'R}\rb$. By applying Lemma \ref{thm:Decoupling} to $\psi^{R'A}$ we obtain for any $\epsilon>0$ 
\begin{align*}
\int_{\unitary\lb d^\nu_A\rb} \bigg{\Vert}& \lb\i_{R'}\otimes \lbr\lb\Tm^{c}\rb^{\otimes\nu}\circ \Um\circ \mathcal{V}^{R\ra A}_\nu\rbr\rb\lb \rho^{R'R}\rb - \rho^{R'}\otimes \lb\sigma^E\rb^{\otimes \nu}\bigg{\Vert}_1 \hspace*{0.1cm} d\Um \\
&\leq 2^{-\frac{1}{2}H^\epsilon_\text{min}\lb {A'}^\nu|E^\nu\rb_{\sigma^{\otimes\nu}} -\frac{1}{2}H^\epsilon_\text{min}\lb A|R'\rb_{\psi}} + 12\epsilon.
\end{align*}
By Lemma \ref{lem:AEP} we can estimate for sufficiently large $\nu$
\begin{align*}
&H^\epsilon_\text{min}\lb {A'}^\nu|E^\nu\rb_{\sigma^{\otimes\nu}} \\
&\geq \nu \lbr S\lb A'|E\rb_{\l|\sigma^{A'BE}\rk\lk\sigma^{A'BE}\r|} - \Delta\lb\nu,\sigma,\epsilon\rb\rbr \\
&= \nu\lbr I^{\text{coh}}\lb \omega^{A'A} ,\i_{A'}\otimes \Tm^{A\ra B}\rb - \Delta\lb \nu,\sigma,\epsilon\rb\rbr
\end{align*} 
where we introduced a purification $\l|\sigma^{A'BE}\rk\lk\sigma^{A'BE}\r|\in\M_{d_{A'}}\otimes\M_{d_{B}}\otimes\M_{d_E}$ of the Choi matrix $\sigma^{A'E}$ of the complementary channel and used that it is also a purification for the Choi matrix $\lb\i_{A'}\otimes \Tm^{A\ra B}\rb(\omega^{A'A})$ of the channel itself up to a local isometry. By the lower bound stated before, we have 
\begin{align*}
H^\epsilon_\text{min}\lb A|R'\rb_{\psi} \geq -\log\lb d_{R'}\rb = -\nu R.
\end{align*}
Combining the two bounds leads to 
\begin{align*}
\int_{\unitary\lb d^\nu_A\rb} \bigg{\Vert}& \lb\i_{R'}\otimes \lbr\lb\Tm^{c}\rb^{\otimes\nu}\circ \Um\circ \mathcal{V}^{R\ra A}_\nu\rbr\rb\lb \rho^{R'R}\rb - \rho^{R'}\otimes \lb\sigma^E\rb^{\otimes \nu}\bigg{\Vert}_1 \hspace*{0.1cm} d\Um \\
&\hspace*{-0.5cm}\leq 2^{\frac{1}{2}\nu\lbr R - I^{\text{coh}}\lb \omega^{A'A} ,\i_{A'}\otimes \Tm^{A\ra B}\rb + \Delta\lb\nu,\sigma,\epsilon\rb \rbr} + 12\epsilon.
\end{align*}
This proves the theorem for 
\begin{align*}
\epsilon_\nu = 2^{\frac{1}{2}\nu\lbr R - I^{\text{coh}}\lb \omega^{A'A} ,\i_{A'}\otimes \Tm^{A\ra B}\rb + \Delta\lb\nu,\sigma,\epsilon\rb \rbr} + \frac{12}{\nu}
\end{align*}
for the choice $\epsilon = \frac{1}{\nu}$.

\end{proof}

With proving the above theorem we are finished, because Theorem \ref{thm:ID-Tradeoff} guarantees the existence of decoding quantum channels such that the limit in equation (\ref{equ:limitQuantCap}) holds. By the above argument we proved that any rate $R < I^{\text{coh}}\lb \omega^{A'A} ,\i_{A'}\otimes \Tm^{A\ra B}\rb$ is achievable. 

The general form of the decoding operation can also be obtained more directly ~\cite{Dupuis2010}, by using Uhlmann's theorem~\cite{Uhlmann1976}. For later convenience we state the following

\begin{lem}[see ~\cite{Dupuis2010}]
\label{lem:GeneralDecoding}

Let $\Tm:\M_{d_A}\ra \M_{d_B}$ denote a quantum channel, with complementary channel $\Tm^{c}:\M_{d_A}\ra \M_{d_E}$ and $\epsilon >0$. If there exists a quantum state $\sigma^E\in\D_{d_E}$ such that for any quantum state $\rho^{RA}\in\D\lb\C^{d_R}\otimes \C^{d_A}\rb$ 
\begin{align*}
\left\Vert \lb\i_{R}\otimes \Tm^{c}\rb\lb \rho^{RA}\rb - \rho^{R}\otimes \lb\sigma^E\rb^{\otimes \nu}\right\Vert_1 \hspace*{0.1cm} \leq \epsilon
\end{align*}
is fulfilled, we can find a decoding quantum channel $\Dm:\M_{d_B}\ra \M_{d_A}$ such that for any quantum state $\rho^{RA}\in\D\lb\C^{d_R}\otimes \C^{d_A}\rb$
\begin{align*}
\left\Vert \lb \i_{d_R}\otimes \lbr\Dm\circ \Tm\rbr\rb\lb \rho^{RA} \rb - \rho^{RA}\right\Vert_1\leq 2\sqrt{\epsilon\lb 1-\frac{\epsilon}{4}\rb}.
\end{align*}
This decoding operation can be chosen to be of the form $D\lb\rho\rb = \text{tr}_{d_E}W\rho W^\dagger$, with an isometry $W:\C^{d_B}\ra\C^{d_A d_E}$.
\end{lem}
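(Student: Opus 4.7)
The plan is to invoke Uhlmann's theorem to convert the approximate decoupling of $\Tm^c$ into an approximate reversal of $\Tm$. I would begin by fixing a Stinespring dilation $V:\C^{d_A}\ra\C^{d_B}\otimes\C^{d_E}$ of $\Tm$ (padding $E$ if necessary so that $d_B\leq d_A d_E$, which is harmless) and a purification $\l|\sigma\rk^{E\tilde E}\in\C^{d_E}\otimes\C^{d_E}$ of the reference state $\sigma^E$ appearing in the hypothesis.

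For an arbitrary input $\rho^{RA}\in\D\lb\C^{d_R}\otimes\C^{d_A}\rb$, pick a purification $\l|\psi\rk^{R'RA}$ and form the pure state $\l|\phi\rk^{R'RBE}:=\lb\i_{R'R}\otimes V\rb\l|\psi\rk^{R'RA}$. Its reduction on $R'RE$ coincides with $\lb\i_{R'R}\otimes \Tm^c\rb\lb\l|\psi\rk\lk\psi\r|^{R'RA}\rb$, so the hypothesis applied with the reference system $R'R$ yields $\left\Vert\phi^{R'RE}-\psi^{R'R}\otimes\sigma^E\right\Vert_1\leq\epsilon$. By the Fuchs--van de Graaf inequality this gives the fidelity lower bound $F\lb\phi^{R'RE},\psi^{R'R}\otimes\sigma^E\rb\geq 1-\epsilon/2$.

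Since $\l|\phi\rk^{R'RBE}$ (with purifying system $B$) and $\l|\psi\rk^{R'RA}\otimes\l|\sigma\rk^{E\tilde E}$ (with purifying system $A\tilde E$) are two purifications of those states, Uhlmann's theorem supplies an isometry $W:\C^{d_B}\ra\C^{d_A}\otimes\C^{d_E}$ such that $|\lk\psi\r|^{R'RA}\lk\sigma\r|^{E\tilde E}\lb\i_{R'RE}\otimes W\rb\l|\phi\rk^{R'RBE}|\geq 1-\epsilon/2$. Via the pure-state identity $\left\Vert\l|\alpha\rk\lk\alpha\r|-\l|\beta\rk\lk\beta\r|\right\Vert_1=2\sqrt{1-|\lk\alpha|\beta\rk|^2}$, this overlap bound turns into a trace-norm distance of at most $2\sqrt{\epsilon\lb 1-\epsilon/4\rb}$ between the two pure density matrices. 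Setting $\Dm\lb\rho\rb:=\text{tr}_{\tilde E}\lb W\rho W^\dagger\rb$ and tracing out $E$, $\tilde E$ and $R'$ then yields the desired bound, using that $\text{tr}_{E\tilde E}\lbr\lb\i_{R'RE}\otimes W\rb\l|\phi\rk\lk\phi\r|\lb\i_{R'RE}\otimes W^\dagger\rb\rbr$ equals $\lb\i_{R'R}\otimes\Dm\circ\Tm\rb\lb\l|\psi\rk\lk\psi\r|^{R'RA}\rb$ together with contractivity of trace norm under partial trace.

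The one subtle point is that the Uhlmann isometry $W$ produced above depends a priori on the chosen purification $\l|\psi\rk$, whereas the lemma asks for a \emph{single} decoder $\Dm$ valid for every input. This is the standard decoupling-to-decoder step: one fixes $W$ once and for all by applying Uhlmann to a canonical input (for instance the maximally entangled state) and then argues that, because the decoupling hypothesis holds uniformly in $\rho^{RA}$, the corresponding $\Dm$ still attains the trace-distance bound for every input. Making this universality step precise (as in~\cite{Dupuis2010}) is the main point requiring attention; the rest is the routine Uhlmann--Fuchs--van de Graaf chain sketched above.
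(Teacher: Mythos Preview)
Your proposal is correct and follows the same route indicated by the paper: the paper does not prove this lemma in full but cites~\cite{Dupuis2010} and, in the remark immediately after the statement, points out that the isometry $W$ is obtained by Uhlmann's theorem relating purifications of $(\i_R\otimes\Tm^c)(|\psi\rangle\langle\psi|)$ and $\psi\otimes\sigma$. Your Fuchs--van de Graaf / Uhlmann chain reproduces exactly the constant $2\sqrt{\epsilon(1-\epsilon/4)}$, and your identification of the universality of $W$ as the one non-routine step is precisely the point the paper leaves to the cited reference.
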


Note that the dimension $d_E$ of the environment is not the minimal dimension, which one might use for the representation of the decoding quantum channel. As the isometry $W:\C^{d_B}\ra\C^{d_A d_E}$ is obtained via relating two purifications of the involved states $\lb\i_{d_R}\otimes \Tm^{c}\rb\lb \l|\psi\rk\lk\psi\r|\rb$ and $\psi\otimes \sigma$ using Uhlmann's theorem, the minimal dimension $\tilde{d}_E$ for which there exists an isometry $\tilde{W}:\C^{d_B}\ra\C^{d_A \tilde{d}_E}$ such that an decoding operation can be guaranteed by Uhlmann's theorem is $\tilde{d}_E = \text{rank}\lb\sigma\rb$.

\bibliographystyle{IEEEtran}
\bibliography{mybibliography}

\end{document}